\date{}
\DeclareMathOperator{\Tr}{Tr}
\DeclareMathOperator{\Stab}{Stab}
\DeclareMathOperator{\Ind}{Ind}
\def\id{{\rm id}}
\def\eps{\varepsilon}
\def\wh{\widehat}
\def\sig{\sigma}
\def \norm#1{{\vert\!\vert #1\vert\!\vert}}
\def\Tr{\mathrm{Tr}\;}
\def \RR {{\mathbb R}}
\def \CC {{\mathbb C}}
\def \ZZ {{\mathbb Z}}
\def \NN {{\mathbb N}}
\def \eins {{\mathbf 1}}
\def\setminus{\smallsetminus}
\def\A{{\cal A}}
\def\B{{\cal B}}
\def\C{{\cal C}}
\def\D{{\cal D}}
\def\M{{\cal M}}
\def\N{{\cal N}}
\def\I{{\cal I}}
\def\H{{\cal H}}
\def\K{{\cal K}}
\def\W{{\mathcal W}}
\def\S{{\cal S}}
\def\cP{{\cal P}}
\def\ad{{\operatorname{ad}}}
\def\emptyset{\varnothing}
\def\SL2{{{\rm SL}(2,\RR)}}
\def\eins{{\bf 1}}
\def\Mob{{\rm\textsf{M\"ob}}}
\def\S2{S^{1(2)}}
\def\Poi{{\cal P_+^\uparrow}}
\def\tPoi{{\widetilde{\cal P}_+^\uparrow}}
\def\cL{{\cal L}}
\def\wt{\widetilde}
\def\ol{\overline}
\newtheorem{theorem}{Theorem}[section]
\newtheorem{lemma}[theorem]{Lemma}
\newtheorem{corollary}[theorem]{Corollary}
\newtheorem{proposition}[theorem]{Proposition}
\theoremstyle{definition} \newtheorem{definition}[theorem]{Definition}
\theoremstyle{remark}
\def\emptyset{\varnothing}
\def\setminus{\smallsetminus}
\def\Z{{\mathbb Z}}
\def\RR{{\mathbb R}}
\def\CC{{\mathbb C}}
\def\NN{{\mathbb N}}
\def\ZZ{{\mathbb Z}}
\def\sl2{{{\rm SL}(2,\RR)}}
\def\psl2{{{\rm PSL}(2,\RR)}}
\def\u1{{{\rm V}(1)}}
\def\su2{{{\rm SV}(2)}}
\def\so3{{{\rm SO}(3)}}
\def\SO{{\mathrm{SO}}}
\def\ind{{\mathrm{Ind}}}
\begin{document}
\title{\vskip-5mm\bf Split property for free \\massless finite helicity fields}
\author{{\sc Roberto Longo\footnote{Supported in part by the ERC Advanced Grant 669240 QUEST “Quantum Algebraic Structures and
Models”, MIUR FARE R16X5RB55W QUEST-NET, GNAMPA-INdAM .}
\sc,\quad Vincenzo Morinelli$^*$}
\\
Dipartimento di Matematica,
Universit\`a di Roma Tor Vergata,\\
Via della Ricerca Scientifica, 1, I-00133 Roma, Italy\\
E-mail: {\tt longo@mat.uniroma2.it,\quad morinell@mat.uniroma2.it}
\\
\phantom{X}\\
{\sc Francesco Preta}
\\
Courant Institute of Mathematical Sciences,\\ New York University, \\251 Mercer Street, New York, NY 10012, USA\\
E-mail: {\tt preta@cims.nyu.edu}
\\
\phantom{X}\\
{\sc Karl-Henning Rehren$^*$}
\\
Institut f\"ur Theoretische Physik, Universit\"at G\"ottingen,\\ 
37077 G\"ottingen, Germany\\
E-mail: {\tt rehren@theorie.physik.uni-goettingen.de}
}

\maketitle
\begin{abstract}
We prove the split property for any finite helicity free quantum
fields. Finite helicity Poincar\'e representations extend to the
conformal group $\C$ (cf.\ \cite{Mack}) and the conformal covariance
plays an essential role in the argument: the split property is ensured
by the trace class condition $\Tr (e^{-\beta L_0})<+\infty$ for the
conformal Hamiltonian $L_0$  of the M\"obius covariant restriction of
the net on the time axis. We extend the argument for the scalar case presented in \cite{BDL}. We provide the direct sum decomposition into irreducible representations of the conformal extension of any helicity-$h$ representation to the subgroup of transformations fixing the time axis. Our analysis provides new relations among finite helicity representations and suggests a new construction for representations and free quantum fields with non-zero helicity. 
\end{abstract}

\section{Introduction}
\setcounter{equation}{0}
The split property in quantum field theory can be viewed as a strong version of
locality. Locality (= Einstein causality) requires the bounded observables  
localized in two spacelike separated regions $O_1$ and $O_2$ to
generate two commuting von Neumann algebras $\A(O_1)$ and $\A(O_2)$. The split 
property demands that the algebra generated by $\A(O_1)$ and
$\A(O_2)$ is naturally isomorphic to the tensor product $\A(O_1)\otimes
\A(O_2)$ and this can hold only if there is some finite positive distance
between the regions $O_1$ and $O_2$, due to UV
singularities that arise when the regions touch. 

Physically, the split property is motivated as a ``statistical
indepence'' in the sense that states can be independently prepared in
$O_1$ and $O_2$: for every pair of normal states on $\A(O_i)$, there
is a normal state of the full QFT that on $\A(O_i)$ coincides with the given
states \cite{B74}. The relative tensor product position is also an indispensible
prerequisite without which a notion of entanglement of states between
the two subsystems cannot be defined \cite{HS}. 

The terminology ``split'' for a pair of commuting algebras actually
refers to the inclusion of one algebra in the commutant of the other,
asserting that there exists a type $I$ factor\footnote{i.e.,
isomorphic to $\B(\H)$ of some Hilbert space $\H$.} $\B$ such that 
\begin{equation*}
\A_1\subset \B\subset \A_2',
\end{equation*}
cf.\ Definition \ref{def:split}. Because local algebras in QFT are in general
type III (a characteristic feature of QFT as compared to quantum
mechanical systems), the split property is not ensured by basic
assumptions.  

Whether the split property holds for two local algebras at a finite
distance, is a feature of the QFT model under consideration. It has
been verified in various models in quantum field theory, see, e.g.,
\cite{BW86,MTW,BDL}. The split property may fail for topologically non-trivial spacetimes \cite{HO}. Several sufficient conditions are known in terms of
the trace-class property of certain operators related to phase space
\cite{BW86, BDLI, BDL}, indicating that typically, ``too many degrees
of freedom'' may cause it to fail. A deep mathematical understanding
of the split property was given in \cite{lodo}.   

For the massless scalar free field in four spacetime dimensions, 
the split and nuclearity properties for an inclusion of non-touching
double cone regions has been established  in \cite{BDL}. The argument
is essentially group theoretic: the one-particle space of the massless
free field carries an irreducible representation $U$ of the Poincar\'e
group that extends to the conformal group $\C$ in four dimensions. 
$\C$ is the 15-dimensional Lie group generated by the Poincar\'e group
and the ``conformal inversion'' $I$, cf.\ (\ref{I}). It contains the dilations
and the special conformal transformations $I\circ t\circ I$, where $t$ is a 
translation. 

The three-dimensional subgroup generated by time translations and the
conformal inversion is isomorphic to the M\"obius group
$\Mob=\mathrm{SL}(2,\RR)/\ZZ_2$, and acts geometrically on the time
axis $\vec x=0$ exactly like the conformal symmetry group of a chiral
conformal QFT. This means that a conformal quantum field theory in
four dimensions, when restricted to the time axis, becomes a chiral
conformal QFT. In the scalar case, the chiral currents of this theory
are the free scalar field restricted to the time axis, along with all
its spatial derivatives $\nabla_{a_1}\dots\nabla_{a_k}\varphi(t,0)$. 
Their scaling dimensions increase with the number of spatial derivatives. 

The number of quasiprimary (i.e., $\Mob$-covariant) currents as a function of
their scaling dimension is controlled by representation theory. In
this way, the authors of \cite{BDL} could establish that the operator
$e^{-\beta L_0}$ has a finite trace, where $L_0$ is the conformal
Hamiltonian of this chiral conformal QFT.  

This suffices to establish the split property for the algebra inclusions
$\A(O)\subset\A(\wt O)$ when $O\Subset \wt O$ are two
double cones with apices on the time axis. This implies the statistical
independence of $\A(O_1)$ and $\A(O_2)$ whenever $O_1=O$ and $O_2$
is contained in the causal complement of $\wt O$, and then, by
covariance, whenever $O_1$ and $O_2$ are spacelike separated
double cones with a finite distance. 

We adapt this argument to all massless free field theories of finite
helicity, including the free Maxwell field. Because $\Mob$ commutes
with the subgroup $\so3$ of  
spatial rotations, the proof reduces to the computation of the
restriction of the unitary representation of the conformal group on
the one-particle space to the subgroup $\Mob\times \so3$, where the
representations of $\so3$ just provide multiplicities for the
irreducible representations of $\Mob$. The traces of $e^{-\beta L_0}$
in irreducible representations of $\Mob$ are well known, and the
trace-class property on the one-particle space is obtained by an
explicit computation.  This also implies the $L^2$-nuclearity property.

The split property ensures the existence of local unitaries
$U\in\A(O_1)$ that implement inner symmetries on the observables
$a\in\A(O)$ if $\A(O)\subset \A(O_1)$ is split \cite{DL}. Such
operators are usually thought of as (abstract versions of)
$U=e^{iJ^0(f)}$ where $J^\mu$ is an associated conserved local current
and $f$ a suitable test function supported in $O_1$. Indeed such
objects can be rigorously constructed and satisfy the local current
algebra relations \cite{DL}. They thus serve as substitutes for the
covariant massless higher-helicity fields that do not exist by the
Weinberg-Witten theorem \cite{WW}.

Our computation leads to a remarkable observation: as a representation
of $\Mob\times\so3$, the one-particle space for helicity $h+1$ is
just a subrepresentation of that for helicity $h$ (provided
$h>0$). This suggests some new kind of ``deformation argument'' to
construct helicity $h+1$ from helicity $h$, cf.\ Sect.\ \ref{s:defo}.

\section{Preliminaries}
\setcounter{equation}{0}
\subsection{Minkowski spacetime and the Poincar\'e group}
Let $\RR^{1+3}$ be Minkowski space, i.e., $\RR^4$ endowed with the metric $$(x,y)=x_0y_0-\sum_{i=1}^3x_iy_i.$$
In a 4-vector  $x=(x_0,x_1,x_2,x_3)$,   $x_0=t$ and $\vec x=
\{x_i\}_{i=1,2,3}$ are the time and space coordinates,
respectively. The metric induces a causal structure, in particular the
future $x+V_+$ of a point $x$, where $V_+=\{y\in\RR^4:
(y,y)>0,y_0>0\}$. The causal complement of a region $O$ is given by
$O'=\{x\in\RR^{1+3}: (x-y,x-y)<0, \forall y\in O \}$. A {\bf causally
  closed} region is such that $O=O''$. Particularly nice causally
closed regions are the open double cones of the form $O=x_-+V_+\cap
x_+-V_+$, where $x_+$ is a point in the future of $x_-$.

The {\bf Poincar\'e group} $\mathcal{P}$ is the inhomogeneous symmetry
group of $\RR^{1+3}$. It is the semidirect product of the Lorentz
group $\cL$, the homogeneous Minkowski symmetry group, and the
translation group $\RR^4$, i.e., $\mathcal{P}=\cL\ltimes\RR^4$.
We shall indicate with $\Poi=\cL_+^\uparrow\ltimes\RR^4$
the connected component of the identity, 
with $\tPoi$ and  $\widetilde{\mathcal{L}}_+^\uparrow$ the universal coverings resp.\ of $\Poi$ and $\mathcal{L}_+^\uparrow$, and with $\Lambda$ the covering map.

The conformal group $\C$ in four spacetime dimensions is the extension of
the Poincar\'e group by the ``conformal inversion''
\begin{equation}\label{I}
I: (t,\vec x)\mapsto \frac{(-t,\vec x)}{t^2-\vec x^2}.
\end{equation}
Notice that $I$ is singular on $\RR^{1+3}$, but one can extend Minkowski
space to the ``Dirac manifold'' on which $\C$ acts without
singularities, and of which Minkowski space is a dense chart. 
$\C$ is a 15-dimensional Lie group isomorphic to
$\mathrm{SO}(2,4)_0$. The time reversal in the
  numerator of (\ref{I}) ensures that $I$ belongs to the 
connected component.

\subsection{Massless representations of the Poincar\'e group}
The characters of the translation group are $x\mapsto
  \chi_q(x)=e^{i(x,q)}$ where $q\in\RR^{4}$ is a momentum. 
According to Wigner \cite{Wi}, irreducible positive-energy
representations of $\tPoi$ are induced by what is now called Mackey
induction, from irreducible representations of the stabilizer subgroup
(also known as the ``little group'') of some $q$ appearing
in the representation. The characters appearing in
massless positive-energy representations of $\tPoi$ are given by
$q\neq 0$ contained in $\partial V_+=\{x\in\RR^{1+3}: (x,x)=0, x_0\geq 0\}$. 
We fix 
\[
q\equiv(1,0,0,1)\in \partial V_+
\]  
($\partial V_+\setminus\{0\}$ is a $\cL_+^\uparrow$-orbit). We shall
call $\Stab_q$ and $\overline\Stab_q$ the stabilizers
of the point $q$ through the $\widetilde \cL_{+}^\uparrow$ and $\tPoi$
actions, respectively. The latter is the semidirect product of
$\RR^{3+1}$ and the little group $\Stab_q$, i.e., 
$\overline{\Stab}_q=\Stab_q\ltimes\RR^4$.
Any massless $\tPoi$ unitary positive energy representation is
obtained  inducing by a unitary representation of the
$\overline\Stab_q$ group. Note that a $\overline{\Stab_q}$
representation is of the form $\Stab_q\ltimes\RR^4\ni
(x,\sigma)\mapsto V(\sigma)\chi_q(x)$ where $V$ is the unitary representation of  $\Stab_q$.

The little group $\Stab_q $ is isomorphic to $\widetilde E(2)$, the double cover of the Euclidean group of the two-dimensional Euclidean space, namely $E(2)=\mathbb{T}\ltimes\RR^2$. 
Let $U=\Ind_{\overline\Stab_q\uparrow\tPoi}V\, q$ be a unitary representation of $\tPoi$ induced from the representation
$V q$ of $\overline\Stab_q$. In case $V$ is
trivial on the translation subgroup of $\wt E(2)$, $U$ has \emph{finite
  helicity} (or finite spin); in the other cases, it has \emph{infinite spin}.

An irreducible finite helicity representation is of the form
$$U_{h}=\Ind_{\overline\Stab_q\uparrow\tPoi}V_{h}\,\chi_q,\qquad h\in\frac12\,\ZZ$$
where $V_h(g,x)=h(g)$ where ${h}$ is the one-dimensional representation of the double covering of ${\mathbb T}$ of character $2h\in\Z$ ($V_h$ has to be trivial on the translation subgroup of $\wt E(2)$). $h$ is called {\it helicity} parameter.

Massless representations of $\tPoi$ of finite helicity extend to
unitary representations $\widetilde U$ of the conformal group $\C$. The
main argument in our paper pertains to the restriction of this extension to the
M\"obius subgroup of $\C$ (Sect.\ \ref{s:mob}). 
We denote by $P_\mu$ the generators of the
translations, and $K_\mu := \widetilde U(I)P^\mu \widetilde
U(I)$\footnote{Sic! The Lorentz indices are correct due
    to the presence of the time reversal in $I$, cf.\ (\ref{I}).} the generator of the special conformal transformations. Then $i[P_\mu,K_\nu] = -2\eta_{\mu\nu} D +2 M_{\mu\nu}$ where $D$ and $M_{\mu\nu}$ are the
generators of the dilations and Lorentz transformations. 
The conformal Hamiltonian $L_0=\frac12(P_0+K_0)$ generates the
rotations in $\mathrm{SO}(2)\oplus\eins_4\subset \mathrm{SO}(2,4)_0\simeq\C$,
and $\widetilde U(I)=e^{i\pi L_0}$.

\subsection{The M\"obius group and its representations}
\label{s:mob}
{\bf The M\"obius group}. 
The M\"obius group $\Mob$ is the three-dimensional Lie group
$\mathrm{PSU}(1,1)=\mathrm{SU}(1,1)/\ZZ_2$ acting on $S^1\subset \CC$
by fractional linear transformations 
$$S^1\ni z\mapsto\frac{\alpha z+\beta}{\overline\beta z +
  \overline\alpha},\qquad\begin{pmatrix}\alpha & \beta\\
  \overline\beta & \overline\alpha\end{pmatrix}\in \mathrm{SU}(1,1).$$
Via the Cayley transform and its inverse, the stereographic projection:
$$C:\ol\RR = \RR\cup\{\infty\}\ni x\mapsto -\frac{x-i}{x+i}\in S^1,\qquad C^{-1}:S^1\ni
z\mapsto -i\,\frac{z-1}{z+1}\in\ol\RR$$
it is isomorphic to  
$\mathrm{PSL}(2,\RR)=\mathrm{SL}(2,\RR)/\ZZ_2$ acting on the
compactified real line $\ol \RR$ via  
$$\ol\RR\ni
x\mapsto\frac{ax+b}{cx+d}\in\ol\RR,\qquad\begin{pmatrix}a & b\\
  c & d\end{pmatrix}\in \mathrm{SL}(2,\RR).$$ 
We shall freely switch between the ``circle picture'' and the ``line
picture''. 

\medskip

$\Mob$ arises as the subgroup of the conformal group $\C$ in four spacetime
dimension, generated by time translations and the conformal inversion
(\ref{I}). It preserves the time axis and commutes with $\so3$, the
spatial rotations.  
Its more familiar appearance in quantum field theory is in the (unbroken)
conformal group in two spacetime dimension, that is isomorphic to
$\Mob\times\Mob$ acting on the two-dimensional Dirac
manifold $S^1\times S^1$ where each $S^1$ is the compactification of
one lightlike axis.

$\Mob$ can be generated by various one-parameter subgroups. Firstly, consider the following subgroups:
\begin{itemize}\itemsep0mm
 \item Rotations $r: [0,2\pi]\ni\theta\mapsto e^{i\theta}z\in S_1$,  in the circle picture;
 \item Dilations $\delta:\RR\ni s\mapsto e^sx\in \RR$, in the line picture.
 \item Translations $t:\RR\ni s\mapsto x+s\in \RR$, in the line picture.
 \end{itemize}
They are respectively denoted with {\bf K}, {\bf A} and {\bf N}. Any
element $g\in\Mob$ can be uniquely decomposed following the {\bf KAN}
decomposition (Iwasawa decomposition), i.e., let $g\in\Mob$ then
$g=kan$, $k\in{\bf K}$, $a\in{\bf A}$, $n\in{\bf N}$. The subgroup
{\bf A} preserves the upper semicircle, or the right half-line on the
line picture, while ${\bf N}$ maps it into itself for $s>0$. By the
adjoint action of $\Mob$ one can define translation and dilation
groups relative to any other interval $I$, resp.\ $\tau_I$ and
$\delta_I$. 

Another convenient choice replaces the rotations by the special
conformal transformations $I\circ t\circ I$, where the conformal
inversion $I:t\mapsto -1/t$ in the line picture is the rotation by
$\pi$ in the circle picture. 

\medskip

{\bf Unitary positive-energy representations of $\widetilde\Mob$.} Let
$U$ be a unitary representation of $\widetilde\Mob$ on a Hilbert
space $\H$. The self-adjoint infinitesimal generator of the rotation
subgroup in $U$ is denoted by $L_0$, i.e., $U(r(\theta))=e^{i\theta
  L_0}$.  $L_0$ is called the {\it conformal Hamiltonian}. 
Let $P$, $D$, $K$ be the generators of the translations, dilations and
special transformations, resp., and (by abuse of
notation) $I$ the unitary representative of the conformal inversion,
then one has  
\begin{equation}
\label{PDKI}
IPI=K,\quad IDI=-D,\quad L_0=\frac12(P+K),\quad I=e^{i \pi L_0}.
\end{equation}
$U$ is said to be a {\bf positive-energy representation of $\widetilde\Mob$}
if the spectrum of the conformal Hamiltonian $L_0$ is contained in
$[0,+\infty)$.  

Irreducible, unitary, positive energy representations of $\widetilde
\Mob$ on a Hilbert space $\H$ are labelled by positive real numbers
$k$. They correspond to the lowest eigenvalue of the conformal
Hamiltonian $L_0$, called ``lowest weight''.  An irreducible positive unitary representation of
$\widetilde{\Mob}$ factors on $\Mob$, iff $k$ is an integer. 

Let $\mathbf{P}$ be the translation-dilation subgroup of $\Mob$ associated
to $\RR^+$.  A unitary representation of $\mathbf{P}$ is said to have
\textbf{positive energy} if the spectrum of the translation subgroup
is contained in the positive half-line $[0,+\infty)$. There exists a
unique, up to unitary equivalence, irreducible unitary positive-energy representation $U$ of $\mathbf{P}$.
The positivity of the energy of a $\widetilde\Mob$ representation $U$
is equivalent to  the positivity of the translation generator, thus a
positive-energy representation $U$ of $\Mob$ restricts to the unique
positive-energy representation of $\mathbf{P}$ \cite{GLW}. Furthermore, if $U$ is irreducible then $U|_{\mathbf{P}}$ is irreducible \cite{L}.

\subsection{(Anti-)unitary extensions}

{\bf The Poincar\'e group.} Let $\theta$ be the space and time
reflection $(t,\vec x)\mapsto(-t,-\vec x)$ and $\alpha$ be the action of $\theta$ on $\Poi$ by conjugation, we define 
$$\cP_+=\ZZ_2\ltimes_\alpha \Poi$$ to be the extension of  $\Poi$ through $\alpha$. 
 An (anti-)unitary representation of $\cP_+$ is unitary, resp.\  anti-unitary, on $\Poi$ resp.\ on $\theta\,\Poi$.
\begin{proposition}\cite{var}
A unitary irreducible positive energy representation $U$ of $\Poi$ extends (anti-)unitarily to $\cP_+$ iff it is induced by a self-conjugate
representation of the little group.
\end{proposition}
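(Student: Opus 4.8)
The plan is to run Mackey's analysis of $\Poi=\cL_+^\uparrow\ltimes\RR^4$ in the presence of the reflection $\theta$. Write $U=\Ind_{\overline\Stab_q\uparrow\Poi}(V\chi_q)$ with $V$ an irreducible unitary representation of the little group $\Stab_q$, and recall that $\cP_+=\langle\theta\rangle\ltimes_\alpha\Poi$, where $\theta$ acts on $\RR^{1+3}$ as $-\id$ and $\alpha=\Ad(\theta)$. First I would reformulate the problem: an (anti-)unitary extension of $U$ to $\cP_+$ is exactly an anti-unitary involution $J$ on $\H$ with $J^2=\eins$ and $JU(g)J^{-1}=U(\alpha(g))$ for all $g\in\Poi$. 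Setting $U':=JU(\cdot)J^{-1}$, this relation says that $U'=U\circ\alpha$ is a unitary representation and that $J$ is an antilinear intertwiner $U\to U'$; hence, leaving the normalization $J^2=\eins$ aside for the moment, the existence of $J$ is equivalent to a \emph{unitary} equivalence $\overline U\isom U\circ\alpha$ of representations of $\Poi$.

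Second, I would identify both sides as induced representations. Since $\theta=-\id$ on $\RR^{1+3}$ is central in $O(1,3)$, the automorphism $\alpha$ is the identity on $\cL_+^\uparrow$ and sends a translation $a$ to $-a$; in particular $\overline\Stab_q$ is $\alpha$-invariant and $\alpha$ restricts on it to the automorphism $\alpha_0$ which is trivial on $\Stab_q$ and negates $\RR^4$. Using that induction commutes with complex conjugation and with precomposition by automorphisms ($\overline{\Ind_H^G W}\isom\Ind_H^G\overline W$ and $(\Ind_H^G W)\circ\alpha\isom\Ind_{\alpha^{-1}(H)}^G(W\circ\alpha)$), that $\Stab_q=\Stab_{-q}$ by linearity of the Lorentz action, and that $\overline{\chi_q}=\chi_q\circ\alpha_0=\chi_{-q}$, one obtains
\begin{equation*}
\overline U\isom\Ind_{\overline\Stab_{-q}\uparrow\Poi}(\overline V\,\chi_{-q}),\qquad U\circ\alpha\isom\Ind_{\overline\Stab_{-q}\uparrow\Poi}(V\,\chi_{-q}).
\end{equation*}
These are (negative-energy) representations concentrated on the single $\cL_+^\uparrow$-orbit $\partial V_-\setminus\{0\}$ of $-q$, induced from the stabilizer of $-q$ with the standard character, and, $V$ and $\overline V$ being irreducible, they are irreducible.

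Third, I would invoke the uniqueness statement in Mackey's theory: two representations induced from the stabilizer of a point of a single orbit are unitarily equivalent precisely when the inducing representations of the little group are. This gives $\overline U\isom U\circ\alpha$ if and only if $\overline V\isom V$, that is, if and only if $V$ is self-conjugate; reading the chain backwards yields the converse.

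It remains to secure the normalization, i.e.\ to choose $J$ with $J^2=\eins$ and not merely $J^2=-\eins$. Given a self-conjugacy one fixes an anti-unitary $C$ on $\H_V$ with $CV(\sigma)C^{-1}=V(\sigma)$ and $C^2=\pm\eins$, builds $J$ fibrewise over the momentum orbit out of $C$ and the reflection $p\mapsto-p$, and computes $J^2$ as $C^2$ times a Wigner-cocycle contribution. For the finite-helicity little group $\Stab_q\isom\wt E(2)$ a self-conjugate representation is always of real type (each helicity-$h$ character $V_h$ is either trivial, when $h=0$, or paired with $V_{-h}$ into a two-dimensional real rotation representation), so $C$ and the cocycle can be chosen with $J^2=\eins$. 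I expect this last point, the real-versus-quaternionic dichotomy and its interplay with the cocycle, to be the only genuinely delicate step; everything else is routine Mackey bookkeeping.
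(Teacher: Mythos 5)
The paper gives no proof of this proposition --- it is quoted from \cite{var} --- so there is nothing internal to compare your argument against; your Mackey-theoretic route is the standard one. Its core is correct: the reduction of the existence of an anti-unitary $J$ with $JU(g)J^{-1}=U(\alpha(g))$ to the unitary equivalence $\overline U\isom U\circ\alpha$, the identification of both sides as representations induced from $\overline\Stab_{-q}=\overline\Stab_{q}$ with inducing representations $\overline V\chi_{-q}$ and $V\chi_{-q}$ respectively, and the appeal to Mackey uniqueness over a single (regular) orbit to conclude $\overline U\isom U\circ\alpha$ iff $\overline V\isom V$ are all sound, and they settle the ``only if'' direction completely.

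The gap is where you yourself expect it, but it is larger than you suggest. The proposition is stated for \emph{all} irreducible positive-energy representations, so the ``if'' direction must produce $J$ with $J^2=\eins$ and not merely $J^2=\pm\eins$, which is all that irreducibility and Schur's lemma give you; since $(\mu J)^2=|\mu|^2J^2$, the sign is an invariant and cannot be rescaled away. In the massive case the little group is $\mathrm{SU}(2)$, and $D^j$ for half-integer $j$ is quaternionic, $C^2=-\eins$; the required conclusion $J^2=\eins$ (asserted later in the paper, where massive representations are placed in the real case (1.a) of Proposition \ref{prop:unicom}) then hinges entirely on the Wigner-cocycle factor supplying a compensating sign, and that computation is exactly what your sketch omits. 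Your closing remark disposes only of the massless finite-helicity little group, where the sole self-conjugate irreducible inducing representation is the trivial character $V_0$; the parenthetical about $V_h$ being ``paired with $V_{-h}$'' concerns a reducible inducing representation and is beside the point for the irreducible statement being proved. For the use the paper actually makes of the proposition --- non-zero finite helicity does not extend to $\cP_+$, while $h=0$ does --- your argument is complete; as a proof of the proposition as stated, the real-versus-quaternionic step in the massive half-integer-spin case remains open.
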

This is true for all irreducible positive energy representations except for those of non-zero finite helicity.
On the other hand $U_h\oplus U_{-h}$ extends to $\mathcal\cP_+$.

\medskip

{\bf The $\Mob_2$ group.} Let $r$ be the complex conjugation
$z\mapsto\ol z$ on $S^1$ ($x\mapsto -x$ in $\ol \RR$), and $\alpha$ be the action of $r$ on $\Mob$ by conjugation, we define 
$$\Mob_2=\ZZ_2\ltimes_\alpha \Mob$$ to be the extension of  $\Mob$ through $\alpha$. 
Note that $r$ reverses the orientation. An (anti-)unitary
representation of $\Mob_2$ is unitary, resp.\ anti-unitary, on $\Mob$
(the orientation preserving transformations of $\Mob_2$) resp.\ on $r\Mob$ (the orientation reversing transformations of $\Mob_2$).
\begin{proposition} \cite{L}
Every unitary positive energy representation $U$ of $\Mob$ extends (anti-)unitarily to $\Mob_2$.
\end{proposition}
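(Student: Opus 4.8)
The plan is to reduce to irreducible representations and, for each irreducible summand, to implement $\alpha$ by the canonical antilinear identification of the representation with its conjugate.

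First I would decompose $U$. Being a positive-energy representation of $\Mob$, its conformal Hamiltonian $L_0$ has discrete spectrum contained in $\ZZ_{\geq 0}$, and by the usual lowest-weight analysis $U$ is a (countable) direct sum $U=\bigoplus_\iota U_{k_\iota}$ of irreducible positive-energy representations $U_{k_\iota}$ with $k_\iota\in\ZZ_{\geq 0}$; on the summands with $k_\iota=0$ (copies of the trivial representation) $\alpha$ is trivially implemented by an ordinary conjugation. Since a direct sum of (anti-)unitary operators is again (anti-)unitary, it is enough to produce, for each irreducible $U_k$ on $\H_k$, an anti-unitary $J_k$ with $J_k^2=1$ and $J_k U_k(g) J_k^{-1}=U_k(\alpha(g))$ for all $g\in\Mob$; then $J:=\bigoplus_\iota J_{k_\iota}$, together with $\widetilde U|_{\Mob}=U$ and $\widetilde U(r)=J$, extends consistently to an (anti-)unitary representation of $\Mob_2=\ZZ_2\ltimes_\alpha\Mob$, the only relations to verify being $\widetilde U(r)^2=\widetilde U(r^2)=1$ and $\widetilde U(r)U(g)\widetilde U(r)^{-1}=U(rgr^{-1})$.

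Next, for a fixed irreducible $U_k$, I would construct $J_k$ as follows. Since $\alpha=\mathrm{Ad}(r)$ with $r$ orientation-reversing, $\alpha$ sends the rotation $\rho(\theta)$ to $\rho(-\theta)$, so $U_k\circ\alpha$ is irreducible with conformal Hamiltonian $-L_0$, i.e.\ with $L_0$-spectrum $\{-k,-k-1,-k-2,\dots\}$, exactly as the conjugate representation $\overline{U_k}$. Hence $\overline{U_k\circ\alpha}$ and $U_k$ are irreducible positive-energy representations of $\Mob$ with the same lowest weight $k$, so by the classification recalled above they are unitarily equivalent, i.e.\ $U_k\circ\alpha\cong\overline{U_k}$; composing a unitary realizing this equivalence with the canonical conjugation $\H_k\to\overline{\H_k}$ yields an anti-unitary $J_k$ on $\H_k$ with $J_k U_k(g) J_k^{-1}=U_k(\alpha(g))$. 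To see that $J_k$ may be chosen an involution, apply this relation to the rotation subgroup: $J_k e^{i\theta L_0}J_k^{-1}=e^{-i\theta L_0}$, which, $J_k$ being anti-unitary, forces $J_k L_0 J_k^{-1}=L_0$; thus $J_k$ preserves the one-dimensional lowest-weight space $\CC\Omega_k=\ker(L_0-k)$, and after multiplying $\Omega_k$ by a suitable phase we may assume $J_k\Omega_k=\Omega_k$. Since $\alpha^2=\mathrm{id}$, $J_k^2$ is a unitary commuting with $U_k$, hence a scalar by Schur's lemma, and evaluating on $\Omega_k$ gives $J_k^2=1$.

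The main point, and essentially the only nontrivial one, is the identification $U_k\circ\alpha\cong\overline{U_k}$: it rests on the fact that an irreducible positive-energy representation of $\Mob$ is determined up to unitary equivalence by its lowest weight, combined with the elementary observation that both conjugation and the orientation-reversing automorphism invert the conformal Hamiltonian. The remaining care is in the normalization ensuring $J_k^2=1$, so that one lands in $\Mob_2$ itself rather than in a $\ZZ_2$-central extension of it; this is immediate once irreducibility has been used.
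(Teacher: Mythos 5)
Your proof is correct. Note that the paper itself does not prove this proposition: it is quoted from \cite{L}, where the standard route is more concrete --- one either exhibits the anti-unitary explicitly in the function-space model of $U^{(n)}$ (essentially $f(x)\mapsto\overline{f(-x)}$ on $\H_n$, which manifestly intertwines $U^{(n)}$ with $U^{(n)}\circ\alpha$), or obtains $J$ as the modular conjugation $J_{H(I)}$ of the standard subspace associated with a reflection-invariant interval via the BGL construction. Your argument is a genuinely different, purely abstract alternative: it uses only (i) complete reducibility of positive-energy representations into lowest-weight irreducibles, (ii) the classification of irreducibles by their lowest weight, and (iii) the observation that both complex conjugation and the orientation-reversing automorphism invert $L_0$, so that $U_k\circ\alpha\cong\overline{U_k}$. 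What this buys is independence from any particular model; what it costs is that items (i) and (ii) are themselves nontrivial inputs (which the paper, to be fair, also takes for granted in Sect.~2.3). The one genuinely delicate point --- that the self-conjugacy is of real rather than pseudo-real type, so that $J^2=1$ and one lands in $\Mob_2$ rather than a central extension --- you handle correctly and cleanly via the one-dimensionality of the lowest-weight space; this is exactly the content of the paper's later remark that positive-energy factorial representations of $\Mob$ fall into case (1.a) of Proposition~2.4. Your evaluation $J_k^2\Omega_k=|\lambda|^2\Omega_k=\Omega_k$ in fact already gives $J_k^2=1$ without the phase adjustment of $\Omega_k$, which is only needed if one additionally wants $J_k\Omega_k=\Omega_k$.
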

Now we are going to show that there exists a unique, up to unitary
equivalence, way to represent (anti-)unitarily such extensions (see
also \cite[Thm.\ 2.11]{NO}).
Let $K$ be a locally compact group, $\alpha$ be an involutive automorphism of $K$ and $G$ be the semidirect product $\ZZ_2\ltimes_\alpha K$. Let $U$ and $\widehat U\dot=U\circ\alpha$ be unitary representations of $K$ on a Hilbert space $\H$  and $J$ be an anti-unitary operator on $\H$, we shall call $J\widehat U J^*$ the \textit{conjugate representation} of $U$. The unitary equivalence class of the conjugate representation does not depend on the choice of $J$. If $\alpha =1$, then our definition of conjugate representation coincides with the classical one. An (anti-)unitary representation of $G$ is unitary on $K$ and anti-unitary on $rK$, where $r$ is the $\ZZ_2$-generator.
$U$ is said to be \textit{self-conjugate} if $U$ is unitarily equivalent to $J\widehat UJ^*$, and \textit{real} if the anti-unitary $J$ can be chosen
s.t.\ $J^2=1$ and  
\begin{equation}\label{eq:real}
U=J\widehat U J^*.
\end{equation} 
Note that such an anti-unitary involution extends the representation
$U$ (anti-)unitarily from $K$ to $G$ (the converse is also true). In
case $J$ can be chosen s.t.\ $J^2=-1$ and \eqref{eq:real} holds, then $U$ is said to be \textit{pseudo-real}. 
\begin{proposition}\label{prop:unicom}
Assume that $K$ is a  locally compact  type I group, and let $U$ be a
unitary representation of $K$. Then
\begin{enumerate}\itemsep0mm
\item If $U$ is real, then it extends to an (anti-)unitary representation of $G$ on $\H$. The extension is unique modulo unitary equivalence. 
\item In general, let $J$ be an anti-unitary involution on $\H$, then $U\oplus J\widehat UJ$ is real and it extends uniquely (up to unitary equivalence) to an (anti-)unitary representation of $G$ on $\H\oplus\H$ as a consequence of point 1.
\end{enumerate}
\end{proposition}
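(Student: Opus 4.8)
The plan is to prove the two assertions separately, deducing (2) from (1) once we observe that $\widetilde U=U\oplus J\widehat U J$ is real.

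\emph{Existence in (1).} Since $U$ is real, fix an anti-unitary involution $J$ with $U=J\widehat U J$, noting that an anti-unitary involution satisfies $J^*=J^{-1}=J$. Define $\widetilde U$ on $G=\ZZ_2\ltimes_\alpha K$ by $\widetilde U|_K=U$ and $\widetilde U(r)=J$. Because $G$ is generated by $K$ and $r$ subject only to $r^2=e$ and $rkr^{-1}=\alpha(k)$, the only relations to check are $\widetilde U(r)^2=\mathbf 1$ and $\widetilde U(r)\,U(k)\,\widetilde U(r)^{-1}=\widehat U(k)$; the first is $J^2=1$, and the second is exactly $U=J\widehat U J$ read in the form $JU(k)J=\widehat U(k)$, using $\widehat U=U\circ\alpha$ and $\alpha^2=\mathrm{id}$. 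Hence $\widetilde U$ is a well-defined (anti-)unitary representation of $G$ extending $U$.

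\emph{Uniqueness in (1).} Let $\widetilde U_1,\widetilde U_2$ be two (anti-)unitary extensions of $U$; each $\widetilde U_i(r)$ is an anti-unitary involution implementing $\alpha$. Put $\M:=U(K)'$ and $W:=\widetilde U_1(r)\widetilde U_2(r)$. A direct computation shows that $W$ is a unitary in $\M$, that $\gamma:=\mathrm{Ad}\,\widetilde U_1(r)$ restricts to a \emph{conjugate-linear} involutive $*$-automorphism of $\M$, and that $\gamma(W)=W^*$ (the last coming from $\widetilde U_i(r)^2=1$). The claim reduces to: there is a unitary $V\in\M$ with $\gamma(V)=WV$. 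Granting this, $VU(k)V^*=U(k)$ for $k\in K$ since $V\in\M$, and from $\gamma(V)=WV$ one gets $V\,\widetilde U_1(r)\,V^*=W^{-1}\widetilde U_1(r)=\widetilde U_2(r)$; so $V$ implements $\widetilde U_1\cong\widetilde U_2$. (As a byproduct, the existence of an extension already forces $U$ to be real.)

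\emph{The key step.} To produce $V$, let $p\in\M$ be the spectral projection of the unitary $W$ at the point $-1$. Since $\gamma(f(W))=\overline f(W)$ for $f$ on $\mathbb T$ (because $\gamma(W^n)=(W^*)^n$ and $\gamma$ is conjugate-linear) and $\chi_{\{-1\}}$ is real, $\gamma(p)=p$. Set
\[
B:=(1+W^*)(1-p)+ip\in\M .
\]
Using $\gamma(1+W^*)=1+W=W(1+W^*)$, $\gamma(p)=p$, $\gamma(ip)=-ip$ and $Wp=-p$, one checks $\gamma(B)=WB$. Moreover $B$ is injective with dense range — it equals $ip$ on $p\H$ and an injective operator with dense range on $(1-p)\H$, on which $-1$ is not in the point spectrum of $W$ — so its polar decomposition $B=V|B|$ has $V\in\M$ \emph{unitary} and $|B|\in\M$ injective with dense range. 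From $\gamma(B^*B)=(WB)^*(WB)=B^*B$ we get $\gamma(|B|)=|B|$, hence $\gamma(V)|B|=\gamma(B)=WB=WV|B|$, and density of the range of $|B|$ gives $\gamma(V)=WV$. This is the step I expect to be the main obstacle: the naive choice "$V=$ polar part of $1+W^*$" fails to be unitary precisely on the $(-1)$-eigenspace of $W$, and the correction term $ip$ — available only because $\gamma$ is conjugate-linear, so $\gamma(ip)=-ip=W(ip)$ — is exactly what repairs this defect. (The type I hypothesis does not seem to be needed for this argument; it would only enter if one preferred to run the uniqueness proof fibrewise over a direct-integral decomposition of $U$.)

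\emph{Point (2).} Given $U$ and an anti-unitary involution $J$ on $\H$, form $\widetilde U:=U\oplus J\widehat U J$ on $\H\oplus\H$ and define $\widetilde J(\xi,\eta):=(J\eta,J\xi)$, which is again an anti-unitary involution. Using $\widehat{\widehat U}=U$ one computes directly, for all $k\in K$,
\[
\widetilde J\,\widehat{\widetilde U}(k)\,\widetilde J(\xi,\eta)=\widetilde J\big(\widehat U(k)J\eta,\;JU(k)\xi\big)=\big(U(k)\xi,\;J\widehat U(k)J\eta\big)=\widetilde U(k)(\xi,\eta),
\]
i.e.\ $\widetilde U=\widetilde J\,\widehat{\widetilde U}\,\widetilde J^*$ with $\widetilde J^2=1$; so $\widetilde U$ is real. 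Part (1) then applies verbatim to $\widetilde U$ and yields its (anti-)unitary extension to $G$ on $\H\oplus\H$, unique up to unitary equivalence.
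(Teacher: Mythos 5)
Your proof is correct, and the uniqueness argument takes a genuinely different route from the paper's. The paper invokes the type I hypothesis to reduce to the factorial case $U=U_0\otimes\mathbf{1}$ via a direct-integral decomposition, and then argues case by case according to whether the irreducible $U_0$ is real, pseudo-real, or disjoint from its conjugate; in each case two candidate involutions $J,J'$ are compared inside the commutant $\CC\otimes\B(\K)$ and the uniqueness up to unitary conjugation of an anti-unitary involution on the multiplicity space $\K$ finishes the argument. You instead recast uniqueness as a single cocycle problem in the commutant $\M=U(K)'$: the two extensions differ by the unitary $W=\widetilde U_1(r)\widetilde U_2(r)\in\M$ satisfying $\gamma(W)=W^*$ for the conjugate-linear involutive automorphism $\gamma=\mathrm{Ad}\,\widetilde U_1(r)$ of $\M$, and you trivialize this cocycle via the polar decomposition of $B=(1+W^*)(1-p)+ip$. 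I checked the steps --- $\gamma(p)=p$ (which follows already from $W\gamma(p)=-\gamma(p)$ and involutivity, without appealing to Borel functional calculus), $\gamma(B)=WB$, injectivity and dense range of $B$, $\gamma(|B|)=|B|$, hence $\gamma(V)=WV$ for the unitary polar part $V\in\M$, and finally $V\widetilde U_1(r)V^*=W^*\widetilde U_1(r)=\widetilde U_2(r)$ --- and they are all valid. Your argument is self-contained, avoids the direct-integral reduction (which the paper only sketches for the non-factorial case), and shows that the type I hypothesis is in fact superfluous for this proposition, since nothing is assumed about the von Neumann algebra $\M$. What the paper's case analysis buys instead is the explicit structural information (which factorial representations are real, pseudo-real, or complex) that is used right after the proposition to sort the Poincar\'e and M\"obius representations into cases (1.a) and (1.c); your proof does not produce that classification, but it is not needed for the statement as such. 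Part (2) is essentially identical in both treatments: your flip conjugation $\widetilde J(\xi,\eta)=(J\eta,J\xi)$ is the paper's $\wt J$ from case (1.c) with trivial multiplicity.
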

\begin{proof}
Firstly, we  consider the factorial case, namely  $U=U_0\otimes \textbf{1}$ on the Hilbert space $\H=\H_0\otimes\K$, where $U_0$ is an irreducible unitary representation of $K$. We consider the following cases:
\begin{enumerate}\itemsep0mm
\item[(1.a)] Assume that $U$ and $U_0$ are real representations. 
In this case, $U_0$ extends to an (anti-)unitary representation of $G$ through an anti-unitary operator $J_0$ satisfying \eqref{eq:real}. Let $\textbf{J}$ be any antilinear involution on $\K$, one can define the anti-unitary\footnote{The tensor product of two
    operators is defined by $(A\otimes B)(v\otimes w):= Av\otimes
    Bw$. This is ill-defined if $A$ is unitary and $B$ is
    anti-unitary, because it conflicts with $\lambda u\otimes w =
    u\otimes \lambda w$; and it is (anti-)unitary if both $A$ and $B$
    are (anti-)unitary. We thank the referee for asking the question.} involution $J=J_0\otimes \textbf{J}$,  which extends $U$ (anti-)unitarily to $G$ and satisfies \eqref{eq:real}.

Now we have to show the uniqueness up to unitary equivalence of the extension. Consider another anti-unitary involution
$J'$ on $\H$ extending $U$ from $K$ to $G$. The composition $JJ'\in
U(K)'=\CC\otimes \B(\K)$ and  since $JJ'$ is a unitary,  
$$J'=(1\otimes Z)(J_0\otimes \textbf{J})=J_0\otimes Z\textbf{J}$$
where $Z$ is a unitary operator on the Hilbert space $\K$. By uniqueness, up to unitary equivalence, of the complex structure of an Hilbert space, then there exists a unitary  $V\in\mathcal U(\K)$ s.t. $V\textbf{J}V^*=Z\textbf{J}$, thus $$(1\otimes V){J}(1\otimes V^*)={J'}.$$ The two extensions through $J$ and $J'$ are unitarily intertwined by $1\otimes V$.
\item[(1.b)] Assume that $U$ is real and $U_0$ is pseudo-real, w.r.t.\ an anti-unitary operator $J_0$. Let $\textbf{J}$ be an anti-unitary operator s.t.\ $\textbf{J}^2=-1$ then  $J=J_0\otimes \textbf{J}$ is an involution implementing the $\ZZ_2$-generator on $U$ and satisfying \eqref{eq:real}. The argument  of unitary equivalence of the (anti-)unitary extensions is a slight modification of the previous case.
\item [(1.c)] Assume that $U_0$ is disjoint from the conjugate
  representation. Let $J_0$ be an antilinear involution on $\H_0$, we
  define representation 
$$\wt U=\left(U_0\oplus J_0\widehat U_0J_0\right)\otimes\textbf{1}_\K$$ 
acting on
  $\wt\H=(\H_0\oplus \H_0)\otimes \K$.  $\wt U$ is real w.r.t.\
  the following anti-unitary involution $\wt J$. Let  $\sigma$ be the
  flip operator on $\H_0\oplus \H_0$, i.e., $\sigma
  (\xi\oplus\eta)=\eta\oplus\xi$, $\textbf{J}$ be an antilinear
  involution on $\K$, then we define 
$$\wt J=\left((J_0\oplus J_0)\cdot\sigma\right)\otimes \textbf{J}$$  on $\wt \H$. It extends (anti-)unitarily $\wt U$ from $K$ to $G$.  
Let $\wt J'$ be another anti-unitary involution extending $\wt U$ to $G$, then $\wt J\wt J'\in U(K)'=(\CC\oplus \CC)\otimes B(\K)$. Since $\wt J^2=1$, it is easy to see that there exists a unitary $V\in\mathcal U(\K)$  such that $$(\textbf{1}_{\CC^2}\otimes V)\, \wt J \,( \textbf{1}_{\CC^2}\otimes V^*)=\wt J'$$ by uniqueness of the complex structure of the Hilbert space $\K$, and we conclude this case.
\end{enumerate}
We sketch the proof for the general case. Since $K$ is a type I group
the above result generalizes to direct integrals and direct sums of
factorial representations. Indeed, for $U=\int_X U_x d\mu(x)$ where
$\{U_x\}_{x\in X}$ is a family of factorial representations and
$(X,\mu)$ is a standard measure space, the product of any two
anti-unitary involutions $J$ and $J'$ extending $U$ to $G$ belongs to
$U(K)'$. Then one can conclude the proof by applying the factorial case on
integral fibers. 
\end{proof}

Positive-energy unitary representations of $\Poi$ and $\Mob$ satisfy
the assumptions of Proposition \ref{prop:unicom}. In particular,
positive-energy factorial representations of $\Mob$ 
  belong to case (1.a); massive, scalar massless and infinite spin
$\Poi$-representations also belong to (1.a), and massless non-zero
 helicity representations to (1.c).

\section{One-particle nets and Brunetti-Guido-Longo construction}
\setcounter{equation}{0}
In QFT, localization is formulated in terms of {\em local nets}, i.e.,
inclusion preserving maps that associate with open spacetime regions
the corresponding quantum structures (algebras or Hilbert spaces, see
below), and Einstein causality is encoded as a feature of these maps.
We introduce the various nets pertaining to our purpose.

The general idea of the connection between nets of algebras on a
complex Hilbert space containing the vacuum vector $\Omega$ and nets
of real Hilbert subspaces is to define, for every spacetime region
$O$, $H(O):=\overline{\A(O)_{sa}\Omega}$.\footnote{$M_{sa}$ are the self-adjoint elements of a
von Neumann algebra $M$.} One may also take the
intersection with the one-particle space $H_1(O)=H(O)\cap \H_1$. In
the free case, one can recover $H(O)$ and also $\A(O)$ from the real
Hilbert spaces $H_1(O)$ by second quantization, and this can be used as
a construction, once $H_1(O)$ are given. Finally, modular theory
allows to define $H_1(O)$ intrinsically in terms of a positive-energy
representation of $\Poi$. Local fields are not used for the specification
of the local standard subspaces,
  and they can actually be constructed from the latter by
  second quantization.

\subsection{Standard subspaces}

A linear, real, closed subspace $H$ of a complex Hilbert space $\H$ is called {\bf cyclic} if 
$H+iH$ is dense in $\H$, {\bf separating} if $H\cap iH=\{0\}$ and 
{\bf standard} if it is cyclic and separating.

Given a standard subspace $H$  the associated {\bf Tomita operator} $S_H$ is defined to be the closed anti-linear involution with domain $H+iH$, given by: $$S_H:H+iH\ni \xi + i\eta \mapsto \xi - i\eta\in H+iH, \qquad\xi,\eta\in H,$$ on the dense domain $H + iH\subset\H$. The polar decomposition $$S_H = J_H\Delta_H^{1/2}$$ defines the positive self-adjoint {\bf modular operator} $\Delta_H$ and the anti-unitary
{\bf modular conjugation} $J_H$. In particular, $\Delta_H$ is invertible and $$J_H\Delta_H J_H=\Delta_H^{-1}.$$

If $H$ is a  real linear subspace of $\H$, the \emph{symplectic complement} of $H$ is defined by
\[
H' \equiv \{\xi\in\H\ ;\ \Im(\xi,\eta)=0, \forall \eta\in H\} = (iH)^{\bot_\RR}\ ,
\]
where $\bot_\RR$ denotes the orthogonal in $\H$ viewed as a real Hilbert space with respect to the real part of the inner product on $\H$.
$H'$ is a closed, real linear subspace of $\H$. If $H$ is standard, then $H = H''$. 
It is a fact that $H$ is cyclic (resp.\ separating) iff $H'$ is separating (resp.\ cyclic), thus $H$ is standard iff $H'$ is standard, and we have
\[
S_{H'} = S^*_H \ .
\]
Fundamental properties of the modular operator and conjugation are
\begin{equation*}
\Delta_H^{it}H = H, \quad J_H H = H' \ ,\qquad  t\in\RR\ .
\end{equation*}
The one-parameter, strongly continuous group $t\mapsto \Delta_H^{it}$ is the {\bf modular group} of $H$ (cf.\ \cite{RV}).

There is a 1--1 correspondence between Tomita operators and  standard subspaces, namely between:
\begin{itemize}\itemsep0mm
\item Standard subspaces $H\subset\H$,
\item Closed, densely defined  anti-linear involutions $S$ on $\H$,
\item Pairs $(J,\Delta)$ of an anti-unitary involution $J$ and  a positive self-adjoint operator $\Delta$ on $\H$ s.t. \begin{equation}\label{eq:TT}
J\Delta J=\Delta^{-1}. 
\end{equation}
\end{itemize}
Namely, given $(J,\Delta)$ one can recover $S:=J\Delta^{\frac12}$ and
$H$ as the real eigenspace of $S$ with eigenvalue $1$.

\smallskip 

We shall need the following results on standard subspaces. 
\begin{lemma}\label{inc}{\rm \cite{L,LN}.}
Let $H\subset \H$ be a standard subspace, and $K\subset H$ a closed, real linear subspace of $H$. 

If $\Delta_H^{it}K=K$, $\forall t\in\RR$, then $K$ is a standard subspace of $\K\equiv \overline{K+iK}$ and $\Delta_H|_K$ is the modular operator of $K$ on $\K$. 
If moreover $K$ is a cyclic subspace of $\H$, then $H=K$.
\end{lemma}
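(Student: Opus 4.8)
The plan is to realize the Tomita operator of $K$ inside $\K\equiv\overline{K+iK}$ as the closure of $S_H|_{K+iK}$ and then to read off its polar parts. Separation of $K$ is automatic, since $K\cap iK\subseteq H\cap iH=\{0\}$, while $K$ is cyclic in $\K$ by the very definition of $\K$; thus $K$ is a standard subspace of $\K$, with Tomita operator $S_K=J_K\Delta_K^{1/2}$. Now $K+iK\subseteq H+iH=\mathcal D(S_H)$ and $S_H$ maps $K+iK$ onto itself, so the restriction $S_0:=S_H|_{K+iK}$ is densely defined on $\K$ and contained in the closed operator $S_H$, hence closable; a routine sequence argument shows $\overline{S_0}$ is again an anti-linear involution, and if $\overline{S_0}\zeta=\zeta$ with $\zeta=\lim(a_n+ib_n)$ ($a_n,b_n\in K$) and $S_H(a_n+ib_n)=a_n-ib_n\to\zeta$, then $2a_n\to2\zeta$, so $\zeta\in\overline K=K$. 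Hence $\overline{S_0}$ is a closed, densely defined anti-linear involution on $\K$ with fixed-point space exactly $K$, and by the standard-subspace/Tomita-operator dictionary recalled above, $\overline{S_0}=S_K$.

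It remains to pin down $\Delta_K$. From $\Delta_H^{it}K=K$ one gets $\Delta_H^{it}\K=\K$, so the orthogonal projection $e$ onto $\K$ commutes with $\Delta_H^{it}$ and hence with $\Delta_H^{1/2}$; therefore $\Lambda:=\Delta_H|_\K$ is positive self-adjoint on $\K$ with $\Lambda^{1/2}=\Delta_H^{1/2}|_\K$ and $\mathcal D(\Lambda^{1/2})=\K\cap\mathcal D(\Delta_H^{1/2})$. The heart of the matter is that $K+iK$ is a core for $\Lambda^{1/2}$. For $\zeta\in\mathcal D(\Lambda^{1/2})$ I would mollify in the modular parameter, $\zeta_n:=\sqrt{n/\pi}\int_\RR e^{-nt^2}\Lambda^{it}\zeta\,dt=e^{-(\log\Lambda)^2/(4n)}\zeta$; since $e^{-(\log\Lambda)^2/(4n)}\to1$ strongly and $\Lambda^{1/2}e^{-(\log\Lambda)^2/(4n)}=e^{n/4}e^{-(\log\Lambda-n)^2/(4n)}$ is bounded, $\zeta_n\to\zeta$ in the graph norm of $\Lambda^{1/2}$. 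For fixed $n$, choosing $\xi_m=a_m+ib_m\in K+iK$ with $\xi_m\to\zeta$ in $\K$ and using $\Lambda^{it}a_m=\Delta_H^{it}a_m\in K$ together with the closedness of $K$, the mollified $\xi_m^{(n)}:=e^{-(\log\Lambda)^2/(4n)}\xi_m$ again lie in $K+iK$ and, applying the two bounded operators just displayed, converge to $\zeta_n$ in graph norm. A diagonal argument gives $\zeta\in\mathcal D(\overline{S_0})$; with the obvious reverse inclusion $\mathcal D(\overline{S_0})\subseteq\K\cap\mathcal D(\Delta_H^{1/2})$ this yields $\mathcal D(S_K)=\K\cap\mathcal D(\Delta_H^{1/2})$ and $S_K\xi=S_H\xi=J_H\Lambda^{1/2}\xi$ on this domain, with $\Lambda^{1/2}\xi\in\K$. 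Since $\Lambda^{1/2}$ has dense range in $\K$ and $J_H$ carries that range back into $\K$ (it equals $S_K\mathcal D(S_K)\subseteq\K$), we get $J_H\K=\K$; then $J':=J_H|_\K$ is an anti-unitary involution on $\K$ with $J'\Lambda J'=\Lambda^{-1}$, so $S_K=J'\Lambda^{1/2}$ is the polar decomposition of $S_K$, whence $J_K=J_H|_\K$ and $\Delta_K=\Delta_H|_\K$. This proves the first assertion.

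For the last assertion, if $K$ is moreover cyclic in $\H$ then $\K=\H$, so by the first part $\Delta_K=\Delta_H$ and $J_K=J_H$, hence $S_K=S_H$; taking fixed-point spaces of these two anti-linear involutions gives $K=H$.

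The only genuinely non-formal step is the core property in the second paragraph — that $K+iK$ is a core for $\Delta_H^{1/2}|_\K$ — and this is precisely where the hypotheses $\Delta_H^{it}K=K$ and "$K$ closed" are used; everything else is bookkeeping with the Tomita dictionary and the fact that $\K$ reduces $\Delta_H$.
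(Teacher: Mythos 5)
Your argument is correct and complete: identifying $S_K$ with the closure of $S_H|_{K+iK}$, proving that $K+iK$ is a core for $\Delta_H^{1/2}|_{\K}$ by Gaussian smoothing along the modular flow (which is exactly where $\Delta_H^{it}K=K$ and the closedness of $K$ enter), and then reading off the polar decomposition is precisely the standard route. The paper itself gives no proof of this lemma but cites \cite{L,LN}, and your argument matches the one found there, so there is nothing to add.
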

\begin{lemma}\label{inc2}{\rm \cite{L,LN}.}
Let $H\subset \H$ be a standard subspace, and $U$ a unitary on $\H$ such that $UH=H$. Then $U$ commutes with $\Delta_H$ and $J_H$.
\end{lemma}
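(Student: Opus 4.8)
The plan is to show directly that $U$ commutes with the Tomita operator $S_H$, and then to transfer this to $\Delta_H$ and $J_H$ by uniqueness of the polar decomposition.

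First I would note that since $U$ is complex-linear and $UH=H$, also $U(iH)=iH$, so $U$ maps the domain $H+iH$ of $S_H$ onto itself. For $\xi,\eta\in H$ we have $U\xi,U\eta\in H$, hence
\begin{equation*}
US_H(\xi+i\eta)=U(\xi-i\eta)=U\xi-iU\eta=S_H(U\xi+iU\eta)=S_HU(\xi+i\eta).
\end{equation*}
Thus $US_H\subseteq S_HU$ on $H+iH$; applying the same reasoning to $U^{-1}$ (which also preserves $H$, hence $H+iH$) yields the reverse inclusion, so that $US_HU^*=S_H$ as closed, densely defined operators, with equal domains.

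Next I would recall that the polar decomposition $S_H=J_H\Delta_H^{1/2}$ is unique: $\Delta_H=S_H^*S_H$ is positive self-adjoint and $J_H=S_H\Delta_H^{-1/2}$ is an anti-unitary involution, the pair satisfying \eqref{eq:TT}. Conjugating by the unitary $U$, the operator $US_HU^*$ is again a closed, densely defined anti-linear involution, and its modular data are $U\Delta_HU^*$ (positive self-adjoint) and $UJ_HU^*$ (anti-unitary involution), which again satisfy \eqref{eq:TT}; by the correspondence recalled just before Lemma~\ref{inc}, these are determined by $US_HU^*$. Since $US_HU^*=S_H$, uniqueness forces $U\Delta_HU^*=\Delta_H$ and $UJ_HU^*=J_H$, i.e.\ $U$ commutes with $\Delta_H$ (hence also with $\Delta_H^{it}$ for all $t\in\RR$) and with $J_H$, which is the claim.

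The only points needing a little care are the matching of domains in the operator equality $US_HU^*=S_H$ and the uniqueness of the polar decomposition of an unbounded anti-linear involution; both are standard and contained in \cite{L,LN}, so no genuine obstacle is expected.
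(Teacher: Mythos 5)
The paper does not prove this lemma itself but quotes it from \cite{L,LN}; your argument is exactly the standard proof found there: $UH=H$ implies $U$ preserves the domain $H+iH$ and satisfies $US_HU^*=S_H$, and the uniqueness of the polar decomposition of the closed anti-linear involution $S_H=J_H\Delta_H^{1/2}$ then forces $UJ_HU^*=J_H$ and $U\Delta_HU^*=\Delta_H$. The proof is correct and complete.
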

The following is the one-particle analogue of Borchers' theorem \cite{Bo2}.
\begin{theorem}{\rm \cite{L,LN}.} \label{thm:Borch}
Let $H\subset\H$ be a standard subspace, and $U$ a one-parameter unitary group on $\H$ with positive generator, such that $U(t)H\subset H$, $t\geq 0$. Then $\Delta_H^{is}U(t)\Delta_H^{-is}= U(e^{-2\pi s}t)$.
\end{theorem}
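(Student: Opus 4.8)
The plan is to reduce the statement to the classical Borchers commutation relations and then invoke the known one-particle (standard subspace) version of Borchers' theorem, or alternatively to reprove it directly using the Tomita operator of $H$. I will sketch both routes, since they dovetail.

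\textbf{Route via the known Borchers theorem for standard subspaces.} First I would observe that the hypotheses — a standard subspace $H\subset\H$, a one-parameter unitary group $U(t)=e^{itP}$ with $P\ge 0$, and the half-sided invariance $U(t)H\subset H$ for $t\ge 0$ — are exactly the hypotheses of the one-particle Borchers theorem proved in \cite{L} (see also \cite{LN}). That theorem asserts the pair of commutation relations $\Delta_H^{is}U(t)\Delta_H^{-is}=U(e^{-2\pi s}t)$ and $J_HU(t)J_H=U(-t)$. So the cleanest proof is simply to cite Theorem~\ref{thm:Borch}'s source: the statement I am asked to prove \emph{is} that theorem, and the earlier results in the excerpt (Lemmas \ref{inc}, \ref{inc2}) are the standard-subspace toolkit that feeds into it. If a self-contained argument is wanted, I proceed as follows.

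\textbf{Direct argument.} Consider the bounded analytic-function technique. For $\xi\in H+iH$ set $F(t):=S_HU(t)\xi$ and compare with $U(-t)S_H\xi$ where $S_H$ is the Tomita operator. The half-sided inclusion $U(t)H\subset H$ for $t\ge0$ gives, after taking symplectic complements and using $J_HH=H'$, that $U(-t)H'\subset H'$ for $t\ge 0$ as well, i.e. $U(t)H'\subset H'$ for $t\le0$; combined with $S_{H'}=S_H^*$ this yields that $S_H U(t)S_H^{-1}$ extends $U(-t)$ on a suitable domain. The core of the proof is then the observation that the function
\begin{equation*}
s\mapsto \Delta_H^{is}U(t)\Delta_H^{-is}
\end{equation*}
solves, on the dense domain $H+iH$, a cocycle-type identity, and that positivity of $P$ together with the KMS-type analyticity of $s\mapsto\Delta_H^{is}$ (entering the strip $0\le\Im s\le \tfrac12$ after analytic continuation) forces it to be a one-parameter group of unitaries commuting with $U$ in the rescaled way. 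Concretely one shows that the vector-valued function $s\mapsto \Delta_H^{-is}\xi$ and $t\mapsto U(t)\eta$ have overlapping analyticity domains in the upper half plane, writes the function $G(s,t)=(\Delta_H^{is}U(t)\Delta_H^{-is}\eta,\zeta)$, and checks it is bounded and analytic in an appropriate tube, with boundary values matching $U(e^{-2\pi s}t)$; an edge-of-the-wedge / Liouville-type argument then identifies $G$ with $(U(e^{-2\pi s}t)\eta,\zeta)$ everywhere. This is the standard Borchers/Wiesbrock/Florig computation transported to the one-particle level, and it is exactly what is done in \cite{L,LN,RV}.

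\textbf{Main obstacle.} The only delicate point is the interplay of the two unbounded positive operators $P$ (generator of $U$) and $\log\Delta_H$: one must justify the analytic continuation of $s\mapsto\Delta_H^{is}U(t)\Delta_H^{-is}$ into the strip and the exchange of limits needed to read off the boundary value $U(e^{-2\pi s}t)$, which requires controlling domains of the product $\Delta_H^{z}U(t)$ for complex $z$. All of this is handled by the positivity of $P$ (giving analyticity of $t\mapsto U(t)\eta$ in the upper half plane) together with the $\tfrac12$-strip analyticity of modular flow on $H+iH$; once these domain issues are set up, the conclusion $\Delta_H^{is}U(t)\Delta_H^{-is}=U(e^{-2\pi s}t)$ follows formally. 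Hence I would either cite \cite{L,LN} directly or reproduce that argument; there is no new idea required beyond the standard-subspace version of Borchers' theorem already available in the literature.
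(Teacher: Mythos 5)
The paper offers no proof of Theorem \ref{thm:Borch}: it is stated as a known result and attributed to \cite{L,LN}, which is exactly your primary route. Your supplementary sketch of the analytic-continuation argument (positivity of the generator giving upper-half-plane analyticity, the half-strip analyticity of the modular flow on $H+iH$, and an edge-of-the-wedge identification of boundary values) is the standard proof found in those references, so the proposal is correct and in agreement with the paper.
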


\subsection{Nets on Minkowski spacetime}
\label{s:nets-mink}

Let $U$ be a unitary positive energy representation of the Poincar\'e
group on a Hilbert space $\H$.  

A $U$-covariant \emph{net of standard subspaces} $\H$ on the set
$\W$ of wedge regions of the Minkowski spacetime is a map
\[
H: \W\ni W \longmapsto H(W)\subset\H
\]
that associates a closed real linear subspace $H(W)$ with each $W\in\W$, satisfying:
\begin{enumerate}\itemsep0mm
\item {\it Isotony}: If $W_1\subset W_2$ then $H(W_1)\subset H(W_2)$;
\item {\it Poincar\'e covariance}:  $U(g)H(W)=H(gW)$ ($W\in\W$, $g\in \Poi$);
\item {\it Reeh-Schlieder property}: $H(W)$ is cyclic $\forall \ W\in\W$;
\item {\it Locality}: 
For every wedge $W\in\W$ we have
 \[
 H(W')\subset H(W)'.
 \]
\end{enumerate}
The net is said to have the Bisognano-Wichmann property if 
\begin{enumerate}\itemsep0mm
\item[5.] {\it Bisognano-Wichmann property}: 
$\Delta^{it}_{H(W)}=U\big(\Lambda_W(-2\pi t)\big)$ for all $W\in\W$
and $t\in\RR$,  where $\Lambda_W$ is the boost subgroup
  fixing the wedge $W$ in the standard parametrization.
\end{enumerate}
Given a $U$-covariant net $H$ on $\W$,
one gets a net of closed, real linear subspaces on double cones $O$
defined by
\begin{equation}\label{HO2}
 H(O)\equiv\bigcap_{\W\ni W\supset O}H(W) \ .
\end{equation}
Note that $H(O)$ is not necessarily cyclic. If $H(O)$ is cyclic and
$H$ has the BW property, then
\[
H(W) = \overline{\sum_{O\subset W} H(O)}
\]
by Lemma \ref{inc}.

\subsection{Nets on the circle}
\label{s:nets-circ}

Let $\I$ be the set of nonempty, nondense, open connected intervals of
the unit circle $S^1= \{z \in \CC : |z|=1\}$.  Let $U$ be a
positive-energy representation of $\Mob$ on a Hilbert space $\H$.

\medskip

A \textbf{M\"obius covariant net} is a map $H$ which assigns to every
interval $I\in\I$ a von Neumann algebra $H(I)\subset \H$  satisfying the following properties:
\begin{enumerate}\itemsep0mm
\item {\it Isotony:} If $I_1,I_2\in\I$ and $I_1\subset I_2$, then  $H(I_1)\subset H (I_2)$;
\item {\it M\"obius covariance:} 
$U(g)H(I)=H(gI)$ ($I\in\I$, $g\in\Mob$);
\item 
{\it Reeh-Schlieder property:} $H(I)$ is cyclic for every $I\in\I$;
\item {\it Locality:} If $I_1,I_2\in\I$ and $I_1\cap I_2=\emptyset$, 
\[
H(I_1)\subset H(I_2)'.
 \]
\end{enumerate}
With these properties, $H(I)$ are standard subspaces,
and the net automatically satisfies the Bisognano-Wichmann property 
\begin{enumerate}\itemsep0mm
\item[5.] {\it Bisognano-Wichmann property}: 
$\Delta^{it}_{H(I)}=U\big(\delta_I(-2\pi t)\big)$ for all $
I\in\I$ and $t\in\RR$.
\end{enumerate}
Here, for $I_+$ the upper semicircle, $\delta_{I_+}(t)$ are the
dilations $x\mapsto e^tx$ in the line picture, and for every other
interval $I=g(I_+)$ ($g\in\Mob$), $\delta_I(t)=
g\circ\delta_{I_+}(t)\circ g^{-1}$. 

\medskip

A translation-dilation covariant net of standard subspaces on the
intervals of the real line $\RR$ can be defined in complete analogy. 
It is said to satisfy the Bisognano-Wichmann property if
$U(\delta_{\RR_+}(2\pi t))=\Delta_{\RR_+}^{-it}$. In this case, it is
possible to obtain a net on the circle; and also the converse is true: 
\begin{lemma}\cite{L} Let $H$ be a translation-dilation net on the
  line. It extends to a M\"obius covariant net on the circle if and
  only if the Bisognano-Wichmann property holds. The extension is unique.
\end{lemma}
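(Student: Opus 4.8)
The statement comprises two implications together with uniqueness, and I would treat them in that order, with the existence part resting on the Borchers/Wiesbrock machinery for half‑sided modular inclusions.

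\emph{M\"obius net $\Rightarrow$ Bisognano--Wichmann.} This direction is immediate from the material above: every M\"obius covariant net of standard subspaces on $S^1$ automatically satisfies $\Delta^{it}_{H(I)}=U(\delta_I(-2\pi t))$. Identifying $\RR$ with $S^1\setminus\{-1\}$ via the Cayley map, $\RR_+$ becomes an interval whose associated dilation subgroup $\delta_{\RR_+}$ is exactly the line‑picture dilation $x\mapsto e^tx$ of the translation--dilation group; since the M\"obius net restricts to the given line net by hypothesis, the circle Bisognano--Wichmann relation specialized to this interval reads $U(\delta_{\RR_+}(2\pi t))=\Delta_{\RR_+}^{-it}$, which is precisely the line Bisognano--Wichmann property.

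\emph{Bisognano--Wichmann $\Rightarrow$ existence of the extension.} The inputs here are: $H(\RR_+)$ is standard with modular group the dilations (Bisognano--Wichmann); $U(\tau(t))H(\RR_+)\subseteq H(\RR_+)$ for $t\geq 0$ (isotony, since $\RR_++t\subseteq\RR_+$); and the translation generator is positive. By the one‑particle Borchers theorem (Theorem~\ref{thm:Borch}) together with Wiesbrock's theorem realizing a half‑sided modular inclusion $\tau(1)H(\RR_+)\subseteq H(\RR_+)$ as the translation subgroup of a positive‑energy representation of $\Mob\cong\mathrm{PSL}(2,\RR)$ (cf.\ \cite{L}), the translations $U(\tau(\cdot))$ and the dilations $\Delta^{-i\,\cdot/2\pi}_{H(\RR_+)}$ generate a positive‑energy unitary representation $\widetilde U$ of $\Mob$ on $\H$ extending $U|_{\mathbf P}$, satisfying the relations \eqref{PDKI} with $I=e^{i\pi L_0}$. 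One then defines the candidate circle net by covariance, $H^c(g\RR_+):=\widetilde U(g)H(\RR_+)$ for $g\in\Mob$; this is well posed because the stabilizer of $\RR_+$ in $\Mob$ is the dilation subgroup, which preserves $H(\RR_+)$. M\"obius covariance and the Reeh--Schlieder property of $H^c$ hold by construction (unitary images of a cyclic subspace are cyclic), while isotony and locality are reduced, using covariance, to the two configurations ``$I\subseteq\RR_+$'' and ``$I$ contained in the complementary interval of $\RR_+$'', where they follow from isotony resp.\ locality of the given line net once one knows the consistency statement below.

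\emph{Consistency: $H^c$ restricts to $H$ on line intervals.} For right half‑lines this is immediate from $\mathbf P$‑covariance and $H^c(\RR_+)=H(\RR_+)$. For $\RR_-$ one first proves Haag duality of the line net, $H(\RR_+)'=H(\RR_-)$: by locality $H(\RR_-)\subseteq H(\RR_+)'$; by Reeh--Schlieder $H(\RR_-)$ is cyclic in $\H$; and $H(\RR_-)$ is invariant under $\Delta^{it}_{H(\RR_+)'}=\Delta^{-it}_{H(\RR_+)}=U(\delta_{\RR_+}(2\pi t))$ because the dilations $x\mapsto e^tx$ preserve $\RR_-$ and lie in $\mathbf P$; hence $H(\RR_+)'=H(\RR_-)$ by Lemma~\ref{inc}. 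On the other hand $\widetilde U(I)H(\RR_+)=H(\RR_+)'$, since $\widetilde U(I)$ and the modular conjugation $J_{H(\RR_+)}$ differ by $\widetilde U(j)$ where $j\in\Mob_2$ is the orientation‑reversing map $x\mapsto 1/x$ fixing $\RR_+$ setwise, which therefore preserves both $H(\RR_+)$ and $H(\RR_+)'$. Combining, $H^c(\RR_-)=\widetilde U(I)H(\RR_+)=H(\RR_+)'=H(\RR_-)$, and all left half‑lines follow by covariance. For a bounded interval $(a,b)=(a,\infty)\cap(-\infty,b)$ one uses Haag duality of the (already constructed) M\"obius net $H^c$ to get $H^c((a,\infty))=H((a,\infty))$ and $H^c((-\infty,b))=H^c((b,\infty))'=H((b,\infty))'=H((-\infty,b))$, and then identifies $H^c((a,b))$ with $H((a,b))$ through Wiesbrock's modular‑intersection (``two opposite half‑sided modular inclusions'', with dilation generators $\pm P$ and $\mp K$) applied to this overlapping pair of half‑lines. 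This intersection step is the main technical obstacle: a soft covariance argument does not suffice, because the relevant group element carrying a half‑line to a bounded interval lies outside $\mathbf P$, so one genuinely needs the modular‑theoretic description of the interval subspace.

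\emph{Uniqueness.} If $H^c_1$ and $H^c_2$ are M\"obius covariant extensions of $H$, both satisfy the automatic Bisognano--Wichmann property, so for each line interval $I$ the dilation one‑parameter group $\widetilde U_i(\delta_I(\cdot))=\Delta^{\mp i\cdot/2\pi}_{H^c_i(I)}$ is determined by $H^c_i(I)=H(I)$; since the dilation groups of two ``crossing'' line intervals already generate $\Mob$, the representations $\widetilde U_1$ and $\widetilde U_2$ coincide. As $H^c_i(I)=\widetilde U_i(g)H(\RR_+)$ for any $g$ with $g\RR_+=I$, we conclude $H^c_1=H^c_2$.
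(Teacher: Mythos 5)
The paper states this lemma without proof, citing \cite{L} (the underlying result is essentially the Guido--Longo--Wiesbrock extension theorem), so your attempt can only be measured against the standard argument there. Your easy direction (M\"obius $\Rightarrow$ BW) and your uniqueness argument are essentially correct, with the small caveat that the two ``crossing'' intervals whose dilation subgroups are to generate $\Mob$ must have four distinct endpoints on the circle: two half-lines both contain the point $\infty$ as an endpoint and their dilation subgroups only generate $\mathbf P$, so at least one of the two intervals must be bounded.

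The genuine gap is in the existence direction, at the construction of $\widetilde U$. A single half-sided modular inclusion --- equivalently the standard pair $(H(\RR_+),\tau)$ with positive energy --- yields via the one-particle Borchers theorem (Theorem~\ref{thm:Borch}) a positive-energy representation of the \emph{two-dimensional} translation--dilation group $\mathbf P$ only. Translations and dilations cannot ``generate a positive-energy unitary representation of $\Mob$'': they close on $\mathbf P$, and, more to the point, a positive-energy representation of $\mathbf P$ admits many mutually inequivalent extensions to $\Mob$ --- as Sect.~\ref{s:U1} of this paper emphasizes, all the lowest-weight representations $U^{(n)}$ restrict to the \emph{same} representation of $\mathbf P$. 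So no representation of $\Mob$ is determined by the data you feed into this step. The third generator (the rotation, resp.\ special conformal transformation) must be extracted from the net itself, namely from the modular data of a bounded-interval subspace such as $H((0,1))$ (equivalently, of the relative symplectic complement $H((1,\infty))'\cap H((0,\infty))$), via Wiesbrock's theorem on a pair of oppositely half-sided modular inclusions. You do invoke this tool, but only in the consistency check for bounded intervals, i.e.\ \emph{after} $\widetilde U$ has supposedly been constructed; in the correct argument it is needed already to define $\widetilde U$, and the identification of the resulting circle net with $H$ on bounded intervals is then a further nontrivial verification (a priori $H((0,1))$ could be a proper cyclic subspace of $H((1,\infty))'\cap H((0,\infty))$, and your proposed identification $H^c((a,b))=H^c((a,\infty))\cap H^c((-\infty,b))$ is a strong-additivity/duality statement that is not among the axioms). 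As written, the existence half of the proof does not go through.
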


\subsection{Brunetti-Guido-Longo construction}
\label{s:BGL}
The Brunetti-Guido-Longo construction relies on the 1--1 correspondence between standard subspaces and Tomita-Takesaki modular data.

\medskip

 \textbf{On Minkowski space.} \cite{BGL} Let $U$ be an
(anti-)unitary representation of $\mathcal{P}_+$, and $J_W$ and $K_W$
the anti-unitary reflection and the self-adjoint generator of the
one-parameter group of boosts associated with the wedge $W$
(i.e., $U(\Lambda_W(t))=e^{it K_W}$), respectively. The pair
$(J_W,\Delta_W\equiv e^{-2\pi K_W})$, satisfies \eqref{eq:TT}, thus one can
associate to any wedge $W\in\W$ a standard subspace $H(W)$ in a
covariant way. By positivity of the energy and the Borchers Theorem
\ref{thm:Borch}, the net $W\mapsto H(W)$
satisfies Isotony. Covariance, Locality and the Reeh-Schlieder and Bisognano-Wichmann
properties hold by construction. 

\medskip

\textbf{On the circle.} \cite{L} Let $U$ be an (anti-)unitary
representation of $\Mob$ and $J_I$ and $K_I$ the anti-unitary
reflection and the generator of the one-parameter group of dilations
associated with the interval $I$, respectively, and 
$\Delta_I=e^{-2\pi K_I}$. In analogy with the previous, we
can define a net $I\mapsto H(I)$. By positivity of the energy and
the Borchers theorem, the net $I\mapsto H(I)$ satisfies Isotony. Covariance, Locality and the Reeh-Schlieder and Bisognano-Wichmann
properties hold by construction. 
\begin{proposition}\label{prop:uuu}
There is a unique, up to unitary equivalence, net of standard
subspaces on the considered spacetimes (circle or Minkowski)
satisfying 1.--5. of Sect.\ \ref{s:nets-mink} resp.\ \ref{s:nets-circ}.  \end{proposition}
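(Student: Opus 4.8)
The plan is to show that, for the given (anti-)unitary representation $U$ of $\cP_+$ (resp.\ $\Mob_2$), every $U$-covariant net $H$ satisfying \emph{1.--5.} is unitarily equivalent to the Brunetti--Guido--Longo net of Sect.\ \ref{s:BGL} attached to $U$; since the latter is built from $U$ alone, this yields simultaneously existence and uniqueness up to unitary equivalence. I would treat the Minkowski case, the circle case being word-for-word the same with intervals, complementary intervals, dilations and $\Mob_2$ in place of wedges, opposite wedges, boosts and $\cP_+$; the translation--dilation net on the line then extends uniquely by the Lemma quoted at the end of Sect.\ \ref{s:nets-circ}.

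First, property \emph{5.} fixes the modular operator of each local subspace, $\Delta_{H(W)}^{it}=U(\Lambda_W(-2\pi t))$, so $\Delta_{H(W)}=e^{-2\pi K_W}$ is determined by $U$. Next I would establish \emph{wedge duality} $H(W')=H(W)'$. Since $\Lambda_{W'}(t)=\Lambda_W(-t)$, property \emph{5.} gives $\Delta_{H(W')}^{it}=\Delta_{H(W)}^{-it}=\Delta_{H(W)'}^{it}$; locality \emph{4.} gives $H(W')\subset H(W)'$, and this real subspace is invariant under $\Delta_{H(W)'}^{it}=U(\Lambda_W(2\pi t))$ because $\Lambda_W$ preserves $W'$; since $H(W')$ is cyclic by \emph{3.}, Lemma~\ref{inc} forces $H(W')=H(W)'$, hence in particular $J_{H(W')}=J_{H(W)'}=J_{H(W)}$.

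The heart of the argument is to show that the modular conjugations implement the geometric reflections: that $W\mapsto J_{H(W)}$, together with $U|_{\Poi}$, defines an (anti-)unitary representation $\wt U$ of $\cP_+$ with $\wt U(j_W)=J_{H(W)}$, where $j_W$ is the reflection sending $W$ onto $W'$. This is the one-particle version of geometric modular action (cf.\ \cite{BGL}); the only substantial input beyond \emph{1.--5.} is positivity of the energy, which via Borchers' Theorem~\ref{thm:Borch} controls the commutation of $J_{H(W)}$ with the translations that move $W$ and so pins down the products $J_{H(W_1)}J_{H(W_2)}$ to act as $U$ of the corresponding elements of $\Poi$. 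Granting this, $\wt U$ is an (anti-)unitary extension of $U|_{\Poi}$ to $\cP_+$; since $U|_{\Poi}$ is either real or a direct sum of a representation with its conjugate (the cases (1.a), (1.c) discussed after Proposition~\ref{prop:unicom}), such an extension is unique up to unitary equivalence, so $\wt U$ agrees with the given $U$ up to conjugation by a unitary in $U(\Poi)'$. In particular $J_{H(W)}=J_W$ modulo that unitary.

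Putting the pieces together, $(J_{H(W)},\Delta_{H(W)})$ coincides, up to a unitary commuting with $U|_{\Poi}$, with the pair $(J_W,\Delta_W)$ used in the Brunetti--Guido--Longo construction; hence $H(W)$ is the associated standard subspace and, by \eqref{HO2}, the whole net on double cones is the Brunetti--Guido--Longo net. The residual freedom is genuinely present --- e.g.\ $H(W)\mapsto i\,H(W)$, equivalently $J_{H(W)}\mapsto-J_{H(W)}$, again satisfies \emph{1.--5.} and is implemented by the scalar $i\in U(\Poi)'$ --- which is why the statement reads ``up to unitary equivalence''. The delicate point is the geometric modular action of the previous paragraph; everything else is bookkeeping with Lemmas~\ref{inc}, \ref{inc2} and Proposition~\ref{prop:unicom}.
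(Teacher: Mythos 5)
Your proposal is correct and follows essentially the same route as the paper: the key step in both is that the wedge (resp.\ interval) modular conjugations extend $U$ to an (anti-)unitary representation of $\cP_+$ (resp.\ $\Mob_2$) --- the paper cites \cite{GL95} resp.\ \cite{L} where you sketch the geometric modular action argument --- after which uniqueness follows from Proposition~\ref{prop:unicom} and the bijection between Tomita operators and standard subspaces. Your added details (wedge duality via Lemma~\ref{inc}, the explicit residual freedom in $U(\Poi)'$) are consistent elaborations, not a different method.
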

\begin{proof}
{\it On Minkowski space.} Let $W\mapsto H(W)$ be a Poincar\'e
covariant net of standard subspaces on wedges  satisfying the
Bisognano-Wichmann property w.r.t.\ a positive-energy unitary
Poincar\'e representation $U$. Then the modular conjugations of wedge
subspaces extend $U$ to an (anti-)unitary representation of $\mathcal
\cP_+$, cf.\ \cite{GL95}. We conclude by the unitary equivalence of
(anti-)unitary extensions in Proposition \ref{prop:unicom} and the 1--1 correspondence between Tomita operator and standard subspaces.

\smallskip

{\it On the circle.} Let $I\mapsto H(I)$ be a M\"obius covariant net
of standard subspaces on intervals satisfying 1.--5. Then $U$ extends
to an (anti-)unitary representation of $\Mob_2$ through interval
modular conjugations \cite{L}. The conclusion again follows by
Proposition \ref{prop:unicom}  and the 1--1 correspondence between Tomita operators and standard subspaces.
\end{proof}

\subsection{Chiral current models \cite{GLW}}
\label{s:U1}

For $n\in\NN$ consider the Hilbert space $\H_n$ defined by the
closure of the space of square integrable functions on 
  $\RR$ w.r.t.\ the inner product   
$$(f,g)_n=\int_0^\infty p^{2n-1}dp\,\ol{\wh f(p)}\wh g(p).$$
(Via the Cayley transform, it can be identified with a space of
square-integrable functions on $S^1$.)
Its null space contains the polynomials of degree $2(n-1)$. The associated
symplectic form on the real-valued functions is
$$\omega_n(f,g)\equiv\Im (f,g)_n=\frac{(-1)^{n-1}}{2}\int 
f(x)g(y)\delta^{(2n-1)}(x-y)\, dx\,dy.$$
$\H_n$ carries a unitary positive-energy representation 
  $U^{(n)}$ of $\Mob$ by 
\begin{equation}\label{mbact}
(U^{(n)}(g)f)(x) = \big(\frac{dg(x)}{dx}\big)^{-2(n-1)}\, f(g(x))=
(cx-a)^{2(n-1)}\, f(g(x)),
\end{equation}
in particular 
\begin{equation}\label{Iact}
(If)(x) = x^{2(n-1)}\cdot f(I(x)).
\end{equation}
The self-adjoint generators act by 
\begin{equation}\label{PD}(Pf)(x)=i\partial_xf(x), \quad
  (Df)(x)=i(x\partial_x-(n-1))f(x),
\end{equation}
\begin{equation}\label{K} (Kf)(x)=i(x^2\partial_x-2(n-1)x)f(x).
\end{equation}
$U^{(n)}$ is the positive-energy representation of lowest weight $n$.

\medskip

Applying the BGL construction (Sect.\ \ref{s:BGL}) to the
representation $U^{(n)}$, one obtains the net of real subspaces
$$I\mapsto H_n(I)=\overline {\left\{f\in \C^\infty
     (\RR,\RR):\mathrm{supp\,}f\subset I\right\}}^{\|\cdot\|_n}\subset\H_n,$$
on which $\Mob$ acts covariantly. (``Modular localization'' is the
fact that the support property arises as a consequence of the
definition of $H_n(I)$ via modular theory. Locality is then seen directly
from the symplectic form $\omega_n$.) 
The second quantization of $H_n$ (cf.\ Sect.\ \ref{s:secq}) gives the
net of von Neumann algebras generated by the quasiprimary chiral
current\footnote{A quasiprimary chiral current of
    dimension $n$ is a field
  on $S^1$ transforming under M\"obius transformations like $U(g)j(z)U(g)^* = (dg(z)/dz)^n\cdot
  j(g(z))$.} $j_n$ of dimension $n$.

\medskip

Note that 
$$(f,g)_n=(\partial_x f,\partial_x g)_{n-1},$$
i.e., the derivative $\partial_x$ is a unitary operator $\H_n\to\H_{n-1}$. This
operator intertwines the actions of the generators $P$ and $D$, but
not of $K$. Thus, $\partial_x:\H_n\to\H_{n-1}$ implements the unitary
equivalence of the restrictions of $U^{(n)}$ to the translation-dilation
subgroup, cf.\ Sect.\ \ref{s:mob}. In the language
of quantum field theory, this is the statement that a quasiprimary
current of dimension $n$ and the derivative of a quasiprimary
current of dimension $n-1$ transform in the same way under
translations and dilations. 

The distinction is only seen by the action
of $K$, or $I$. This motivates the following   
\begin{lemma} \label{l:geo}
Let $U$ be a representation $U$ of $\Mob$ whose 
restriction to the translation-dilation subgroup $\mathbf{P}$ is given by (\ref{PD}). Suppose that $I$ acts
geometrically, i.e., 
\begin{equation}\label{Igeo}
(If)(x)= g(x)f(I(x))
\end{equation}
with some function $g$. Then $g(x)=x^{2(n-1)}$, and $U=U^{(n)}$.
\end{lemma}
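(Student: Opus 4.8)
The plan is to pin down the function $g$ by exploiting the commutation relations \eqref{PDKI} that $K$ must satisfy with $P$ and $D$, using the fact that $P$ and $D$ are already fixed by hypothesis to be given by \eqref{PD}. First I would observe that $I$ as in \eqref{Igeo} is, up to the prefactor $g$, a fixed geometric operation ($x\mapsto -1/x$), so the conjugate $K:=IPI$ is a first-order differential operator whose coefficients are determined by $g$; concretely, writing $g(x)=x^{2(n-1)}h(x)$ to factor out the ``expected'' part, one gets $(Kf)(x)=i(x^2\partial_x-2(n-1)x)f(x)$ plus a correction term of the form $-i\,x^2\,\frac{h'(x)}{h(x)}f(x)$ coming from differentiating the extra factor $h$. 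So the content of the lemma is exactly that this correction vanishes, i.e.\ $h$ is constant.

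Next I would impose that $U$ is a representation of $\Mob$, which by \eqref{PDKI} forces $L_0=\tfrac12(P+K)$ to generate a one-parameter rotation group — in particular $L_0$ must have discrete spectrum and, crucially, the commutator relation $i[P,K]=2D$ (or the full $\mathrm{sl}(2,\RR)$ relations $[D,P]$, $[D,K]$, $[P,K]$) must hold with $D$ as in \eqref{PD}. Computing $i[P,K]$ with the corrected $K$ and comparing with $2D=2i(x\partial_x-(n-1))$ yields a differential equation for $\log h$; the extra term contributes something like $x^2(\log h)''+\dots$, and matching forces $(\log h)'\equiv 0$. Alternatively, and perhaps more cleanly, I would use $IDI=-D$: conjugating $D=i(x\partial_x-(n-1))$ by the geometric inversion sends $x\partial_x\mapsto -x\partial_x$ correctly, but the prefactor $h$ again produces a multiplicative correction $i\,x\,\frac{h'(x)}{h(x)}$ that must vanish, giving $h'=0$ directly. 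Either way one concludes $g(x)=c\,x^{2(n-1)}$ for a constant $c$, and then $I^2=\mathbf 1$ (since $I$ is an involution as the $\pi$-rotation, by \eqref{PDKI}) forces $c^2=1$; the lowest-weight/positivity normalization, or continuity at the appropriate reference vector, picks out $c=1$, so $I$ acts exactly as in \eqref{Iact} and hence $U=U^{(n)}$ by \eqref{mbact}, since $\mathbf P$ together with $I$ generate $\Mob$.

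Finally I would invoke Lemma \ref{inc2} / the 1--1 correspondence to upgrade ``the generators agree'' to ``the representations agree'': once $P,D,I$ coincide with those of $U^{(n)}$, and these generate $\Mob$, the unitary one-parameter groups they integrate to agree, so $U=U^{(n)}$ as representations of $\Mob$.

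The main obstacle I anticipate is the domain/self-adjointness bookkeeping: $P$, $D$, $K$ are unbounded and the manipulations ``conjugate by $I$'' and ``compute $[P,K]$'' are formal unless one works on a common invariant core (e.g.\ smooth functions of compact support away from $0$, or the analytic vectors for $L_0$). I would handle this by first doing the computation formally to extract the ODE for $h$, then remarking that the hypothesis that $U$ is a genuine unitary representation of $\Mob$ (not merely a collection of formal generators) guarantees the relations hold on a dense invariant core, which is all that is needed to conclude $h$ is constant; the rest is elementary.
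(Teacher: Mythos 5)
Your proposal is correct and follows essentially the same route as the paper's proof, which likewise inserts \eqref{Igeo} into $ID+DI=0$ to conclude that $g$ is homogeneous of degree $2(n-1)$, uses $I^2=\id$ to get $g(x)=g_0x^{2(n-1)}$ with $g_0=\pm1$, then computes $K=IPI$, integrates $P,D,K$ to \eqref{mbact}, and reads off \eqref{Iact} to fix $g_0=+1$. The only point where you are vaguer than the paper is the final sign determination: instead of appealing to a ``lowest-weight/positivity normalization,'' the clean argument is that once the generators are fixed, $I=e^{i\pi L_0}$ is completely determined by the integrated representation, and its geometric form \eqref{Iact} has coefficient $+x^{2(n-1)}$.
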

\begin{proof} By direct computations, using (\ref{PDKI}): Insertion of (\ref{Igeo}) into
  $ID+DI=0$ implies that $g$ is homogeneous of degree $2(n-1)$, hence
  $g(x)=g_0x^{2(n-1)}$. $I^2=\id$ implies $g_0=\pm1$. Then $K=IPI$
  implies (\ref{K}), which together with $(\ref{PD})$ integrates to
  (\ref{mbact}). Then (\ref{Iact}) implies $g_0=1$. 
\end{proof}

The following is a reformulation of Lemma \ref{l:geo}. 
\begin{proposition}\label{prop:quasi} For $n\in \NN$
    let $\H$ with inner product
  $(f,f)=\int_0^\infty dp \, \vert\widehat f(p)\vert^2 p^{2n-1}$ be the
   anti-Fourier transform of the Hilbert space $L^2(\RR_+,p^{2n-1}dp)$. 
Let 
$$I\mapsto H(I)=\overline{\{f\in\C_0^\infty(\RR,\RR), \mathrm{supp
  }\,f\subset I \}}\subset \H$$ 
be a $\Mob$-covariant net of standard subspaces with the natural action of
translations and dilations on $\H$. Then $H(I)=\overline{\{ j_n(f)\Omega\in\H: \mathrm{supp }\,f\subset I \}}$ where $j_n$ is the quasiprimary field of dimension $n$.  In particular, $H$ is the one-particle net $H_n$ associated with $U^{(n)}$ (up to multiplicity). 
\end{proposition}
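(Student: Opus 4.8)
The plan is to recognize Proposition \ref{prop:quasi} as a direct application of Lemma \ref{l:geo} once we verify its hypotheses. First I would invoke the Brunetti--Guido--Longo construction (Sect.\ \ref{s:BGL}) together with Proposition \ref{prop:uuu}: the data consist of a $\Mob$-covariant net of standard subspaces on $\H$ satisfying Isotony, M\"obius covariance, Reeh--Schlieder, Locality and (automatically) Bisognano--Wichmann, and by Proposition \ref{prop:uuu} such a net is unique up to unitary equivalence once the unitary representation $U$ of $\Mob$ on $\H$ is fixed. So the whole statement reduces to identifying the representation $U$. By hypothesis the translation and dilation subgroups act ``naturally'' on $\H$, i.e.\ exactly by the formulas (\ref{PD}) on the anti-Fourier transform of $L^2(\RR_+, p^{2n-1}dp)=\H_n$; in particular $U|_{\mathbf P}$ is the unique positive-energy representation of $\mathbf P$, so the only thing left undetermined is the action of $I$ (equivalently of $K$), by the discussion preceding Lemma \ref{l:geo}.

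Next I would argue that $I$ necessarily acts geometrically, i.e.\ satisfies (\ref{Igeo}). The key point is modular localization: the subspace $H(I_+)$ attached to the upper semicircle is, by construction, $\overline{\{f\in\C_0^\infty(\RR,\RR):\mathrm{supp\,}f\subset \RR_+\}}$, and the conformal inversion $I$ maps $I_+$ to $I_-$ (the lower semicircle, $\RR_-$ in the line picture), hence $U(I)H(I_+)=H(I_-)=\overline{\{f\in\C_0^\infty(\RR,\RR):\mathrm{supp\,}f\subset\RR_-\}}$. More generally, since $I$ normalizes $\Mob$ and permutes intervals, $U(I)$ must carry each local subspace $H(J)$ onto $H(I(J))$. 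Because $\C_0^\infty$-functions supported in $J$ are sent to real functions supported in $I(J)$, and because multiplication/composition operators are essentially the only unitaries implementing such a spatial permutation of supports while preserving the real structure and the given $P,D$-action, one deduces that $U(I)$ is of the form $(If)(x)=g(x)f(I(x))$ for some real function $g$; the support-transport argument forces $g$ not to mix points, and covariance with dilations (which act geometrically) forces $g$ to be multiplicative. Once (\ref{Igeo}) is established, Lemma \ref{l:geo} applies verbatim and yields $g(x)=x^{2(n-1)}$ and $U=U^{(n)}$.

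Finally, with $U=U^{(n)}$ identified, Proposition \ref{prop:uuu} (uniqueness of the net) gives that $H$ coincides with the BGL net $H_n$ associated with $U^{(n)}$, which by the discussion in Sect.\ \ref{s:U1} is precisely $\overline{\{f\in\C^\infty(\RR,\RR):\mathrm{supp\,}f\subset I\}}^{\|\cdot\|_n}$, i.e.\ $\overline{\{j_n(f)\Omega:\mathrm{supp\,}f\subset I\}}$ after second quantization, with multiplicity one. The ``up to multiplicity'' clause covers the case where one starts from $U^{(n)}\otimes\eins_\K$ rather than the irreducible $U^{(n)}$ itself; this is handled by tensoring the intertwiners with the identity on the multiplicity space $\K$, exactly as in the proof of Proposition \ref{prop:unicom}.

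I expect the main obstacle to be the second step: rigorously deducing the \emph{geometric} form (\ref{Igeo}) of $U(I)$ from the abstract hypotheses, i.e.\ showing that a unitary implementing the correct permutation of the support-localized real subspaces, and intertwining the prescribed $P$ and $D$ actions, is forced to be a weighted composition operator. The natural route is to use Lemma \ref{inc2} (a unitary preserving a standard subspace commutes with its modular data) together with the fact that $U(I)$ conjugates the dilation group $\delta_{I_+}$ to $\delta_{I_-}$ and the translations to the ``translations toward $0$''; combined with the explicit description of $H_n(\RR_+)$ and $H_n(\RR_-)$ and the transitivity of $\Mob$ on intervals, this pins down $U(I)$ up to the function $g$, after which Lemma \ref{l:geo} closes the argument. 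The remaining steps are essentially bookkeeping with the already-established uniqueness results.
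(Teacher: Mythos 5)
Your overall architecture --- reduce to identifying the $\Mob$-representation via Proposition \ref{prop:uuu}, show that the conformal inversion acts geometrically, then invoke Lemma \ref{l:geo} --- matches the paper's intent, and your first and last steps are fine. The gap is exactly where you suspect it, and it is not closable by the route you sketch. The claim that a unitary which permutes the support-localized subspaces $H(J)\mapsto H(I(J))$, preserves the real structure and intertwines the natural $P$ and $D$ of \eqref{PD} must be a weighted composition operator is false. Counterexample: $\partial_x:\H_n\to\H_{n-1}$ is unitary, intertwines $P$ and $D$, and carries $H(J)=\overline{\{f:\mathrm{supp\,}f\subset J\}}^{\|\cdot\|_n}$ onto $H_{n-1}(J)$ (derivatives of real test functions supported in $J$ are exactly the test functions supported in $J$ with vanishing integral, and these are $\|\cdot\|_{n-1}$-dense in $H_{n-1}(J)$). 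Hence $\partial_x^{-1}\,U^{(n-1)}(I)\,\partial_x$ satisfies every constraint you impose --- it maps each $H(J)$ onto $H(I(J))$, conjugates the dilation group of $\RR_+$ to that of $\RR_-$, is compatible with the modular data as in Lemma \ref{inc2} --- yet it acts geometrically on $f'$ and not on $f$, so it is not of the form \eqref{Igeo}. In other words, the hypotheses you invoke cannot distinguish the net of $j_n$ from the net of $\partial j_{n-1}$ (or $\partial^k j_{n-k}$): all of these yield the same inner product $(f,f)_n$, the same local subspaces, and the same translation-dilation action, but pairwise inequivalent $\Mob$-representations.

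The paper resolves this differently: it does not attempt to derive the geometric action of $I$ from the abstract localization data, but takes it as an input (in the application of Sect.\ 4 it is inherited from the geometric action of the four-dimensional conformal inversion on the time axis, restricted to each $\widetilde T_k$). The proof then observes that the only competing identifications of the generating current are precisely the derivative currents $\partial j_{n-1}$, etc., and for those the inversion would act geometrically on $f'$ but not on its primitive $f$ --- contradicting the known geometric action on $f$ itself. So to repair your argument you must either add the hypothesis \eqref{Igeo} explicitly (after which Lemma \ref{l:geo} indeed finishes the proof as you say), or import the geometric action from the ambient spacetime covariance as the paper does; it cannot be manufactured from support covariance and the $P,D$-action alone.
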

\begin{proof}
By the Bisognano-Wichmann property we know that $H$ is
the canonical BGL net associated with the covariant
$\Mob$-representation. Thus there exists a current $j$ generating $H$,
and it remains to identify $j$ with the quasiprimary
current $j_n$ of dimension $n$. 

Suppose that the inner product $(f,f)=\norm{j(f)\Omega}^2$ were
misidentified, say, for simplicity, as
$(f,f)=\norm{j_{n-1}'(f)\Omega}^2$. This is possible since $j_n$ and
$j'_{n-1}=\partial j_{n-1}$ share the same scaling dimension, and the
same translation-dilation covariant representation (but inequivalent
$\Mob$ covariant representations). Then the conformal inversion would act
geometrically on derivatives as $f'$ because $j_{n-1}'(f)=-j_{n-1}(f')$, but not
on its primitive $f$. This is a contradiction.
As a consequence $j=j_n$ and $H=H_n$.
\end{proof}

\subsection{Second quantization and nets of von Neumann algebras}
\label{s:secq} 

With $\H$ a Hilbert space and $H\subset\H$ a real linear subspace,
$R_+(H)$ is the von Neumann algebra on the symmetric Fock space
$\mathrm{F}_+(\H)$ generated by the CCR operators:
\begin{equation}\label{Rpm}
R_+(H) \equiv \{\mathrm{w}(f): f\in H\}'',
\end{equation}
with $\mathrm{w}(f)$ the Weyl unitaries on $\mathrm{F}_+(\H)$ defined 
on the coherent states $e^g\in \mathrm{F}_+(\H)$ ($f\in \H$) by their 
action $\mathrm{w}(f)e^g = e^{-\frac12(f,f)-(f,g)}\cdot e^{f+g}$. If
$\varphi(f)$ is the selfadjoint generator of the unitary one-parameter
group $\mathrm{w}(f)$, this standard construction ensures the identification
of the ``one-particle vector'' $\varphi(f)\Omega \in \mathrm{F}_+(\H)$
with $f\in\H\subset \mathrm{F}_+(\H)$. 
By continuity we have that
\[
R_+(H) = R_+(\ol H)\ .
\]
Moreover the Fock vacuum vector $\Omega$ is cyclic (resp.\ separating)
for $R_+(H)$ iff $\ol H$ is cyclic (resp.\ separating). 

\newpage

Second quantization respects
the lattice structure \cite{A} and the modular structure
\cite{LRT,LMR}. We recall these basic properties. 
For a standard subspace $H\subset\H$, we denote by $S^+_H$, $J^+_H$, $\Delta^+_H$ the Tomita operators associated with $(R_+(H),\Omega)$, and by
$\mathit{\Gamma}_+(T)$ the Bose second quantization of a one-particle
operator $T$ on $\H$, $\mathit{\Gamma}_+(T) e^f = e^{Tf}$.
\begin{proposition}\label{prop:secquant} \cite{A,LRT,LMR}
Let $H$ and $H_a$ be closed, real linear subspaces of $\H$. We have
\begin{itemize}\itemsep0mm
\item[$(a)$] $R_+(H)' = R_+(H')$; 
\item[$(b)$] $R_+(\sum_a H_a) = \bigvee_a R_+(H_a)$;
\item[$(c)$] $R_+(\bigcap_a H_a) = \bigcap_a R_+(H_a)$.
\item[$(d)$] If $H$ is standard, then $S^+_H = \mathit{\Gamma}_+(S_H)$, \ $J^+_H = \mathit{\Gamma}_+(J_H)$, \ $\Delta^+_H= \mathit{\Gamma}_+(\Delta_H)$. 
\end{itemize}
\end{proposition}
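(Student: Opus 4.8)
The plan is to establish $(d)$ and the duality $(a)$ first for a \emph{standard} subspace $H$ -- the main point -- and then to obtain the general case of $(a)$ by an orthogonal decomposition, with $(b)$ and $(c)$ following directly. I first record the trivial half of $(a)$, valid for an arbitrary closed real $H$: the Weyl relations give $\mathrm{w}(f)\mathrm{w}(g)=e^{-2i\Im(f,g)}\,\mathrm{w}(g)\mathrm{w}(f)$, so for $g\in H'$ the unitary $\mathrm{w}(g)$ commutes with every generator $\mathrm{w}(f)$, $f\in H$, of $R_+(H)$; hence $R_+(H')\subseteq R_+(H)'$.

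The core of the argument -- and the step I expect to be the main obstacle -- is $(d)$ for standard $H$. Here $\Omega$ is cyclic and separating for $R_+(H)$, so the Tomita operator $S^+_H$ is the closure of $x\Omega\mapsto x^\ast\Omega$, $x\in R_+(H)$, and one must identify it with $\mathit{\Gamma}_+(S_H):=\mathit{\Gamma}_+(J_H)\,\mathit{\Gamma}_+(\Delta_H^{1/2})$, a closed anti-linear operator (the product of the anti-unitary involution $\mathit{\Gamma}_+(J_H)$ and the positive self-adjoint $\mathit{\Gamma}_+(\Delta_H^{1/2})$). I would compare these two closed anti-linear operators on a common core built from coherent vectors: using the explicit action $\mathrm{w}(f)e^g=e^{-\frac12(f,f)-(f,g)}e^{f+g}$, the relation $\mathrm{w}(f)^\ast=\mathrm{w}(-f)$, and the identifications of Sect.\ \ref{s:secq}, the action of $S^+_H$ on such vectors is an explicit Gaussian expression which one matches with that of $\mathit{\Gamma}_+(S_H)$. (Equivalently: $t\mapsto\mathit{\Gamma}_+(\Delta_H^{it})$ implements a one-parameter automorphism group of $R_+(H)$ -- immediate from $\mathit{\Gamma}_+(\Delta_H^{it})\,\mathrm{w}(f)\,\mathit{\Gamma}_+(\Delta_H^{-it})=\mathrm{w}(\Delta_H^{it}f)$ together with $\Delta_H^{it}H=H$ -- which fixes $\Omega$ and satisfies the KMS condition, so uniqueness of the modular group gives $\Delta^+_H=\mathit{\Gamma}_+(\Delta_H)$.) Once $S^+_H=\mathit{\Gamma}_+(S_H)$ is established, uniqueness of the polar decomposition and functoriality of $\mathit{\Gamma}_+$ yield also $J^+_H=\mathit{\Gamma}_+(J_H)$ and $\Delta^+_H=\mathit{\Gamma}_+(\Delta_H)$, i.e.\ $(d)$. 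The delicate points are the choice of a genuine core and the closedness bookkeeping for the unbounded anti-linear operators; everything else here is routine (cf.\ \cite{A,LRT,LMR}).

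With $(d)$ in hand, $(a)$ for standard $H$ follows from Tomita--Takesaki: $R_+(H)'=J^+_HR_+(H)J^+_H=\mathit{\Gamma}_+(J_H)R_+(H)\mathit{\Gamma}_+(J_H)=R_+(J_HH)=R_+(H')$, using $\mathit{\Gamma}_+(J_H)\mathrm{w}(f)\mathit{\Gamma}_+(J_H)=\mathrm{w}(J_Hf)$ and the standard-subspace identity $J_HH=H'$. For an arbitrary closed real $H$ I would use the orthogonal decomposition $\H=\H_1\oplus\H_2\oplus\H_3$ with $\H_2:=H\cap iH$, $\H_3:=(H+iH)^{\perp}$ and $\H_1:=\H\ominus(\H_2\oplus\H_3)$; writing $H_1:=H\cap\H_1$ one has $H=H_1\oplus\H_2\oplus\{0\}$ (for $\xi\in H$, $P_{\H_2}\xi\in\H_2\subseteq H$, so $\xi-P_{\H_2}\xi\in H\cap\H_1$), and a short check shows $H_1$ is standard in $\H_1$. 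Since $\mathrm{F}_+(\H)=\mathrm{F}_+(\H_1)\otimes\mathrm{F}_+(\H_2)\otimes\mathrm{F}_+(\H_3)$ and $R_+(H)=R_+(H_1)\otimes B(\mathrm{F}_+(\H_2))\otimes\CC\eins$ (the Fock representation of the Weyl algebra over $\H_2$ being irreducible, and $R_+(\{0\})=\CC\eins$), taking commutants and invoking the standard case gives
\begin{align*}
R_+(H)'&=R_+(H_1)'\otimes\CC\eins\otimes B(\mathrm{F}_+(\H_3))=R_+(H_1')\otimes R_+(\{0\})\otimes R_+(\H_3)\\
&=R_+\big(H_1'\oplus\{0\}\oplus\H_3\big)=R_+(H'),
\end{align*}
where $H'=H_1'\oplus\{0\}\oplus\H_3$ because $H=H_1\oplus\H_2\oplus\{0\}$ splits orthogonally and the symplectic complement respects such splittings, the symplectic complement of a complex subspace coinciding with its orthogonal complement ($H_1'$ here denoting the symplectic complement inside $\H_1$).

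Finally $(b)$ and $(c)$. For $(b)$, monotonicity gives $\bigvee_aR_+(H_a)\subseteq R_+(\sum_aH_a)$; conversely, for $f=\sum_jf_{a_j}$ a finite sum with $f_{a_j}\in H_{a_j}$ the Weyl relations write $\mathrm{w}(f)$ as a phase times $\prod_j\mathrm{w}(f_{a_j})\in\bigvee_aR_+(H_a)$, and strong continuity of $f\mapsto\mathrm{w}(f)$ together with $R_+(V)=R_+(\overline{V})$ extend this to all $f\in\overline{\sum_aH_a}$, so $R_+(\sum_aH_a)\subseteq\bigvee_aR_+(H_a)$. For $(c)$, monotonicity gives $R_+(\bigcap_aH_a)\subseteq\bigcap_aR_+(H_a)$; for the reverse inclusion take commutants and use $(a)$, $(b)$ and the identity $(\bigcap_aH_a)'=\overline{\sum_aH_a'}$ (valid for closed real subspaces, being the symplectic-complement form of the De~Morgan rule for orthogonal complements):
\begin{align*}
\Big(\bigcap\nolimits_aR_+(H_a)\Big)'&=\bigvee\nolimits_aR_+(H_a)'=\bigvee\nolimits_aR_+(H_a')\\
&=R_+\big(\overline{\textstyle\sum_aH_a'}\big)=R_+\big((\textstyle\bigcap_aH_a)'\big)=R_+\big(\textstyle\bigcap_aH_a\big)',
\end{align*}
and taking commutants once more yields $(c)$.
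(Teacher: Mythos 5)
The paper offers no proof of this proposition: it is recalled from \cite{A,LRT,LMR}, so there is no internal argument to compare yours with. Your proof is essentially the standard one from those references, and its architecture checks out: the easy inclusion $R_+(H')\subseteq R_+(H)'$ from the Weyl commutation relations; duality for standard $H$ via Tomita--Takesaki and $J_HH=H'$; the reduction of a general closed real $H$ to the standard case through the splitting $\H=\H_1\oplus(H\cap iH)\oplus(H+iH)^{\perp}$, with $R_+(H\cap iH)=\B(\mathrm{F}_+(H\cap iH))$ by irreducibility of the Fock representation, $R_+(\{0\})=\CC\eins$, and the (elementary, in this case) tensor-product commutation theorem; $(b)$ from the Weyl product formula, strong continuity of $f\mapsto\mathrm{w}(f)$ and $R_+(V)=R_+(\ol V)$; and $(c)$ from $(a)$, $(b)$ and the De Morgan identity $(\bigcap_aH_a)'=\overline{\sum_aH_a'}$ --- noting that $H''=H$ holds for every closed real subspace, so the final double-commutant step is legitimate.

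The one place where you have only gestured at the argument is the one you yourself identify as the crux, namely $(d)$, and two caveats are in order there. First, the parenthetical KMS argument is not ``equivalent'' to $(d)$: uniqueness of the modular group identifies only $\Delta^+_H=\mathit{\Gamma}_+(\Delta_H)$ and says nothing about $J^+_H$ or $S^+_H$, so the coherent-vector computation cannot be bypassed. Second, if one actually performs that computation with the paper's literal convention $\mathrm{w}(f)e^g=e^{-\frac12(f,f)-(f,g)}e^{f+g}$, one gets $\mathrm{w}(f)\Omega=e^{-\frac12(f,f)}e^{f}$ and $\mathrm{w}(f)^*\Omega=\mathrm{w}(-f)\Omega=e^{-\frac12(f,f)}e^{-f}$, hence $S^+_He^f=e^{-f}$ for $f\in H$, whereas $\mathit{\Gamma}_+(S_H)e^f=e^{f}$; i.e.\ the naive match produces $S^+_H=\mathit{\Gamma}_+(-S_H)$ and $J^+_H=\mathit{\Gamma}_+(-J_H)$ (equivalently, with this convention $\varphi(f)\Omega=-if$ rather than $f$). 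This is a phase-convention artefact in the definition of $\mathrm{w}$, harmless for $(a)$--$(c)$ since $-J_HH=J_HH=H'$, but it shows that the ``explicit Gaussian expression which one matches'' is not quite as routine as your closing remark suggests: a complete proof must first fix the convention so that $\varphi(f)\Omega=f$ and then carry out the match on a genuine core, since that is where the entire content of $(d)$ lives.
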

Given the canonical BGL-net $H_U$ associated with a unitary
positive-energy representation $U$ of $\Poi$ or of
  $\Mob$, respectively, its second quantization net 
\[
\A(W) \equiv R_+\big(H_U(W)\big)\ ,\quad W\in\W\ , \quad\hbox{resp.}\quad 
\A_n(I) \equiv R_+\big(H_{U^{(n)}}(I)\big)\ ,\quad I\in\I\ , 
\]
is the free field net, i.e., $\A(W)$ is generated by Weyl operators
$\mathrm{w}(f)=e^{i\varphi(f)}$ of free Wightman fields smeared with
real test functions supported in $W$, and
$\A_n(I)$ is generated by Weyl operators
$\mathrm{w}(f)=e^{ij_n(f)}$ of the quasiprimary current
of dimension $n$, smeared in $I$. The case $n=1$ is the canonical
$U(1)$ current. 
 
These nets satisfy the
usual assumptions on nets of von Neumann algebras of local
observables. 

\medskip 

\noindent {\bf On Minkowski space.}
\begin{itemize}\itemsep0mm
\item {\it Isotony}: $\A(W_1)\subset \A(W_2)$ if $W_1\subset W_2$;
\item {\it Poincar\'e covariance}: $U$ is a positive-energy
  representation of $\Poi$, and $U(g)\A(W)U(g)^* = \A(gW)$,\ \  $g\in \Poi$;
\item {\it Vacuum with Reeh-Schlieder property}: there exists a unique
  (up to a phase) $U$-invariant vector $\Omega\in\H$, and $\Omega$ is  
cyclic and separating for $\A(W)$ for all $W\in\W$;
\item {\it Locality}: $\A(W')\subset \A(W)'$.
\end{itemize}
In addition, for the canonical free field nets the {\it Bisognano-Wichmann property} holds:
\[
\Delta^{it}_{\A(W),\Omega}=U\big(\Lambda_W(-2\pi t)\big),\quad 
W\in\W,\; t\in\RR \ , 
\]
where $\Delta_{\A(W),\Omega}$ is the modular operator of $(\A(W),\Omega)$.

\medskip

\noindent {\bf On the circle.}
\begin{itemize}\itemsep0mm
\item {\it Isotony}:  $\A(I_1)\subset \A(I_2)$ if $I_1\subset I_2$;
\item {\it M\"obius covariance}: $U$ is a positive-energy
  representation of $\Mob$, and $U(g)\A(I)U(g)^* = \A(gI)$,\ \  $g\in \Mob$;
\item {\it Vacuum}: There exists a unique
  (up to a phase) $U$-invariant vector $\Omega\in\H$;
\item {\it Locality}: $\A(I')\subset \A(I)'$,  $I\in\I$;
\end{itemize}
The following are consequences of these axioms
\begin{itemize}\itemsep0mm
\item {\it Reeh-Schlieder property}: $\Omega$ is a cyclic and separating vector for each $\A(I)$, $I\in\I$; 
\item{\it Haag duality}: $\A(I')'=\A(I)$,  $I\in\I$; 
\item{\it Bisognano-Wichmann property}: $U(\delta_I(-2\pi t))=\Delta^{it}_{\A(I),\Omega}$,  for all $
I\in\I$ and $t\in\RR$. 
\end{itemize}

\section{Time-axis theory of finite helicity representations}
\setcounter{equation}{0}
Consider the representation $U=U_h\oplus U_{-h}$ of the Poincar\'e group. The
Brunetti-Guido-Longo construction associates with $U$ a net of standard
subspaces $H$ on wedge shaped regions satisfying the Bisognano-Wichmann
property. The second quantization procedure provides the free field
net $\A$ associated with $U$. 

Finite helicity von Neumann algebra nets have an associated Wightman field $\phi_h$ satisfying the Bisognano-Wichmann property \cite{BW}. Thus the BGL and the Wightman field constructions coincide as
$$H(O)=\overline{\{\phi_h(f)\Omega:f\in\C^\infty_0(\RR^{1+3}), \,\mathrm{Supp} f\subset O\}}\quad \text{and}\quad H(W)=\bigcup_{O\subset W}H(O),$$
gives a one-particle $U$-covariant net (with two polarizations $h$ and $-h$)
and its second quantization
$$\A_h(O)\dot=R_+(H(O))=\{e^{i\overline{\phi_h(f)}}:f\in\C^\infty_0(\RR^{1+3}), \,\mathrm{Supp} f\subset O\}\}''$$
gives the free field. Note that Haag duality holds by \cite{HL,H}, namely $H(O')=H(O)'$ and $R_+(H(O'))=R_+(H(O)')=R_+(H(O))'$. Furthermore, due to the conformal covariance, the modular operator of any double cone subspace (resp.\ second quantization algebra) implement a one-parameter group of conformal transformation that is conjugated to the dilation and the boost one parameter groups \cite{H}.

Firstly, note that it is not possible to unitarily rewrite the net $H$
as a direct sum according to $U_h\oplus U_{-h}$, as $U_h$ does not extend (anti-)unitarily to $\cP_+$ \cite{Mo,var}. 
On the other hand $ U_{\pm h}$ (thus $U$) extends to a representation $\widetilde U_{\pm h}$ (resp.\ $\widetilde U$) of the conformal group which acts covariantly on the net $H$, see e.g. \cite{HL, H, Mack}. 
 
We recall that a local net of standard subspaces on double cones undergoing the action of a massless Poincar\'e representation is timelike local.
\begin{lemma}\label{lem:timecom} \cite{LMR}
Assume that $U$ is a massless, unitary representation of $\tPoi$ acting covariantly on a local net of closed, real linear subspaces on double cones. 
Let $O_1, O_2$ be double cones with $O_2$ in the time-like complement of $O_1$, then 
\[
H(O_2)\subset H(O_1)'\ ,
\]
where $H(O)=\bigcap_{W\supset O}H(W)$.
\end{lemma}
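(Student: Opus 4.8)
The plan is to reduce $H(O_2)\subset H(O_1)'$ to the vanishing of the one-particle symplectic form between the two timelike-separated localization regions, and then to identify this symplectic form with the helicity-$h$ commutator (Pauli--Jordan) distribution, whose support is confined to the light cone by the strong Huygens principle in four spacetime dimensions. I would stress at the outset that timelike commutativity is \emph{not} a formal consequence of the conformal covariance recalled above together with ordinary spacelike locality: the conformal group acts on the conformal completion (the Einstein cylinder) by causal isometries, and Minkowski space embeds causally as a diamond therein, so two Minkowski-timelike-separated events remain timelike separated on the cylinder. No conformal transformation can turn a timelike into a spacelike configuration, and masslessness must therefore enter through the \emph{sharp} (light-cone) support of the commutator, a feature special to the wave equation in an odd number of spatial dimensions.

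First I would invoke the identification recalled just above the statement,
\[
H(O)=\overline{\{\phi_h(f)\Omega:\ f\in\C^\infty_0(\RR^{1+3},\RR),\ \mathrm{supp}\,f\subset O\}},
\]
valid because the net has the Bisognano--Wichmann property, so that the BGL subspace $\bigcap_{W\supset O}H(W)$ coincides with the field-localized subspace. Since the symplectic complement is the closed real-linear condition $H(O_1)'=\{\xi:\ \Im(\xi,\eta)=0\ \forall\,\eta\in H(O_1)\}$ and both subspaces are closures of the above sets of one-particle vectors, by continuity it suffices to prove
\[
\Im\big(\phi_h(f)\Omega,\ \phi_h(g)\Omega\big)=0
\]
for all real $f,g$ with $\mathrm{supp}\,f\subset O_1$ and $\mathrm{supp}\,g\subset O_2$.

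Next I would rewrite this symplectic form as the expectation of the field commutator. For self-adjoint smeared fields one has $\Im\big(\phi_h(f)\Omega,\phi_h(g)\Omega\big)=\tfrac1{2i}\big(\Omega,[\phi_h(f),\phi_h(g)]\Omega\big)$, and the right-hand side equals a real multiple of $\int\!\!\int f(x)\,g(y)\,C_h(x-y)\,dx\,dy$, where $C_h$ is the helicity-$h$ commutator function, i.e. the imaginary (antisymmetric) part of the two-point distribution determined by $U_h\oplus U_{-h}$. In momentum space this two-point kernel is a polarization polynomial in $p$ times the mass-zero-shell measure $\theta(p_0)\,\delta\big((p,p)\big)$ supported on $\partial V_+$; its antisymmetric part is the same polynomial times $\mathrm{sgn}(p_0)\,\delta\big((p,p)\big)$, whose inverse Fourier transform is a constant-coefficient differential operator applied to the scalar Pauli--Jordan function $\propto\mathrm{sgn}(x_0)\,\delta\big((x,x)\big)$. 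Hence $C_h$ is supported on the light cone $\{z:(z,z)=0\}$. Since $O_2$ lies in the timelike complement of $O_1$, every difference $x-y$ with $x\in\mathrm{supp}\,f$, $y\in\mathrm{supp}\,g$ satisfies $(x-y,x-y)>0$, so $C_h(x-y)$ vanishes identically on $\mathrm{supp}\,f\times\mathrm{supp}\,g$, the double integral is zero, and taking closures gives $H(O_2)\subset H(O_1)'$.

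The main obstacle is the third step for \emph{nonzero} helicity: one must verify that the sharp light-cone support survives for the tensor/spinor fields, rather than spreading over the interior of the cone. I would settle this directly from the explicit intertwiners realizing $U_{\pm h}$ on $\partial V_+$, checking that the polarization sum contributes only a polynomial prefactor in $p$; then the commutator is literally a differential operator applied to the scalar case, and differentiation preserves light-cone support. A more structural route, which has the advantage of applying to the abstract local net of the hypothesis without reference to a specific field, would argue purely spectrally: masslessness means the joint translation spectrum lies in $\partial V_+$, and the light-cone support of the commutator then follows from this spectral condition together with the four-dimensional (odd spatial dimension) Huygens property; one would phrase this as a Paley--Wiener support statement for the antisymmetric part of the translation matrix elements $(\xi,U(x)\eta)$ with $\xi,\eta$ ranging over the localized subspaces.
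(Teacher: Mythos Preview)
The paper does not supply its own proof of this lemma; it merely cites \cite{LMR}. Your field-theoretic argument---reducing to the light-cone support of the helicity-$h$ Pauli--Jordan kernel via the polynomial polarization sum---is correct, and your remark that conformal covariance alone cannot convert a timelike into a spacelike configuration is well taken. But there is a scope mismatch with the statement: the hypothesis is an \emph{arbitrary} massless representation acting covariantly on an \emph{arbitrary} local net of real subspaces on double cones, with no Wightman field assumed. The identification $H(O)=\overline{\{\phi_h(f)\Omega:\mathrm{supp}\,f\subset O\}}$ that you invoke in your first step is recalled in the paper just above the lemma, but as a property of the specific finite-helicity Wightman/BGL net, not as part of the lemma's hypotheses. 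Your main argument therefore proves exactly what the paper actually needs (the finite-helicity case) but not the lemma in the generality in which it is formulated and cited.

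The ``structural route'' you sketch at the end is the right way to close that gap, and is essentially the argument of \cite{LMR}. Concretely: for $\xi\in H(O_1)$, $\eta\in H(O_2)$ the function $G(x)=\Im(\xi,U(x)\eta)$ is bounded, continuous, and satisfies $\Box G=0$ in the sense of distributions because the translation spectrum lies in $\partial V_+$; spacelike locality forces $G$ to vanish on the open set $\{x:O_2+x\subset O_1'\}$. If $O_i$ has radius $R_i$ and the centres are timelike separated by $T>R_1+R_2$, then on the Cauchy surface $x_0=-T$ the data of $G$ are supported in the ball $\lvert\vec x\rvert\le R_1+R_2$, and the strong Huygens principle (Kirchhoff's formula in $3{+}1$ dimensions) gives $G(0)=0$ since the backward light-sphere of radius $T$ misses that ball entirely. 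This uses only the mass-zero spectral condition and locality---no field. You located the right mechanism but stopped short of carrying it through.
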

Now, we can define a local net of standard subspaces on the time axis. Let $I=(a,b)\subset\RR$ be an interval and $O_I=(V_-+b)\cap (V_++a)$ the double cone with vertices on the time axis. Then we get a net on the line 
$$I\mapsto H(I)=H(O_I)$$
which undergoes the M\"obius covariant action of $\widetilde U|_{\Mob}$.

Since any unitary positive energy  M\"obius representation extends
(anti-)unitarily  to $\Mob_2$, then $\widetilde U_{\pm h}|_{\Mob}$
extends to  $\Mob_2$ and acts covariantly on its BGL net $H_{0}$ of standard subspaces.
By Proposition \ref{prop:uuu} and 
the Bisognano-Wichmann property for the dilation group, we have that the net  $I\mapsto H(I)$ is unitarily equivalent to the direct sum of the two local $\Mob$-covariant nets $H_{\pm}$.

Now we need the structure coming from Wightman fields in order to
construct the theories on the time axis.

\subsection{One-particle space and free field equations}

Free field theories are completely determined by their one-particle
structure. This structure is conveniently described by the
two-point functions of Wightman fields, that define the one-particle space by 
endowing the space of test functions with an inner product. 
The null space of this inner product is completely characterized by the free field equations (that are closer to the physicists' mind). Our strategy is to use the latter in order to
control the one-particle space and the pertinent decomposition of the 
one-particle representations.

As compared to the scalar field, there are two complications with
helicity $>1$: the (higher) Maxwell fields transform nontrivially
under $\so3$, and the one-particle space of a local field carries
necessarily the direct sum of the irreducible representations of
helicity $+h$ and $-h$. (Nevertheless, we shall loosely refer to the
local fields as ``helicity-$h$ fields''.)   

\medskip

\textbf{The electromagnetic field, $h=1$}. 
The Maxwell equations for the magnetic and electric fields in absence of charges are
$$\mathrm{curl}\,\textbf{B}=\frac{\partial }{\partial
  t}\textbf{E},\qquad\mathrm{curl}\,\textbf{E}=-\frac{\partial
}{\partial t}\textbf{B},\qquad \mathrm{div}\,\textbf{E}=0=\mathrm{div}\,\textbf{B}.$$
The field strength $F_{\mu\nu}$ of the electromagnetic field is defined to be the anti-symmetric tensor given by
$$\textbf{E}=(F_{01},F_{02},F_{03})\quad
\textrm{and}\quad\textbf{B}=(F_{32},F_{13},F_{21}),$$ 
and the Maxwell equations become 
$$\partial^{\mu}F_{\mu\nu}=0\quad \text{and}\quad \partial_{\mu}F_{\nu\rho}+\partial_{\nu}F_{\rho\mu}+\partial_{\rho}F_{\mu\nu}=0.$$
These imply the Klein-Gordon equation
$\Box F=0$.

In the quantized theory, the one-particle Hilbert space is the space of test
functions $f^{\mu\nu}$ equipped with the inner product given by the
two-point function
$$(f,f) := (F(f)\Omega,F(f)\Omega)= (\Omega, F(\overline
f)F(f)\Omega).$$
The latter is dictated by covariance (i.e., by Weinberg's quantization \cite{We}
based on Wigner's intrinsic construction
\cite{Wi} avoiding the use of a potential) to be 
$$\|F(f)\Omega\|^2=\int_{\RR^3}
\frac{d\textbf{p}}{|\textbf{p}|}\,p_\mu
\,p_\tau\,\eta_{\nu\sigma}\,\overline{\widehat f^{\mu\nu}}(p)\,\widehat
f^{\sigma\tau}(p).$$
Higher correlations are obtained by Wick's theorem, so that the full
Hilbert space is the Fock space, and multi-particle states can be created by the usual creation and annilation operators.
The field strength transforms covariantly under the Poincar\'e group:
$$U(a,\Lambda)F_{\mu\nu}(x)U(a,\Lambda)^*=\Lambda^\rho_\mu\,\Lambda^\sigma_\nu F_{\rho\sigma}(\Lambda x+a).$$
It is well known that $U(a,\Lambda)$ acts on the one-particle space as the direct
sum of Poincar\'e representations of helicities $1$ and $-1$ \cite{We}.

In order to prove the split property for the resulting net, we want to
restrict the Maxwell net to the time axis. This gives a chiral conformal
QFT. By computing $\Tr e^{-\beta L_0}$ for this chiral QFT and showing
that it is finite for all $\beta>0$, we shall establish that the
chiral net satisfies the split property. From this, we can conclude
that the original net has the split property. 

Before we present the purely representation-theoretical argument for arbitrary
helicities $\vert h\vert\geq 1$, we want to give its field-theoretic version in
the Maxwell case.

The Poincar\'e transformations of the Maxwell tensor extend to the
conformal group by 
$$U(g)F_{\mu\nu}(x)U(g)^*= J_g(x)_\mu^{\rho}J_g(x)_\nu^{\sig}F_{\rho\sig}(g(x)),$$
where $J_g(x)^{\rho}_\mu=\partial g(x)^{\rho}/\partial x^\mu$ is the
Jacobi matrix. For infinitesimal transformations with
generators $P_0$ (time translations), $D$ (dilations) and $K_0 = IP_0
I$ (special conformal transformations), one finds
$$i[P_0,F_{\mu\nu}(x)]=\partial_0F_{\mu\nu}(x),\quad
i[D,F_{\mu\nu}(x)]=(x^\kappa\partial_\kappa + 2)F_{\mu\nu}(x),$$
$$i[K_0,F_{\mu\nu}(x)] = \big(2x_0(x\partial)
-x^2\partial_0+ 4\, x_0\big) F_{\mu\nu}(x) +
2(\eta_{0\mu}x^\kappa F_{\kappa\nu}-x_\mu F_{0\nu} - (\mu\leftrightarrow\nu)).$$
From this, the commutators with the restricted fields
$\nabla_{\underline a}
F_{\mu\nu}(t)=\nabla_{a_1}\dots\nabla_{a_k}F_{\mu\nu}(t,\vec x)\vert_{\vec x=0}$
can be explicitly worked out. 

Now, $P_0=P$, $D$ and $K_0=K$ are the
generators of $\Mob$, and quasi-primary chiral currents of dimension $h$
transform as 
$$i[P,j(t)]=\partial_tj(t),\quad i[D,j(t)]=(t\partial t+h)j(t),\quad
i[K,j(t)]=(t^2\partial t+2ht)j(t).$$
It is obvious that the first two equations are satisfied by 
$\nabla_{\underline a} F_{\mu\nu}(t)$ with $h=2+\vert{\underline a}\vert = 2+k$; but
the last one is in general not fulfilled. In an $\so3$-covariant
formulation, and using the Maxwell equations, we can bring the
commutator with $K$ into the form 
$$i[K,J_{{\underline a},b}(t)] - (t^2\partial_t+2(2+k)t) J_{{\underline a},b}(t) = 2
\sum_{1\leq i<j\leq k}\partial_t\delta_{a_ia_j}J_{\underline {\ddot
    a},b}(t)+2i\sum_{1\leq i\leq k}\eps_{a_ibc}J_{\underline {\dot a},c}(t),$$
where $J_{{\underline a},b}$ are the complex fields
$\nabla_{a_1}\dots\nabla_{a_k}(E_b(t,\vec x)+iB_b(t,\vec
x))\vert_{\vec x=0}$, and ${\underline {\dot a}}$ is the multi-index with $a_i$ deleted, and
similarly $\underline{\ddot a}$ is the multi-index with $a_i$ and $a_j$ deleted.

The quasi-primary currents are those for which the right-hand side
vanishes. It is easy to see that this is precisely the case for the
completely symmetric and traceless part of the rank $k+1$ tensor 
$J_{a_1\dots a_k,b}$. This tensor carries the spin
$s=k+1$-representation of $\so3$, and because $J$ is complex, there
are two $2s+1$ multiplets of real quasi-primary currents of dimension
(= lowest weight of $L_0$) $2+k$. All other components of $J_{a_1\dots a_k,b}$ can be seen to be time
derivatives of lower currents by virtue of the Maxwell equations $\partial_aJ_a=0$, 
$\partial_aJ_b-\partial_bJ_a=i\eps_{abc}\partial_tJ_c$ and the wave
equation that follows from them.

Now, it is well known that on the subspace generated from the vacuum by a
quasi-primary field of dimension $h$, one has $\mathrm{Tr}_h\,e^{-\beta
  L_0}=\frac{e^{-\beta h}}{1-e^{-\beta}}$, hence on the one-partical
space of the Maxwell field, 
$$\Tr e^{-\beta
  L_0}=2\sum_{k\geq0}(2k+3)\cdot\frac{e^{-\beta
    (2+k)}}{1-e^{-\beta}},$$
which can be easily summed as a geometric series in $z=e^{-\beta}$
with radius of convergence $1$.

\bigskip

\textbf{Higher helicity fields, $h>1$}. 
The field strength is a tensor 
$$F_{[\mu_1\nu_1]\dots[\mu_h\nu_h]},$$
anti-symmetric in each index pair $[\mu\nu]$. It transforms
covariantly under the Poincar\'e group: 
$$U(a,\Lambda)F_{[\mu_1\nu_1]\dots[\mu_h\nu_h]}(x)U(a,\Lambda)^*=\Lambda^{\rho_1}_{\mu_1}\,\ldots\,\Lambda^{\rho_h}_{\mu_h}\, \Lambda^{\sigma_1}_{\nu_1}\,\ldots\,\Lambda^{\sigma_h}_{\nu_h} F_{[\rho_1\sigma_1]\dots[\rho_h\sigma_h]}(\Lambda x+a)$$ and is
subject to the linear dependencies (symmetries)
\begin{eqnarray}\label{HMT}F_{\dots[\mu_j\nu_j]\dots[\mu_k\nu_k]\dots} =
F_{\dots[\mu_k\nu_k]\dots[\mu_j\nu_j]\dots},\quad \eta^{\mu_j\mu_k}
  F_{[\mu_1\nu_1]\dots[\mu_h\nu_h]}=0,\notag \\
F_{[\alpha\beta][\gamma\nu_2]\dots[\mu_h\nu_h]} +
  F_{[\beta\gamma][\alpha\nu_2]\dots[\mu_h\nu_h]}+F_{[\gamma\alpha][\beta\nu_2]\dots[\mu_h\nu_h]}=0.
\end{eqnarray}
Its equations of motion (``higher Maxwell equations'') are
\begin{eqnarray}\label{HME}
\partial^\alpha
  F_{[\alpha\nu_1]\dots[\mu_h\nu_h]}=0,\quad\partial_\alpha
  F_{[\beta\gamma]\dots[\mu_h\nu_h]}+ \partial_\beta
  F_{[\gamma\alpha]\dots[\mu_h\nu_h]}+\partial_\gamma
  F_{[\alpha\beta]\dots[\mu_h\nu_h]}=0.
\end{eqnarray}
One can solve the linear dependencies in an $SO(3)$-covariant way by
introducing the ``electric'' and ``magnetic'' components
$$E_{b_1\dots b_h}:=F_{[0b_1]\dots [0b_h]},\quad B_{b_1\dots b_h}:=\eps_{b_1j_1k_1}F_{[j_1k_1][0b_2]\dots [0b_h]},$$
so that both $E$ and $B$ are symmetric
traceless tensors, hence they carry the representation $D^s$ of
$SO(3)$; furthermore, the identities  
$$\eps_{b_1j_1k_1}\eps_{b_2j_2k_2}F_{[j_1k_1][j_2k_2][\mu_3\nu_3]\dots
  [\mu_h\nu_h]} = -F_{[0b_1][0b_2][\mu_3\nu_3]\dots
  [\mu_h\nu_h]}$$
shows that two ``magnetic'' indices amount to two ``electric'' indices
up to a sign, so that the $\so3$ tensors $E$ and $B$ contain all
independent components of the higher Maxwell tensor.

Thus, a general field operator is of the form $F(f)=E(f^E)+B(f^B)$,
where the test function is a pair 
$$f(x)=(f^E_{b_1\dots b_h}(x),f^B_{b_1\dots b_h}(x))$$
of completely symmetric traceless tensors. 

Also the higher Maxwell equations look
the same as for $h=1$, namely $E$ and $B$ are divergence-free and
\begin{eqnarray}\label{HEB}
\eps_{abc} \nabla_{a} E_{bb_2\dots b_h} = -\partial_t
B_{cb_2\dots b_h}, \quad \eps_{abc} \nabla_{a} B_{bb_2\dots b_h}
= \partial_t E_{cb_2\dots b_h}
\end{eqnarray}
(which of course holds in every index).

Test functions that arise by smearing the Maxwell
equations belong to the kernel of the two-point function, and
hence are zero as elements of the one-particle Hilbert space. Thus, in
the one-particle space, there hold linear relations
  among test functions, of the form
\begin{eqnarray}\label{null}
(\nabla_bg_{\underline{\dot b}},0)\doteq0,&\qquad
  (0,\nabla_bg_{\underline{\dot b}})\doteq0,\notag \\(\eps_{abc}\nabla_ag_{c\underline{\dot b}},0)\doteq(0,-\partial_t
g_{\underline b}),&\qquad (0,\eps_{abc}\nabla_ag_{c\underline{\dot b}})
\doteq(\partial_t g_{\underline b},0).
\end{eqnarray}
Because the higher Maxwell equations imply the wave equation, also
$$((\vec\nabla^2-\partial_t^2)g^E,(\vec\nabla^2-\partial_t^2)g^B)\doteq0$$
are zero in the one-particle space.

\subsection{Counting currents}

The space of ``test functions'' for the fields restricted to the time
axis is spanned by $f=(f^E_{\underline b},f^B_{\underline b})$ where\footnote{That Wightman
  fields can be restricted to $\vec x=0$ is a result due to Borchers
  \cite{Bo1}. It ensures that the inner product is well-defined on
  test functions involving $\delta(\vec x)$.}
$$
f^X_{\underline b}(x)=f^X_{\underline b;\underline
  a}(t)\nabla_{\underline a}\delta(x)\equiv f^X_{b_1\dots
  b_h;a_1\dots a_k}(t)\cdot\nabla_{a_1}\dots\nabla_{a_k}\delta(\vec x)\quad (X=E,B)
  $$
(summation over $\underline a = a_i\dots a_k$ understood),
$k=0,1,2,\dots$. We call $T_k$ the subspace of such functions with a
fixed number $k$ of spatial derivatives and $T$ the union of all the $T_k$. 

The space of the test functions $f_{\underline b}^X$, modulo the kernel
of the inner product, defines the one-particle Hilbert space $\H$ of the field strength $F$, and by Haag duality $K(O_I)=\overline{\{F(f)\Omega: f\in T, \,\mathrm{Supp}\,f\subset I\}}\subset H(O_I)$ cf.\ \cite{Bo1}. Furthermore, by conformal covariance, the modular group of the double cone subspace $H(O_I)$ implements a one-parameter group of conformal transformations fixing the time axis (see \cite{H}) and any $T_k$ (the whole  representation $\wt U|_\Mob$ fixes $T_k$, cf.\ \eqref{eq:ray}).  Now one can see that $T+iT$ is cyclic in $\H$ since the inner product in the 
Hilbert space $\H$ can be decomposed as $(f,g)=\int p_0dp_0\,\int_{p_0\cdot S_2}(f,g)_pd\sigma$ where $p_0\cdot S_2$ is the sphere of radius $p_0$, $d\sigma$ is the $\SO(3)$-invariant measure on $p_0\cdot S_2$
and $\big(\cdot,\cdot\big)_{p}$ 
is a quadratic form involving $2h$ factors of $p_\mu=(p_0,\vec
p)=p_0(1,\vec n_\sig)$.\footnote{By the Stone theorem
  polynomials are dense in the continuous functions on the sphere. Then (vector) continuous functions are dense in the $L^2$-space w.r.t.\ the inner product $\int_{p_0\cdot S_2}(f,g)_{p}d\sigma$.} Then, by Lemma \ref{inc} we have that $K(O_I)=H(O_I)$.

Because of the symmetry of the tensors $E$ and $B$, it suffices to take
$f^X_{\underline b;\underline a}$ to be symmetric and traceless in the
$b$-indices; and because of the wave equation, it suffices to take it
also symmetric and traceless in the $a$-indices. Thus, the test functions
carry (twice) the representation $D^s\otimes D^k$.

The one-particle Hilbert space is defined by taking the quotient by
the null space, which is the kernel of the two-point function. Thus,
we may identify test functions according to (\ref{null}). In particular, every
test function in $T_k$ with coefficients $f^X_{\underline b;\underline
a}$ involving a factor $\delta_{b_ia_j}$ is zero in
the one-particle space; and every test function in $T_k$ with coefficients
anti-symmetric in a pair $b_i,a_j$ is identified with (the time
derivative of) a test function in $T_{k-1}$. Therefore, 
the one-particle Hilbert space for the restricted
fields is spanned by the spaces $\widetilde T_k$ ($k=0,1,2\dots$) with elements 
$$(f^E_{c_1\dots c_{h+k}},f^B_{c_1\dots c_{h+k}})\in \widetilde T_k$$ 
where $f^X_{\underline c}$ ($X=E,B$) are completely symmetric and
traceless, hence carrying (twice) the representation $D^{h+k}$ of
$SO(3)$. All other subrepresentations of $D^h\otimes D^k$ belong to
the null space. In particular $\widetilde T_k$ are mutually orthogonal.

We write the two-point function for $f=(f^E_{\underline
  c},f_{\underline c}^B)\in \widetilde T_k$ as 
\begin{equation}\label{eq:sp}
\big(f,f\big)_k = \int \frac{p_0^2\,dp_0}{p_0}\int d\sigma
\Big(\wh{f}(p_0,\vec p),\wh{f}(p_0,\vec
p)\Big)_{p}
\end{equation}
where $\wh{f}(p_0,\vec p)=(\wh{f^E}_{c_1\dots c_{h+k}}(p_0)p_{c_{h+1}}\dots
p_{c_{h+k}},\wh{f^B}_{c_1\dots c_{h+k}}(p_0)p_{c_{h+1}}\dots p_{c_{h+k}})$
are homogeneous polynomials of degree $k$ in $\vec p$.
Extracting powers of 
$\vert \vec p\vert=p^0$, this becomes 
$$\big(f,f\big)_k = \int p_0^{1+2h+2k}\,dp_0\int d\sigma
\Big(\wh{f}(p_0,\vec n_\sig),\wh{f}(p_0,\vec n_\sig)\Big)_{(1,\vec n_\sig)}.$$
The integration $d\sigma$ yields the inner
product for $D^{h+k}\oplus D^{h+k}$, while the M\"obius
transformations are characterized by the dependence on $p^0$.

The concluding argument is the same as in \cite{BDL}: The time translations
and dilations trivially restrict to the time axis by 
$$Pf(t)= i\partial_t f(t), \quad
Df(t)= i(t\partial_t-(h+k)) f(t).$$
The conformal inversion $I$ acts geometrically on test functions by $(t,\vec
x)\mapsto(-t,\vec x)/(t^2-\vec x^2)$, hence also its restricted action on
the time axis is geometric by $t\mapsto -1/t$. 
Because it commutes with $\so3$, it preserves the spaces $\widetilde T_k$ and
must act on it as 
\begin{equation}\label{eq:ray}
(If)(t)=G(t)f(I(t))\end{equation}
where $G(t)$ is a $2\times 2$-matrix in the commutant of
$D^{h+k}\oplus D^{h+k}$, possibly mixing the electric and magnetic
components. Now the argument of the Lemma \ref{l:geo} applies, and we conclude
that $G(t)=t^{2(h+k)}\eins_2$, and the subgroup $\Mob\times\so3$ of
$\C$ acts on $\widetilde T_k$ as $U^{(h+k+1)}\otimes (D^{h+k}\oplus D^{h+k})$.

We have proved the following theorem.
\begin{theorem}\label{thm:dec}
Let $U_{h}$ be the irreducible helicity-$h$ representation of the Poincar\'e group. Let $U=U_h\oplus U_{-h}$, and $\widetilde U$ its extension to the conformal group $\C$, then
\begin{equation}\label{eq:Udec}\widetilde  U|_{\Mob\times\SO(3)}=\bigoplus\nolimits_{k=0}^\infty
U^{(h+k+1)}\otimes (D^{h+k}\oplus D^{h+k}).
\end{equation}
\end{theorem}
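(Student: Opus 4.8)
The plan is to reduce the statement, exactly as in the scalar case of \cite{BDL}, to an explicit description of the one-particle Hilbert space $\H$ of the restricted helicity-$h$ field together with the action on it of the subgroup $\Mob\times\so3$ of $\C$. First I would pin down $\H$: by Borchers' restriction theorem \cite{Bo1} the field strength can be restricted to $\vec x=0$, so the two-point function gives a well-defined inner product on the span $T$ of test functions $f^X_{\underline b;\underline a}(t)\,\nabla_{\underline a}\delta(\vec x)$ ($X=E,B$), and Haag duality yields $\overline{\{F(f)\Omega:\mathrm{Supp}\,f\subset I\}}\subset H(O_I)$. To upgrade this to equality I would show that $T+iT$ is dense in $\H$, using the fiber decomposition $(f,g)=\int p_0\,dp_0\int_{p_0 S^2}(f,g)_p\,d\sigma$ of the two-point function together with density of (vector-valued) polynomials on the sphere, and then apply Lemma \ref{inc}, observing that $\wt U|_\Mob$ preserves each finite-derivative subspace $T_k$.

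Next I would perform the $\so3$-reduction modulo the null space. Test functions in $T_k$ carry (twice, once for $E$ and once for $B$) the representation $D^h\otimes D^k$: the $D^h$ is the symmetric traceless content of the $\so3$-tensors $E,B$, and the $D^k$ is the symmetric traceless content of $k$ spatial derivatives, the symmetrisation being allowed because the higher Maxwell equations imply the wave equation. The null relations \eqref{null}, obtained by smearing the higher Maxwell equations \eqref{HME}, then annihilate everything in $D^h\otimes D^k$ except the top component $D^{h+k}$: a coefficient with a contraction $\delta_{b_ia_j}$ is null, and a piece antisymmetric in a pair $(b_i,a_j)$ is, by the $\eps$-type relations in \eqref{null}, a time derivative of an element of $T_{k-1}$, hence already counted. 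This produces the orthogonal decomposition $\H=\bigoplus_{k\ge0}\wt T_k$, with $\wt T_k$ the pairs of completely symmetric traceless rank-$(h+k)$ tensor-valued functions, carrying $D^{h+k}\oplus D^{h+k}$.

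Then I would compute the $\Mob\times\so3$ action on $\wt T_k$. Extracting the $2h+2k$ powers of $\vert\vec p\vert=p^0$ from \eqref{eq:sp}, the subspace $\wt T_k$ inherits precisely the inner product of $\H_{h+k+1}$; time translations and dilations restrict geometrically to the time axis as $(Pf)(t)=i\partial_t f(t)$ and $(Df)(t)=i(t\partial_t-(h+k))f(t)$, which is \eqref{PD} with $n=h+k+1$. The conformal inversion $I$ acts geometrically on Minkowski test functions, hence on the axis as $t\mapsto-1/t$; since it commutes with $\so3$ it preserves $\wt T_k$ and must act as $(If)(t)=G(t)f(-1/t)$ for some $2\times2$ matrix $G(t)$ in the commutant of $D^{h+k}\oplus D^{h+k}$, a priori mixing electric and magnetic components. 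At this point Lemma \ref{l:geo} (equivalently Proposition \ref{prop:quasi}) applies verbatim: $ID+DI=0$ forces $G$ homogeneous of degree $2(h+k)$, $I^2=\id$ fixes the constant to $\pm1$, and $K=IPI$ with \eqref{Iact} fixes it to $+1$, so $G(t)=t^{2(h+k)}\eins_2$ and $\Mob$ acts on $\wt T_k$ as $U^{(h+k+1)}$ with $\so3$-multiplicity $D^{h+k}\oplus D^{h+k}$. Summing over $k$ yields \eqref{eq:Udec}.

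The \emph{main obstacle} I anticipate is the middle step: checking that the higher Maxwell symmetries \eqref{HMT} and equations \eqref{HME} cut $D^h\otimes D^k$ down to exactly $D^{h+k}$ — uniformly across all $h$ index pairs of the tensor $F_{[\mu_1\nu_1]\dots[\mu_h\nu_h]}$ — and that the intermediate components $D^{h+k-1},D^{h+k-2},\dots$ are genuinely time derivatives of elements of $\wt T_{k-1}$ rather than either lost degrees of freedom or sources of extra relations. The $h=1$ Maxwell computation carried out earlier in the paper is the prototype, but the combinatorics of the rank-$2h$ field strength need care; a secondary point requiring attention is the cyclicity of $T+iT$ in the first step, which is what guarantees that the $\wt T_k$ actually exhaust $\H$.
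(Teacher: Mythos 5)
Your proposal follows essentially the same route as the paper's own argument: the cyclicity of $T+iT$ via the fiber decomposition of the two-point function and Lemma \ref{inc}, the reduction of $D^h\otimes D^k$ to the top component $D^{h+k}$ using the null relations \eqref{null} (contractions are null, antisymmetric pieces are time derivatives of elements of $T_{k-1}$), and the identification of the $\Mob$-action on $\widetilde T_k$ as $U^{(h+k+1)}$ via the geometric action of $I$ and Lemma \ref{l:geo}. The two points you flag as potential obstacles are precisely the ones the paper also treats only briefly, so no gap relative to the published proof.
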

\begin{corollary}\label{cor:dech} 
Let $U_h$ be the irreducible helicity-$h$ representation of the Poincar\'e group and $\widetilde U_h$ its extension to the conformal group $\C$, then
\begin{equation}\label{eq:dedec}\widetilde U_h|_{\Mob\times\SO(3)}=  \bigoplus\nolimits_{k=0}^\infty
U^{(h+k+1)}\otimes D^{h+k}.\end{equation}
\end{corollary}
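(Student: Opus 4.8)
The plan is to obtain Corollary \ref{cor:dech} from Theorem \ref{thm:dec} by ``halving the multiplicities''. Since the conformal extension of $U=U_h\oplus U_{-h}$ is $\widetilde U=\widetilde U_h\oplus\widetilde U_{-h}$, one has $\widetilde U|_{\Mob\times\so3}=\widetilde U_h|_{\Mob\times\so3}\oplus\widetilde U_{-h}|_{\Mob\times\so3}$, and it suffices to show that the two summands are unitarily equivalent. Then, because $\Mob\times\so3$ is type I and the representations at hand are direct sums of irreducibles with unambiguously defined multiplicities, the fact that $\widetilde U_h|_{\Mob\times\so3}$ and $\widetilde U_{-h}|_{\Mob\times\so3}$ share the same multiplicity function $m$, while by \eqref{eq:Udec} (reading $D^{h+k}\oplus D^{h+k}$ as $2D^{h+k}$ and recalling that $U^{(h+k+1)}\otimes D^{h+k}$ is irreducible) their sum has multiplicity $2$ on each $U^{(h+k+1)}\otimes D^{h+k}$ and $0$ elsewhere, forces $m=1$ on each $U^{(h+k+1)}\otimes D^{h+k}$ and $0$ elsewhere. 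This is exactly \eqref{eq:dedec}.

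It therefore remains to establish $\widetilde U_h|_{\Mob\times\so3}\cong\widetilde U_{-h}|_{\Mob\times\so3}$. The quickest route is via the space reflection $P\colon(t,\vec x)\mapsto(t,-\vec x)$: it normalizes the conformal group, commutes with time translations, dilations and the conformal inversion \eqref{I}, hence with the whole $\Mob$ subgroup, and since $-\eins_3$ is central in $\mathrm{O}(3)$ it commutes with $\so3$; thus $P$ centralizes $\Mob\times\so3$. On the Poincar\'e irreducibles $P$ reverses the helicity, so $U_h\circ\mathrm{Ad}(P)\cong U_{-h}$, and the parity-invariant pair $U_h\oplus U_{-h}$ is extended by a unitary involution $U(P)$ interchanging the two summands. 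Since the conformal extension of a finite-helicity Poincar\'e representation is unique (cf.\ \cite{Mack}), $\mathrm{Ad}(U(P))$ carries $\widetilde U_h$ to $\widetilde U_{-h}$; restricting to $\Mob\times\so3$, where $U(P)$ commutes with all the representatives, gives $\widetilde U_{-h}(g)=U(P)\,\widetilde U_h(g)\,U(P)^*$ for $g\in\Mob\times\so3$, the required equivalence.

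A more hands-on realization of the same splitting — which also serves as a cross-check and avoids invoking uniqueness of the conformal extension — is the decomposition of the field strength into its self-dual and anti-self-dual parts $F^\pm$. Hodge duality on $2$-forms (respectively on each antisymmetric index pair of the higher Maxwell tensor) squares to $-\eins$ and is conformally invariant, hence commutes with $\widetilde U$; its $\pm i$ eigenspaces are conformally invariant and coincide with the helicity $\pm h$ one-particle subspaces. In the notation of the proof of Theorem \ref{thm:dec} this amounts to replacing the basis $\{f^E,f^B\}$ of the two-dimensional multiplicity space of each $\widetilde T_k$ by the (anti-)self-dual combinations $\{f^E\pm i\,f^B\}$; by \eqref{eq:ray} the group $\Mob\times\so3$ acts on that multiplicity space by $t^{2(h+k)}\eins_2$ tensored with the diagonal $\so3$-action, hence trivially on the two-dimensional label, so each line $f^E\pm if^B$ is $\Mob\times\so3$-invariant and carries $U^{(h+k+1)}\otimes D^{h+k}$. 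Keeping the $+$ line (helicity $h$) for every $k$ yields \eqref{eq:dedec}.

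The step I expect to be the main obstacle is the bookkeeping in the second paragraph: one must make sure that conjugation by the parity operator intertwines the conformal extensions $\widetilde U_h$ and $\widetilde U_{-h}$, not merely the underlying Poincar\'e representations $U_h$ and $U_{-h}$; this rests on the uniqueness of the conformal extension of a finite-helicity representation. For this reason the self-dual/anti-self-dual description is valuable as an independent argument, since there the conformal invariance of the helicity splitting is manifest.
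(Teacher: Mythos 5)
Your argument is correct and shares the paper's overall strategy (reduce to Theorem \ref{thm:dec} and halve the multiplicities after showing $\widetilde U_h|_{\Mob\times\so3}\cong\widetilde U_{-h}|_{\Mob\times\so3}$), but the key step is carried out by a different mechanism. The paper establishes the equivalence of the two restrictions via the \emph{anti-unitary} PCT operator $J$ (the involution extending $U_h\oplus U_{-h}$ to $\cP_+$, case (1.c) of Proposition \ref{prop:unicom}): $J$ intertwines $\widetilde U_h$ with the conjugate of $\widetilde U_{-h}$, and since every irreducible constituent $U^{(j+1)}\otimes D^j$ of $\Mob\times\SO(3)$ is a \emph{real} representation (it extends anti-unitarily to $\Mob_2\times\SO(3)$), the anti-unitary equivalence upgrades to a unitary one on the restriction. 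You instead use the \emph{unitary} parity operator, which centralizes $\Mob\times\so3$ and flips helicity, at the price of invoking the uniqueness of the conformal extension of $U_h$ to transport the intertwining from the Poincar\'e level to $\widetilde U_{\pm h}$ --- exactly the point you rightly flag as the delicate one, and which the paper's route avoids (it needs no statement about extensions of $U_h$ itself, only the reality of the $\Mob\times\SO(3)$ constituents, which is elementary). Your self-dual/anti-self-dual argument is a sound independent check and is closest in spirit to the explicit computation in the proof of Theorem \ref{thm:dec}: since \eqref{eq:ray} forces $G(t)=t^{2(h+k)}\eins_2$, any $\Mob\times\so3$-invariant line in the two-dimensional multiplicity space of $\widetilde T_k$ carries $U^{(h+k+1)}\otimes D^{h+k}$, and duality identifies the lines $f^E\pm if^B$ as the helicity eigenspaces; note, though, that for the corollary as stated one only needs that the helicity projection has rank one on each multiplicity space, which already follows from the equivalence of the two restrictions. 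All three routes are valid; the paper's is the most economical given the machinery of Section 2.4 already in place.
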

\begin{proof}
The PCT symmetry respects the  $\Mob\times\SO(3)$ decomposition. Its 
anti-unitary implementation $J$ intertwines $U_h$, $U_{-h}$   and their restrictions to 
 $\Mob\times\SO(3)$. 
Irreducible unitary sub-representations in $\wt U|_{\Mob\times \SO(3)}$ of  $\Mob\times\SO(3)$  are tensor products of the form $U^{j+1}\otimes \D^j$ that anti-unitarily extend to $\Mob_2\times\SO(3)$. In particular, $\wt U_h|_{\Mob\times \SO(3)}$ and $\wt U_{-h}|_{\Mob\times\SO(3)}$ are unitarily equivalent, and  by the decomposition in Theorem \ref{thm:dec} we get the claim.
\end{proof}

\section{Trace class and split property for finite helicity fields}
\label{s:split}
\setcounter{equation}{0}
\begin{definition}\label{def:split}(Split Property) \cite{lodo}.
Let $(\N\subset \M,\Omega)$ be a \textit{standard inclusion} of von
Neumann algebras, i.e., $\Omega$ is a cyclic and separating vector
for $\N$, $\M$ and $\N'\cap \M$.

A standard inclusion $(\N\subset \M,\Omega)$ is \textit{split} if there exists a type I factor $\B$ such that $\N\subset \B \subset \M$.

A Poincar\'e covariant net $(\A,U,\Omega)$ satisfies the \textit{split property} if  the von Neumann algebra inclusion $(\A(O_1)\subset\A(O_2),\Omega)$ is split, for every compact inclusion of bounded causally closed regions $O_1\Subset O_2$.
\end{definition}
The following result relates the trace class property of the partition function in the first and second quantization nets.
\begin{lemma}\cite{BDL,L}\label{lem:trace}
Let $A\in\B(\H)$ be a selfadjoint operator s.t. $0\leq A<1$, then $\Tr\Gamma (A)<\infty$ iff $\Tr A<\infty$, where $\Gamma$ is the second quantization functor.
\end{lemma}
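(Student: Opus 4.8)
The plan is to establish both implications of the equivalence $\Tr\Gamma(A)<\infty \iff \Tr A<\infty$ by reducing everything to the eigenvalues of $A$. Since $A$ is selfadjoint with $0\le A<1$ and $\Tr A<\infty$ in the nontrivial direction, $A$ is in particular compact, so we may diagonalize it: there is an orthonormal basis $\{e_n\}$ of $\H$ with $Ae_n=\lambda_n e_n$, $0\le \lambda_n<1$, and $\Tr A=\sum_n\lambda_n$. The key point is that the second quantization functor $\Gamma$ turns this orthogonal direct sum decomposition of the one-particle space into a tensor product decomposition of Fock space: $\mathrm{F}_+(\H)=\bigotimes_n \mathrm{F}_+(\CC e_n)$, and under this identification $\Gamma(A)=\bigotimes_n \Gamma(A|_{\CC e_n})$, where on the $n$-th factor $\Gamma(A|_{\CC e_n})$ acts on the $m$-particle subspace ($m\ge 0$) by the scalar $\lambda_n^m$. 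Hence on each factor $\Tr\Gamma(A|_{\CC e_n})=\sum_{m\ge 0}\lambda_n^m=(1-\lambda_n)^{-1}$, which is finite precisely because $\lambda_n<1$.

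First I would make the tensor-product factorization rigorous, taking care that the countable tensor product of the one-mode Fock spaces along the chosen basis vectors (with distinguished unit vectors the Fock vacua) is canonically the full symmetric Fock space; this is standard but should be cited or spelled out, since it is where the combinatorics of symmetrization is used. Then, for a trace-class operator that factors as a (convergent) infinite tensor product of positive trace-class operators, the trace multiplies: $\Tr\Gamma(A)=\prod_n \Tr\Gamma(A|_{\CC e_n})=\prod_n (1-\lambda_n)^{-1}$. The infinite product $\prod_n(1-\lambda_n)^{-1}$ converges to a finite value if and only if $\sum_n \lambda_n<\infty$ — this is the elementary fact that $\prod(1-\lambda_n)^{-1}$ converges iff $\sum\lambda_n$ converges, valid since $0\le\lambda_n<1$ (and, for the ``only if'' half, one also uses that finiteness of $\Tr\Gamma(A)$ forces $\lambda_n\to 0$, so that eventually $\lambda_n\le 1/2$ and one can compare $-\log(1-\lambda_n)$ with $\lambda_n$ up to a constant). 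Reading the chain of equivalences both ways gives $\Tr\Gamma(A)<\infty\iff\Tr A<\infty$.

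For the converse direction one has to be slightly more careful about the order of quantifiers: assuming $\Tr\Gamma(A)<\infty$ one does not a priori know $A$ is compact, but $\Gamma(A)\ge 0$ has finite trace hence is compact, and its restriction to the one-particle subspace $\H\subset\mathrm{F}_+(\H)$ is exactly $A$ (the functor $\Gamma$ acts as $A$ on one-particle vectors), so $A$ is compact there and the diagonalization above is available; alternatively one invokes the monotonicity $\Tr A=\Tr\Gamma(A)|_{\H}\le \Tr\Gamma(A)$ directly once compactness is in hand.

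\textbf{Main obstacle.} The only genuinely delicate point is handling the infinite tensor product and the interchange of trace with the infinite product — i.e.\ justifying $\Tr\bigl(\bigotimes_n T_n\bigr)=\prod_n \Tr T_n$ for positive operators and controlling convergence of $\prod_n(1-\lambda_n)^{-1}$ at both ends. Everything else is bookkeeping on the geometric series $\sum_m\lambda_n^m$. I expect the cleanest write-up simply cites Lemma-type facts about second quantization traces (as the paper already does, attributing this to \cite{BDL,L}) and records the two elementary inequalities $\lambda_n\le -\log(1-\lambda_n)$ and, for $\lambda_n\le 1/2$, $-\log(1-\lambda_n)\le (2\log 2)\,\lambda_n$, which bracket the product and the sum and yield the equivalence immediately.
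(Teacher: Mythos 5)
Your proof is correct, and it is essentially the standard argument behind this lemma: the paper itself gives no proof (it cites \cite{BDL,L}), and those references establish exactly the identity $\Tr\Gamma(A)=\prod_n(1-\lambda_n)^{-1}=\det(1-A)^{-1}$ via diagonalization and the tensor-product (equivalently, occupation-number basis) structure of the Bose Fock space, together with the elementary equivalence between convergence of $\prod_n(1-\lambda_n)^{-1}$ and of $\sum_n\lambda_n$. Your handling of the converse via the restriction $\Gamma(A)|_{\H}=A$ on the invariant one-particle subspace is also the standard and correct way to avoid the quantifier issue you flag.
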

The next proposition relates the trace class and the split
properties of conformal nets on the circle.
\begin{proposition}\cite{BDL}\label{prop:BDLsplit}
Let $\A$ be a von Neumann algebra net on the circle satisfying the trace class condition $$\Tr\,e^{-\beta L_0}<\infty\quad \text{ for every }\beta>0,$$ then every inclusion $\A(I)\subset\A(\wt I)$, with $I\Subset \wt I$, is a split inclusion.
\end{proposition}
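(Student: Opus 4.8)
The plan is to recognize this as the main result of \cite{BDL} and to recall the structure of its proof: one verifies the \emph{modular nuclearity condition} for the standard inclusion $(\A(I)\subset\A(\wt I),\Om)$ and invokes the implication ``modular nuclearity $\Rightarrow$ split'' (\cite{BDL}; see also \cite{lodo}). Explicitly, writing $\A(I)_1$ for the unit ball of $\A(I)$, the goal is to show that the map
\[
\Xi\colon\ \A(I)_1\longrightarrow\H,\qquad x\longmapsto \Delta_{\A(\wt I)}^{1/4}\,x\,\Om
\]
has finite nuclear norm $\|\Xi\|_1$; by the cited criterion this produces an intermediate type~$I$ factor $\A(I)\subset\B\subset\A(\wt I)$, which is precisely the split property for that inclusion.

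First I would use M\"obius covariance to normalize the configuration: pick $g\in\Mob$ so that $gI\Subset g\wt I$ are in a convenient standard position, and fix an auxiliary interval $I_0$ with $\overline{gI}\subset I_0$ and $\overline{I_0}\subset g\wt I$. Since $U(g)$ is unitary and carries the modular data of $\A(\wt I)$ onto those of $\A(g\wt I)$, nuclearity of $\Xi$ is unaffected, so we may assume $I,I_0,\wt I$ are already so placed. By the Bisognano--Wichmann property of Sect.~\ref{s:nets-circ} we have $\Delta_{\A(\wt I)}^{it}=U(\delta_{\wt I}(-2\pi t))$, hence $\Delta_{\A(\wt I)}^{1/4}$ is the analytic continuation to imaginary parameter $t=i\pi/2$ of the dilation one-parameter group attached to $\wt I$, i.e.\ a ``complex dilation''.

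The technical heart is to dominate this complex dilation by $e^{-\beta L_0}$. The dilation subgroup $\{\delta_{\wt I}(t)\}$ and the rotation subgroup $\{r(\theta)\}=\{e^{i\theta L_0}\}$ together generate $\Mob$; exploiting this $\mathrm{SL}(2,\RR)$-structure together with the positivity $L_0\ge 0$ (which makes $e^{-\beta L_0}$ bounded), and using the ``room'' $\overline I\subset I_0\subset\wt I$ to control the relevant joint analyticity domain, one obtains, for some $\beta=\beta(I,\wt I)>0$, a factorization
\[
\Delta_{\A(\wt I)}^{1/4}\,x\,\Om\;=\;A\,e^{-\beta L_0}\,B\,x\,\Om,\qquad x\in\A(I),
\]
with $A,B$ bounded operators built from the M\"obius action; here $B$ maps $\A(I)_1\Om$ into a fixed ball of $\H$ by a Reeh--Schlieder/modular estimate tied to the intermediate interval $I_0$.

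Finally, a composition of a bounded map, a trace-class operator and a bounded map is nuclear, with $\|\Xi\|_1\le \|A\|\,\|B\|\,\Tr e^{-\beta L_0}$; since $\Tr e^{-\beta L_0}<\infty$ by hypothesis, $\Xi$ is nuclear, and the modular nuclearity criterion gives that $\A(I)\subset\A(\wt I)$ is split. The step I expect to be the main obstacle is exactly this factorization through $e^{-\beta L_0}$: passing from the complex \emph{dilation} $\Delta_{\A(\wt I)}^{1/4}$ to the complex \emph{rotation} $e^{-\beta L_0}$ requires a careful treatment of the analytic continuations and of the domains of the unbounded operators involved, and it is precisely here that the strict inclusion $I\Subset\wt I$ is used quantitatively (to keep $\beta$ away from $0$ and the analyticity strip wide enough). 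Once this is in place, the remaining steps are routine.
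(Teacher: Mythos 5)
The paper gives no proof of this proposition at all: it is imported verbatim from \cite{BDL} (it is the main abstract result of that reference), so there is nothing internal to compare your argument against. Your reconstruction does match the actual proof in \cite{BDL}: one shows $L^2$-nuclearity, i.e.\ that $\Delta_{\A(\wt I),\Om}^{1/4}\Delta_{\A(I_0),\Om}^{-1/4}$ is trace class with trace norm bounded by $\Tr e^{-\beta L_0}$ for a $\beta>0$ determined by the ``inner distance'' of the intervals (this is exactly your factorization $A\,e^{-\beta L_0}\,B$, obtained by writing the relevant product of modular operators as a group element analytically continued inside $\mathrm{SL}(2,\RR)$), then composes with the contraction $x\mapsto\Delta_{\A(I_0),\Om}^{1/4}x\Om$ on the unit ball to get modular nuclearity of $\Xi$, and finally invokes ``modular nuclearity $\Rightarrow$ split'' (which is in part II of the Buchholz--D'Antoni--Longo nuclear maps papers rather than \cite{lodo}, a minor citation slip). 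Your sketch correctly isolates the analytic-continuation step as the technical heart and correctly locates where $I\Subset\wt I$ enters quantitatively; note that the paper itself records the $L^2$-nuclearity statement separately as the corollary at the end of Sect.~\ref{s:split}, consistent with your route.
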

The results of the previous section allow us to conclude
\begin{proposition}  \label{prop:trh}
Let $U_h$ be a finite helicity representation of
  $\Poi$ and $\widetilde U_h$ its extension to the conformal group
  $\C$. Consider the restriction $\widetilde U_h|_\Mob$ and let $L_0$
  be the conformal Hamiltonian, i.e., the generator of the rotation 
subgroup of $\Mob$. Then $e^{-\beta L_0}$ is a  trace class operator.
\end{proposition}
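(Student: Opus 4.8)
The plan is to reduce the trace computation to the explicit decomposition in Corollary \ref{cor:dech} and then sum a convergent series. First I would recall that for a single lowest-weight $\widetilde\Mob$-representation $U^{(\ell)}$ of lowest weight $\ell>0$, the conformal Hamiltonian $L_0$ has spectrum $\{\ell,\ell+1,\ell+2,\dots\}$, each eigenvalue simple, so that
\begin{equation*}
\Tr\big(e^{-\beta L_0}\big|_{U^{(\ell)}}\big)=\sum_{m\geq 0}e^{-\beta(\ell+m)}=\frac{e^{-\beta\ell}}{1-e^{-\beta}}<\infty\qquad(\beta>0).
\end{equation*}
This is the standard fact already invoked in the field-theoretic Maxwell discussion (``$\Tr_h e^{-\beta L_0}=e^{-\beta h}/(1-e^{-\beta})$'').

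Next I would apply Corollary \ref{cor:dech}: since $L_0$ generates the rotation subgroup of $\Mob$ and commutes with $\so3$, and $D^{h+k}$ contributes only a finite multiplicity $\dim D^{h+k}=2(h+k)+1$, we get
\begin{equation*}
\Tr\big(e^{-\beta L_0}\big)=\sum_{k=0}^{\infty}\big(2(h+k)+1\big)\cdot\frac{e^{-\beta(h+k+1)}}{1-e^{-\beta}}
=\frac{1}{1-e^{-\beta}}\sum_{k=0}^{\infty}\big(2(h+k)+1\big)\,e^{-\beta(h+k+1)}.
\end{equation*}
(Here $h$ is a positive integer or half-integer; in the half-integer case one works with the $\widetilde\Mob$-representations of lowest weight $h+k+1\in\ZZ$, which is unimportant for the estimate.) Writing $z=e^{-\beta}\in(0,1)$, the right-hand side is $\frac{e^{-\beta(h+1)}}{1-e^{-\beta}}\sum_{k\geq0}(2h+1+2k)z^k$, a convex combination of the geometric series $\sum z^k=(1-z)^{-1}$ and its derivative-type series $\sum k z^k = z(1-z)^{-2}$, both of which converge absolutely for $|z|<1$. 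Hence the sum is finite for every $\beta>0$, which is exactly the assertion that $e^{-\beta L_0}$ is trace class. Explicitly one finds a closed form of the shape $\big(\alpha_0+\alpha_1 z\big)\,z^{h+1}(1-z)^{-3}$ with constants $\alpha_0,\alpha_1$ depending on $h$, but the precise coefficients are not needed.

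I do not expect a genuine obstacle here: the analytic input (geometric series with radius of convergence $1$) is elementary, and all the representation-theoretic content has already been packaged into Corollary \ref{cor:dech}. The only point requiring a word of care is the interchange of trace and direct sum — legitimate because $e^{-\beta L_0}$ is positive, so the trace is the supremum over finite partial sums and Tonelli applies — and the remark that for half-integer $h$ one should read $U^{(h+k+1)}$ as the relevant positive-energy representation of $\widetilde\Mob$ (equivalently, passing to $\Mob$ after the integer shift), which does not affect the spectrum of $L_0$ modulo an overall integer offset and hence not the convergence. This simultaneously yields, via Lemma \ref{lem:trace}, that the second-quantized operator $\Gamma(e^{-\beta L_0})$ is trace class, feeding into Proposition \ref{prop:BDLsplit}.
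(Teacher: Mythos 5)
Your proof is correct and follows essentially the same route as the paper: both invoke Corollary \ref{cor:dech} to decompose $\widetilde U_h|_{\Mob\times\SO(3)}$ into $\bigoplus_k U^{(h+k+1)}\otimes D^{h+k}$, use the standard trace $e^{-\beta\ell}/(1-e^{-\beta})$ of $e^{-\beta L_0}$ in a lowest-weight-$\ell$ representation together with the multiplicity $2(h+k)+1$, and sum the resulting geometric-type series. Your additional remarks on positivity justifying the interchange of trace and direct sum and on the half-integer case are fine but not needed beyond what the paper states.
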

\begin{proof} By Corollary \ref{cor:dech} any representation of highest weight $n+1$ in the decomposition of $U_h|_{\Mob\times\SO(3)}$ 
  appears with multiplicity  equal to the dimension of $\D^{n}$ when
  $n\geq s$: 
$$\widetilde U_h|_{\Mob\times \SO(3)}\simeq\bigoplus\nolimits_{k=0}^\infty U^{(h+k+1)}\otimes \D^{h+k}.$$ 
Furthermore, the trace of $L_0$ in $U^{(h+k+1)} $ is equal to $\frac{e^{-(h+k+1)\beta}}{1-e^{-\beta}}$. We conclude that
$$\Tr(e^{-\beta L_0}) =
\sum\nolimits_{n=h}^\infty(2n+1)\frac{e^{-(n+1)\beta}}{1-e^{-\beta}},$$
which converges for all $\beta>0$ as before.
\end{proof}
\begin{proposition}\label{prop:OI}
Let $\A_h$ be the helicity-$h$ free net of von Neumann algebras (whose
one-particle space carries the representation $U_h\oplus U_{-h}$ if $h>0$) and
$I\mapsto\A_h(I)\doteq\A_h(O_I)$ be its restriction to the time axis. 
Then $\A_h(I)\subset\A_h(\wt I)$ is a split inclusion when
$I\Subset\wt I$. 
\end{proposition}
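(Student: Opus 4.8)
The plan is to assemble the pieces already in place. First I would invoke Proposition \ref{prop:trh}: the restriction $\widetilde U_h|_\Mob$ of the conformal extension of the helicity-$h$ representation has conformal Hamiltonian $L_0$ with $\Tr\,e^{-\beta L_0}<\infty$ for every $\beta>0$. The one-particle space of $\A_h$ carries $U_h\oplus U_{-h}$, so its conformal extension restricted to $\Mob$ is $\widetilde U_h|_\Mob\oplus\widetilde U_{-h}|_\Mob$; since by (the proof of) Corollary \ref{cor:dech} the two summands are unitarily equivalent, the conformal Hamiltonian of the full one-particle $\Mob$-representation still has finite trace of $e^{-\beta L_0}$ for all $\beta>0$ (it is simply twice the previous value).

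Next I would pass from the one-particle level to the von Neumann algebra level. The net $I\mapsto\A_h(I)=\A_h(O_I)$ on the time axis is, by the discussion preceding Theorem \ref{thm:dec} (in particular the identification $K(O_I)=H(O_I)$ via Lemma \ref{inc}, and the fact that $\widetilde U|_\Mob$ acts covariantly fixing each $\widetilde T_k$), the second quantization of the $\Mob$-covariant one-particle net built by the BGL construction from the one-particle representation $\widetilde U|_\Mob$. Hence it is a M\"obius covariant net of von Neumann algebras on the circle in the sense of Sect.\ \ref{s:nets-circ}/\ref{s:secq}, with conformal Hamiltonian the second quantization $\Gamma(L_0)$ of the one-particle $L_0$. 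I now apply Lemma \ref{lem:trace} with $A=e^{-\beta L_0}$ (self-adjoint, $0\le A<1$ by positivity of $L_0$): since $\Tr\,A<\infty$, also $\Tr\,\Gamma(A)=\Tr\,e^{-\beta L_0^{\mathrm{Fock}}}<\infty$, for every $\beta>0$. Thus the chiral net $\A_h$ on the circle satisfies the trace class condition of Proposition \ref{prop:BDLsplit}.

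Finally, Proposition \ref{prop:BDLsplit} yields directly that every inclusion $\A_h(I)\subset\A_h(\wt I)$ with $I\Subset\wt I$ is split, which is the assertion. The main obstacle — and the only point requiring care — is the identification of the time-axis net with a genuine M\"obius covariant conformal net whose conformal Hamiltonian is the second quantization of the one-particle $L_0$ appearing in Proposition \ref{prop:trh}; this is exactly what the analysis of Sect.\ 4 (the cyclicity of $T+iT$, the equality $K(O_I)=H(O_I)$, and the decomposition \eqref{eq:Udec}) was designed to provide, so here it is just a matter of citing it correctly. Everything else is a mechanical chaining of Proposition \ref{prop:trh}, Lemma \ref{lem:trace}, and Proposition \ref{prop:BDLsplit}.
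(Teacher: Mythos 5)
Your proposal is correct and follows the same route as the paper: the paper's proof likewise identifies $\A_h$ as the second quantization of the BGL net for $U_h\oplus U_{-h}$, combines Proposition \ref{prop:trh} with Lemma \ref{lem:trace} to get $\Tr\Gamma(e^{-\beta L_0})<\infty$, and concludes by Proposition \ref{prop:BDLsplit}. You are in fact slightly more careful than the paper in noting explicitly that the doubling $U_h\oplus U_{-h}$ only multiplies the one-particle trace by two.
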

\begin{proof}
 The net $\A_h$ is the second quantization of the BGL net $H_h$ of standard subspaces associated with $U=U_h\oplus U_{-h}$. 
By Lemma \ref{lem:trace} and Proposition \ref{prop:trh}, we have that  $\Tr \Gamma (e^{-\beta L_0})<\infty$, thus the net 
$$I\mapsto\A_h(O_I)$$ satisfies the split property, by Proposition \ref{prop:BDLsplit}.
\end{proof}
\begin{theorem}
The free finite helicity fields satisfy the split property.
\end{theorem}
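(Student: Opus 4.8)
The plan is to deduce the four-dimensional statement from the chiral one, along the lines of the scalar case treated in \cite{BDL}. Proposition \ref{prop:OI} already tells us that $\A_h(O_I)\subset\A_h(O_{\tilde I})$ is split for \emph{every} pair of intervals $I\Subset\tilde I$ of the time axis, that is, for every inclusion of two double cones both of whose apices lie on the time axis. The only quantitative input behind this is the trace-class estimate $\Tr e^{-\beta L_0}<\infty$ of Proposition \ref{prop:trh}, fed through Lemma \ref{lem:trace} and Proposition \ref{prop:BDLsplit}; no further estimate of that kind is needed, and everything that remains is elementary geometry of double cones together with the permanence properties of interpolating type $I$ factors.

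First I would pass from time-axis double cones to arbitrary double cones $O_1\Subset O_2$. Let $v$ be the timelike direction of the axis of $O_1$ and $\ell$ the line through its two apices, and let $O_3$ be the double cone with axis $v$ whose apices lie on $\ell$ and are obtained from those of $O_1$ by pushing them outward along $\ell$ by a small amount $\varepsilon>0$; then $O_1\Subset O_3$, and since $\overline{O_1}$ is compact and contained in the open set $O_2$, a compactness argument shows $O_3\subset O_2$ for $\varepsilon$ small enough. A single Poincar\'e transformation (translate $\ell$ onto the time axis and boost $v$ onto $(1,0,0,0)$) carries $(O_1,O_3)$ to a pair $(O_I,O_{\tilde I})$ with $I\Subset\tilde I$. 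Since $\A_h$ is Poincar\'e covariant and conjugation by a unitary preserves type $I$ factors, Proposition \ref{prop:OI} provides a type $I$ factor $\B$ with $\A_h(O_1)\subset\B\subset\A_h(O_3)\subset\A_h(O_2)$; hence $(\A_h(O_1)\subset\A_h(O_2),\Omega)$ is split in the sense of Definition \ref{def:split}, the standardness of the inclusion being the usual Reeh--Schlieder argument applied to a double cone contained in the spacelike collar $O_2\cap O_1'$. The same construction yields the statistical independence of $\A_h(O_1)$ and $\A_h(O_2)$ for any two spacelike separated double cones at finite distance: choosing a slightly enlarged double cone $\tilde O$ with $O_1\Subset\tilde O$ still spacelike to $O_2$, one gets $\A_h(O_1)\subset\B\subset\A_h(\tilde O)\subset\A_h(O_2)'$ by locality.

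The passage to arbitrary bounded causally closed regions $O_1\Subset O_2$ is then routine and model independent: one covers the compact set $\overline{O_1}$ by finitely many double cones each compactly contained in $O_2$ and concludes using additivity of the free field net together with locality, as in \cite{BDL, lodo}. I do not expect any genuine obstacle in this last part; the whole substance of the theorem has already been reached in Proposition \ref{prop:OI}. The only mildly delicate point is the bookkeeping with double cones and interpolating type $I$ factors described above, in particular identifying in that reduction the correct time-axis pair $O_I\Subset O_{\tilde I}$ to which an arbitrary nested pair of double cones becomes Poincar\'e equivalent after the insertion of the intermediate cone $O_3$.
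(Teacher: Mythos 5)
Your argument is correct and takes essentially the same route as the paper: both proofs reduce an arbitrary nested pair of double cones to a time-axis pair by a single Poincar\'e transformation and then invoke Proposition \ref{prop:OI}, the only cosmetic difference being that the paper maps the \emph{larger} cone onto the time axis and shrinks it to an intermediate time-axis cone containing the image of the smaller one, while you align the axis of the \emph{smaller} cone and enlarge by $\varepsilon$. (The paper's own proof likewise stops at double cones and does not spell out the passage to general bounded causally closed regions that you sketch at the end.)
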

\begin{proof}
For inclusion of algebras related to double cones on the time axis, we conclude by Proposition \ref{prop:OI}.

For a general inclusion of double cones $O\Subset\wt O$, choose a
Poincar\'e transformation $g$ such that $g(\wt O)=O_{\wt I}$ is
a double cone on the time axis. Then there is an inclusion $O_I\Subset
O_{\wt I}$ of another double cone on the time axis such that
$g(O)\subset O_I$. Then $\A_h(g(O))\subset \A_h(O_{\wt I})=\A_h(g(\wt
O))$ 
is split because $\A_h(g(O))\subset \A_h(O_I)$, and hence $\A_h(O)\subset \A_h(\wt
O)$ is split by covariance.
\end{proof}

As a corollary of Proposition \ref{prop:trh} we also have the
$L^2$-nuclearity property, which is stronger than the split property.  
\begin{corollary} ($L^2$-nuclearity)
Let $\A_h$ be the helicity-$h$ free net of von Neumann algebras
and $I\mapsto\A_h(I)\doteq\A_h(O_I)$ be its restriction to the time
axis. Then for $I\Subset \wt I$ the operator $\Delta_{\A(\wt I),\Omega}^{1/4}\Delta_{\A(I),\Omega}^{-1/4}$ is trace class.
\end{corollary}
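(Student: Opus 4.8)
The plan is to push the one‑particle bound $\Tr e^{-\beta L_0}<+\infty$ of Proposition \ref{prop:trh} up to the level of modular operators of the second‑quantized net $I\mapsto\A_h(O_I)$ on the time axis. Recall that this net is a M\"obius covariant net in the sense of Section \ref{s:nets-circ}, hence it automatically satisfies the Bisognano--Wichmann property; the idea is to use this, together with M\"obius covariance, to identify the operator $\Delta_{\A_h(O_{\wt I}),\Omega}^{1/4}\Delta_{\A_h(O_I),\Omega}^{-1/4}$ (up to unitary conjugation) with a second quantization $\Gamma_+\big(e^{-\beta L_0^{(1)}}\big)$, and then to invoke Lemma \ref{lem:trace} and Proposition \ref{prop:trh}.

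First I would reduce to a standard configuration. By Bisognano--Wichmann for the chiral net, $\Delta_{\A_h(O_J),\Omega}^{it}=U(\delta_J(-2\pi t))$, where $U$ now denotes the Fock representation $\Gamma_+\big(\widetilde U_h|_\Mob\big)$ of $\Mob$ and $\delta_J$ is the dilation subgroup fixing $J$. Since $\Omega$ is $U(g)$-invariant, covariance gives $\Delta_{\A_h(O_{gJ}),\Omega}=U(g)\Delta_{\A_h(O_J),\Omega}U(g)^*$ for $g\in\Mob$; choosing $g$ that maps $I\Subset\wt I$ to a concentric pair $I_0\Subset\wt I_0$ on the axis, and using that conjugation by a unitary preserves the trace ideal, it suffices to show that
\[
\Delta_{\A_h(O_{\wt I_0}),\Omega}^{1/4}\,\Delta_{\A_h(O_{I_0}),\Omega}^{-1/4}
\]
is trace class. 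Writing $\Delta_{\A_h(O_J),\Omega}^{1/4}=U(\delta_J(i\pi/2))$ as the analytic continuation of the dilation group to the parameter $i\pi/2$, this product equals $U\big(\delta_{\wt I_0}(i\pi/2)\,\delta_{I_0}(-i\pi/2)\big)$, where $\delta_{\wt I_0}(i\pi/2)\,\delta_{I_0}(-i\pi/2)$ is a complexified M\"obius element.

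Second, the geometric heart: for a genuine compact inclusion $I_0\Subset\wt I_0$ a direct computation in the complexification of $\Mob$ shows that this element is conjugate, by some $h\in\Mob$, to $r(i\beta)$ for a unique $\beta=\beta(I_0,\wt I_0)\in(0,+\infty)$ (with $\beta\to0^+$ when $\overline{I_0}$ touches $\partial\wt I_0$, where indeed the operator degenerates to a non‑trace‑class ``translation''-type operator, consistent with the necessity of $\Subset$). Restricting this group identity to the dense domain of finite‑energy vectors for $L_0$ and taking closures, the displayed operator extends to the bounded operator $U(h)\,e^{-\beta L_0}\,U(h)^*$, where $L_0$ is the conformal Hamiltonian of the Fock representation. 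Hence it is trace class iff $e^{-\beta L_0}=\Gamma_+\big(e^{-\beta L_0^{(1)}}\big)$ is, $L_0^{(1)}$ being the one‑particle conformal Hamiltonian of the net; by Corollary \ref{cor:dech} (applied to $\widetilde U_h$ and $\widetilde U_{-h}$) this $L_0^{(1)}$ has strictly positive lowest weight, so $0\le e^{-\beta L_0^{(1)}}<1$, and Lemma \ref{lem:trace} reduces the claim to $\Tr e^{-\beta L_0^{(1)}}<+\infty$, which is exactly Proposition \ref{prop:trh}. For a general inclusion $O\Subset\wt O$ of double cones one reduces to the time axis by a Poincar\'e transformation, as in the proof of the split property above. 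In a write‑up it is cleaner to invoke directly the general statement for chiral conformal nets from \cite{BDL} --- that $\Tr e^{-\beta L_0}<+\infty$ for all $\beta>0$ implies $L^2$‑nuclearity, with $\Delta_{\A(\wt I),\Omega}^{1/4}\Delta_{\A(I),\Omega}^{-1/4}$ trace class for $I\Subset\wt I$ --- and to combine it with Proposition \ref{prop:trh}.

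The step I expect to be the main obstacle is precisely the identification of the displayed operator with $U(h)e^{-\beta L_0}U(h)^*$: one must verify both that the formal product of the unbounded operators $\Delta_{\A_h(O_{\wt I_0}),\Omega}^{1/4}$ and $\Delta_{\A_h(O_{I_0}),\Omega}^{-1/4}$ actually closes to a \emph{bounded} operator --- this is, at the level of standard subspaces, the fact that $\Delta_{H(\wt I)}^{1/4}\Delta_{H(I)}^{-1/4}$ is a contraction whenever $H(I)\subset H(\wt I)$ are standard --- and that its unitary‑equivalence class is the asserted one; the attendant analytic‑continuation and domain bookkeeping is routine but not entirely trivial. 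Everything else (second quantization via Lemma \ref{lem:trace}, the covariance reductions, and the final convergence of the series) is immediate from the results already in place.
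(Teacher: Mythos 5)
Your proposal is correct and follows essentially the same route as the paper: the paper's entire proof is the remark that the argument is analogous to the one in \cite{BDL}, i.e.\ the general fact that $\Tr e^{-\beta L_0}<\infty$ for all $\beta>0$ yields $L^2$-nuclearity for the chiral net, combined with Proposition \ref{prop:trh}. Your sketch of the underlying BDL mechanism (reduction to a standard configuration, identification of $\Delta_{\A(\wt I),\Omega}^{1/4}\Delta_{\A(I),\Omega}^{-1/4}$ with a conjugate of $\Gamma_+(e^{-\beta L_0^{(1)}})$, then Lemma \ref{lem:trace}) is an accurate account of what that citation supplies.
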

The proof of the corollary is analogous to the one given in \cite{BDL}.

\section{Outlook: Towards a new construction of finite helicity fields}
\label{s:defo}
\setcounter{equation}{0}
Disjoint unitary representations of a given locally compact group $G$
can have unitary equivalent restrictions to subgroups. This fact can be
used to reconstruct inequivalent representations of $G$, by perturbing
generators in the complement of a subgroup $H\subset G$.  In
\cite{GLW} the authors proved that inequivalent highest weight
representations of the $\Mob$ group have unitary equivalent
restrictions to the translation-dilation subgroup, cf.\ Sect.\
\ref{s:mob}. In particular one can recover the full $\Mob$
representation $U^{(n)}$ of lowest weight $n$ by perturbing the conformal
inversion operator of the representation $U^{(1)}$ 
  of lowest weight $1$.
On the other hand, the covariance of associated nets is  not preserved
in this perturbation procedure.  For instance, one can see that
$U^{(n)}$ acts covariantly only on a subnet of the $U(1)$-current (which
anyway coincides with the $U(1)$-current on half-lines) \cite{GLW}.

In this paper we established the split property for free finite
helicity fields. The fundamental step is the factor decomposition of
the restriction of $\wt U_h$, the extension of the representation
$U_h$ of helicity $h$ to the conformal group $\C$, to the subgroup $\Mob\times \SO(3)$. 
The rotation group $\SO(3)$ is a type $I$ group, hence irreducible
representations of $\Mob\times\SO(3)$ have to be tensor products
$U^{(n)}\otimes \D^s$,  where $U^{(n)}$ is the lowest weight-$n$
representation of $\Mob$ and $\D^s$ is the spin-$s$ representation of
$\SO(3)$. By inspection of the decomposition of $\wt U_h|_{\Mob\times \SO(3)}$ in Corollary \ref{cor:dech}, we observe that $\wt U_{h_1}|_{\Mob\times \SO(3)}$ is a sub-representation of $\wt U_{h_2}|_{\Mob\times \SO(3)}$ when $h_1-h_2\in\ZZ$ and $h_1\geq h_2$. 

One can think of a perturbation argument. Consider the projection
$P_h$ on the subspace supporting $\bigoplus_{k=0}^{h-1}(U^{(k+1)}\otimes
\D^k)$ and cut $\wt U_0|_{\Mob\times\SO(3)}$ along the
complementary space $1-P_h$. By Corollary \ref{cor:dech} the
representation $U_0|_{\Mob\times \SO(3)}(1-P_h)$ extends to a
representation of helicity $h$ by redefining the spatial translations,
suitably perturbing the translation generators in the scalar
representation on $(1-P_h)\H_0$.  Namely, the spatial translations
together with the time translations and the conformal inversion,
contained in $\Mob$, generate the conformal group.
This can be further seen by looking at the proof of Proposition
\ref{prop:decomp}, where we disintegrate the spectrum in
rotation-translation invariant fibers, and \eqref{eq:einc} shows that 
$W_{h,p_0}|_{\SO(3)}\leq W_{k,p_0}|_{\SO(3)}$, for $k\leq h$. Thus one can
address the perturbation argument already at the  level of the Euclidean subgroup, cf.\ Appendix \ref{app:eucl}.

Let us comment on inclusions of nets of standard subspace on the time axis.
Firstly, the BGL-net associated with the $\Poi$-representation $U_0$
extends to a conformal net, and  the Bisognano-Wichmann property for
boosts and dilations is a consequence of conformal covariance, cf.\
\cite{BGL93}. Then, we note that the projection $P_{h+1}-P_{h}$
commutes with $U_0|_{\Mob}$ (and with $U_0|_{\Mob\times \SO(3)}$). In
particular, the net on the time axis $I\mapsto H_0(O_I)$ decomposes as
the direct sum of $\Mob$-covariant nets of subspaces
$$I\mapsto H_0(O_I)=\bigoplus_{h=0}^{\infty}\left(P_{h+1}-P_{h}\right) H_0(O_I)$$ 
according to \eqref{eq:dedec}. Indeed, by the Bisognano-Wichmann property the modular groups of the  interval subspaces implement interval dilations,  the interval modular conjugations implement the PCT symmetry, and it is easy to see that the Tomita operators of the interval subspaces commute with $P_{h+1}-P_h$.

Once we identify the representations $$(\wt U_h\oplus \wt U_{-h})|_{\Mob\times \SO(3)} = \left((1-P_h)\oplus(1-P_h)\right)( \wt U_0\oplus \wt U_0)|_{\Mob\times \SO(3)}\left((1-P_h)\oplus(1-P_h)\right),$$ by Proposition \ref{prop:uuu}, we can also identify $H_h(I)$ as a subnet of $H_0(I)\oplus H_0(I)$:
take two copies of the massless scalar one-particle net $(U_0\oplus
U_0, H_0\oplus H_0)$ and consider the net on the time axis   $I\mapsto
H_0(O_I)\oplus H_0(O_I)$; then  consider the $\Mob\times\SO(3)$
invariant projections $(1-P_h)\oplus (1-P_h)$ and the new net on the
time axis
$$I\mapsto H_h(I)\dot=\left((1-P_h)\oplus (1-P_h) \right)\left(H_0(O_I)\oplus H_0(O_I)\right),$$
undergoing the $\Mob$ (and $\Mob\times\SO(3)$)-action through $(\wt
U_h\oplus \wt U_{-h})|_{\Mob\times \SO(3)} $. The projection
$1-P_h$ does not commute with the $U_0$-translations since $U_0$ is
irreducible, and on the subspace $\left((1-P_h)\oplus
  (1-P_h)\right)(\H_0\oplus\H_0)$ one has to define new translations
to obtain the $U_h\oplus U_{-h}$ representation of the Poincar\'e
group (the group generated by $\Mob\times SO(3)$ and space
translations contains the Poincar\'e group). Afterwards, one can
define by covariance double cone subspaces and the helicity-$h$ free
net of standard subspaces, namely $$H_h(O)\dot=\left(U_h(g)\oplus
  U_{-h}(g)\right)H_h(O_I)$$ for a general double cone $O=gO_I$.  It
remains an interesting open problem to explicitly provide or
characterize the necessary perturbation of the $U_0$-translations in order to obtain the $U_h$-translations on the proper subspace. 

This further suggests another way of constructing finite helicity free
nets. One can start with the representation of $\Mob \times \SO(3)$ in
the right-hand side of \eqref{eq:dedec}. It extends to the representation of the Poincar\'e group of helicity $h$ or $-h$. Consider two copies of such a $\Mob \times \SO(3)$-representation, and
the associated one-particle net on the line can be identified with the
time axis theory of the helicity-$h$ free net. Then there is a proper choice of the translation generators and the PCT operator which allows to construct the free net on the full Minkowski space by covariance.

\appendix

\section{Appendix: Restriction of finite helicity representations to the Euclidean subgroup}\label{app:eucl}
\setcounter{equation}{0}
We comment on the restriction of finite helicity representations to the Euclidean group.

\begin{definition}\cite{Mac} Let $G$ be a separable locally compact group.

Closed subgroups $G_1$ and $G_2$ of
$G$ are said to be {\it regularly related} if there exists a sequence $E_0, E_1, E_2, \ldots$ of measurable subsets of $G$ each of which is a union of $G_1:G_2$ double cosets such that $E_0$ has Haar measure zero and each double coset not in $E_0$ is the intersection of the $E_j$ which contain it.

Because of the correspondence between orbits of $G/G_1$ under
$G_2$ and $G_1:G_2$ double cosets, $G_1$ and $G_2$ are regularly related if and only if the orbits outside of a certain set of measure zero form the equivalence
classes of a measurable equivalence relation. 

Consider the map $s: G\rightarrow G_1\backslash G\slash G_2$  carrying each element of $G$ into its double coset. Then equip $G_1\backslash G\slash G_2$ with the quotient topology given by $s$ and consider a finite measure $\mu$ on $G$ which is in the same measure class of the Haar measure. It is possible to define $\ol \mu$ on the  Borel sets of $G_1\backslash G\slash G_2$ by $\ol\mu(E)=\mu(s^{-1}(E))$. We shall call $\ol \mu$ an \textit{admissible measure} in $G_1\backslash G\slash G_2$. The definition is well posed since any two of such measures have the same null measure sets.

\end{definition}

We recall two well-known theorems. 
\begin{theorem}[Mackey's subgroup Theorem]\label{thm:sgr}\cite{Mac2}.
Let $G_1,\, G_2$ regularly related in $G$. Let $\pi\in \mathrm{Rep}(G_1)$. For each $x\in G$ consider $G_x=G_2\cap(x^{-1}G_1x)$ and set
$$V_x=\mathrm{Ind}_{G_x\uparrow G_2}(\pi\circ\ad\, x).$$
Then $V_x$ is determined to within equivalence by the double coset
$\ol x$ to which $x$ belongs. If $\nu$ is an
admissible measure on $G_1\backslash G\slash G_2,$ then
$$\left(\mathrm{Ind}_{G_1\uparrow G}\,\pi\right)|_{G_2}\simeq\int_{G_1\backslash G\slash G_2}V_{\ol x}\,d\nu(\ol x).$$
\end{theorem}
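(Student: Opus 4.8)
The plan is to prove Mackey's subgroup theorem geometrically, by realizing $\Ind_{G_1\uparrow G}\pi$ concretely and then decomposing its restriction to $G_2$ along the orbit structure of $G_2$ acting on the homogeneous space $G_1\backslash G$.

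First I would fix a $\rho$-function for the pair $(G_1,G)$, let $\mu$ be the associated quasi-invariant measure on $X=G_1\backslash G$, and model $\Ind_{G_1\uparrow G}\pi$ on the Hilbert space $L^2(X,\mu;\H_\pi)$ of suitably $G_1$-covariant $\H_\pi$-valued functions, with $G$ acting by right translation twisted by the Radon--Nikodym cocycle and by $\pi$. When the representation is restricted to $G_2$, the only geometry that matters is the orbit structure of $G_2$ acting on $X$ by right translation: those orbits are exactly the images in $X$ of the $G_1$--$G_2$ double cosets, so the orbit space is canonically $G_1\backslash G\slash G_2$.

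Second, I would invoke the \emph{regularly related} hypothesis, which is precisely what makes the orbit space $G_1\backslash G\slash G_2$ (with its quotient Borel structure) a standard Borel space and guarantees that the quasi-invariant measure disintegrates, $\mu=\int_{G_1\backslash G\slash G_2}\mu_{\bar x}\,d\nu(\bar x)$, with each $\mu_{\bar x}$ carried by the single orbit labelled by $\bar x$ and $\nu$ the pushed-forward admissible measure. This yields a direct-integral decomposition $L^2(X,\mu;\H_\pi)=\int L^2(\text{orbit }\bar x,\mu_{\bar x};\H_\pi)\,d\nu(\bar x)$ into $G_2$-invariant subspaces. Next I would identify the summand over one orbit: the orbit through $G_1 x$ is $G_2$-transitive with stabilizer $G_x=G_2\cap (x^{-1}G_1 x)$, hence $G_2$-equivariantly isomorphic to $G_x\backslash G_2$; transporting the covariance condition of the induced-representation model along $G_1 x g\mapsto G_x g$ and keeping track of the $\rho$-function shows that the $G_2$-representation on this summand is exactly $\Ind_{G_x\uparrow G_2}(\pi\circ\ad x)=V_x$. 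A short conjugation argument — replace $x$ by $g_1 x g_2$ and conjugate by the unitary implementing the change of base point — gives $V_{g_1 x g_2}\simeq V_x$, so $V_{\bar x}$ is well defined up to equivalence, and assembling the direct integral over $\bar x$ with respect to $\nu$ produces the stated decomposition.

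The main obstacle lies entirely in the second step: producing the disintegration of the quasi-invariant measure over a possibly pathological orbit space and verifying that the resulting field of Hilbert spaces and of $G_2$-operators is measurable, so that $\int V_{\bar x}\,d\nu(\bar x)$ is literally the direct integral appearing in the statement. This is exactly what the notions ``regularly related'' and ``admissible measure'' are designed to ensure; once the orbit space is reduced to a standard Borel space, what remains is the routine bookkeeping of quasi-invariant measures and modular functions needed to keep every identification unitary.
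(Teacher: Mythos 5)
The paper does not prove this statement---it is quoted from Mackey's book \cite{Mac2} as a known result---so there is no internal proof to compare against. Your outline reproduces the canonical argument (realize $\Ind_{G_1\uparrow G}\pi$ on $L^2(G_1\backslash G,\mu;\H_\pi)$, observe that the $G_2$-orbits on $G_1\backslash G$ are exactly the double cosets, identify the representation carried by a single orbit with $\Ind_{G_x\uparrow G_2}(\pi\circ\ad x)$ via the stabilizer $G_x=G_2\cap x^{-1}G_1x$, and use the regularly-related hypothesis to disintegrate $\mu$ over the orbit space) and is correct as a proof sketch, with the acknowledged measure-theoretic disintegration in your second step being precisely where the substantive content of Mackey's proof lies.
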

If $G$ is a compact group, let $\pi$ and $\rho$ be two unitary representations of $G$, we shall denote with $\C(\pi,\rho)$ the space of intertwining operators of the representations $\pi$ and $\rho$ and with $\mathrm{mult}(\pi,\rho)$ the multiplicity (of the unitary class) of $\pi$ in $\rho$.
\begin{theorem}[Frobenius Reciprocity theorem]\cite{foll}\label{rec}.
Let $G$ a compact group, $H$ a closed subgroup, $\pi$ a unitary representation of the group $G$, and $\sigma$ an irreducible unitary representation of $H$. Then, 
$$\C(\pi,\ind_{H\uparrow G}(\sigma))\simeq\C(\pi |_H,\sigma)\quad\text{and}\quad \mathrm{mult}\,(\pi,\ind_{H\uparrow G}(\sigma))=\mathrm{mult}\,(\pi |_H,\sigma).$$
\end{theorem}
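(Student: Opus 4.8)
Although this is a classical fact (see \cite{foll}), for completeness we indicate a proof. The plan is to exhibit the isomorphism $\C(\pi,\ind_{H\uparrow G}(\sigma))\simeq\C(\pi|_H,\sigma)$ explicitly through an ``evaluation at the identity'' map, and then read off the multiplicity identity by taking dimensions. Recall that, with a fixed normalization of the Haar measures, $\lambda:=\ind_{H\uparrow G}(\sigma)$ acts on the Hilbert space $\H_\lambda$ of (classes of) square-integrable functions $f:G\to\H_\sigma$ satisfying $f(gh)=\sigma(h)^{-1}f(g)$, by left translation $(\lambda(g_0)f)(g)=f(g_0^{-1}g)$; compactness of $G$, hence of $G/H$, ensures that bounded functions are square-integrable.

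First I would define, for $T\in\C(\pi,\lambda)$, the operator $S:=\mathrm{ev}_e\circ T$, $Sv=(Tv)(e)$, and check---using that $T$ intertwines and the $H$-equivariance of $Tv$---that $S(\pi(h)v)=\sigma(h)S(v)$ for $h\in H$, so that $S\in\C(\pi|_H,\sigma)$. Conversely, for $S\in\C(\pi|_H,\sigma)$ I would set $(Tv)(g):=S(\pi(g)^{-1}v)$ and verify the three required properties: the $H$-equivariance of $Tv$ (from the intertwining relation for $S$), square-integrability of $Tv$ (from boundedness of $S$ and compactness of $G/H$), and the intertwining of $T$ with left translation. A one-line computation shows that these two assignments are mutually inverse, so $\mathrm{ev}_e$ is a linear isomorphism $\C(\pi,\lambda)\to\C(\pi|_H,\sigma)$, which is the first assertion. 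For the multiplicities, note that $\sigma$, being irreducible, is finite-dimensional, and since $G$ is compact every unitary representation is completely reducible; hence by Schur's lemma $\dim\C$ equals the relevant multiplicity on each side, and the equality of dimensions just obtained yields $\mathrm{mult}\,(\pi,\ind_{H\uparrow G}(\sigma))=\mathrm{mult}\,(\pi|_H,\sigma)$, reducing if necessary to the irreducible case by decomposing $\pi$. (Alternatively, the multiplicity statement can be obtained directly from characters, using the induced-character formula together with Fubini over $G/H\times H$ to get $\langle\chi_\pi,\chi_\lambda\rangle_G=\langle\chi_\pi|_H,\chi_\sigma\rangle_H$.)

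The one genuine subtlety, and the point I would treat with care, is that elements of $\H_\lambda$ are $L^2$-classes, so ``evaluation at $e$'' is not literally defined on all of $\H_\lambda$; the standard remedy is to restrict to the dense subspace of continuous vectors of $\H_\pi$ and observe that an intertwiner carries these to continuous functions in $\H_\lambda$, or equivalently to work in the continuously-induced model and extend by density. Apart from this, the argument is a routine bookkeeping of the conventions for $\ind$, and I do not expect any real obstacle.
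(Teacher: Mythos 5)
The paper does not prove this statement: it is recalled as a classical theorem and delegated entirely to the cited reference \cite{foll}, so there is no in-paper argument to compare against. Your proof is the standard one (essentially the argument in Folland) via the evaluation-at-the-identity bijection $T\mapsto \mathrm{ev}_e\circ T$ with inverse $S\mapsto\big(v\mapsto S(\pi(\cdot)^{-1}v)\big)$, the conventions and equivariance checks are consistent, and you correctly isolate and resolve the one genuine technical point, namely that evaluation at $e$ must be defined on continuous representatives (e.g.\ on $G$-finite vectors, or after decomposing $\pi$ into finite-dimensional irreducibles, which suffices for the multiplicity statement actually used in the paper).
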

In this section we shall indicate with $\chi$ a one-dimensional representation (a character) of an abelian group. Let $E(n)=\SO(n)\ltimes\RR^n$ be the inhomogeneous symmetry group of $n$-dimensional Euclidean space. The universal covering is the semi-direct product $\widetilde E(n)=\widetilde{\SO}(n)\ltimes\RR^{n}$. Representations are obtained by induction. Consider a character $\chi_q$ in the dual of the translation group and its orbit $\sigma_q$ through the dual action of $\widetilde E(n)$. We shall call $\Stab_q$ and $\overline\Stab_q=\Stab_q\ltimes\RR^{n}$ the stabilizers of $\chi_q$ through the $\widetilde{\SO}(n)$ and $\widetilde E(n)$ actions, respectively. Note that the dual action of the translations is trivial on $\chi_q$. When there is no ambiguity we will write $q$ instead of $\chi_q$.

There are two main families of irreducible representations (cf.\ e.g.\ \cite{var,foll}):
\begin{itemize}\itemsep0mm
\item $U=\ind_{\overline\Stab_0\uparrow \widetilde E(n)}\chi_0 V=V$ is induced from a product of the trivial character $\chi_0$ of $\RR^n$ and an irreducible representation $V$ of $\Stab_0=\widetilde \SO(n).$ Thus $U$ is the irreducible representation $V$ of $\widetilde \SO(n)$ lifted to $\widetilde E(n)$, trivial on translations;

\item  $U=\ind_{\overline\Stab_q\uparrow \widetilde E(n)}\chi_q V'$ is induced from a product of a nontrivial character $\chi_q$ of $\RR^n$ and an irreducible representation $V'$ of $\Stab_q.$ In such a case orbits are spheres of radius $r=|q|$ and up to unitary equivalence it is possible to choose $q=(\textbf{0},r)$ where $\textbf{0}$ is the null vector in $\RR^{n-1}$. 
\end{itemize}
In the  three-dimensional Euclidean case, if $ q=(0,0,r)$ with $r>0$ then $\Stab_q=U(1)$, double covering of $\SO(2)$. Induced representations are of the form
$$W_{h,r}=\Ind_{\overline\Stab_q\uparrow \widetilde E(3)}\chi_q\chi_h,\qquad h\in\frac12\,\ZZ,$$
where $\chi_h$ is the $2h$-character $U(1)$-representation and $q$ defines a character of $\RR^3$ of length $r$.
The induced representation acts on the Hilbert space $L^2(S_r,dp\, \delta(\textbf{p}^2-r^2))$ where $S_r$ is the sphere with center in the origin and radius $r$.

\begin{proposition}\label{prop:decomp}
Let $U_h$ be a massless helicity-$h$ representation. Consider the restriction of $U_h$ to $T\times E(3)$, where $T$ is the time-translation group, then
\begin{equation}\label{eq:qqq}U_h|_{T\times E(3)}=\int_{\RR^+}dp_0\left( \chi_{p_0} \otimes\,W_{h,p_0}\right).\end{equation} Furthermore, \begin{equation}\label{eq:einc}W_{h,p_0}|_{\SO(3)}=\bigoplus_{l=|h|}^\infty\D^l\end{equation} and \begin{equation}
\label{eq:dec}
 U_h|_{T\times \SO(3)}=\bigoplus_{l=|h|}^\infty\int_{\RR^+}dp_0\left( \chi_{p_0} \otimes\,\D^l\right)\end{equation}
\end{proposition}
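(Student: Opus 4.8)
The plan is to apply Mackey's subgroup theorem (Theorem~\ref{thm:sgr}) twice. First I would use it to restrict $U_h$ from $\tPoi$ to $T\times\widetilde E(3)$, obtaining \eqref{eq:qqq}; then I would use it (or simply the transitivity of the rotation action on a sphere) to restrict $W_{h,p_0}$ from $\widetilde E(3)$ to $\widetilde{\SO}(3)$, obtaining \eqref{eq:einc}; finally \eqref{eq:dec} follows by restricting \eqref{eq:qqq} one step further and interchanging the direct sum with the direct integral.

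For \eqref{eq:qqq}, write $U_h=\Ind_{\overline\Stab_q\uparrow\tPoi}(V_h\chi_q)$ and apply Theorem~\ref{thm:sgr} with $G=\tPoi$, $G_1=\overline\Stab_q=\Stab_q\ltimes\RR^4$ and $G_2=T\times\widetilde E(3)=\widetilde{\SO}(3)\ltimes\RR^4$ (the spatial rotations together with all translations). Since $G/G_1$ is identified with the orbit $\partial V_+\setminus\{0\}$ of $\chi_q$ in the dual of $\RR^4$, on which $G_2$ acts only through $\widetilde{\SO}(3)$ (translations fix every character), the $G_2$-orbits are exactly the spheres $\{|\vec p|=p_0\}$, $p_0\in\RR^+$. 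This yields the regular relatedness of $G_1,G_2$, identifies $\Stab_q\backslash\tPoi\slash G_2$ with $\RR^+$ as a Borel space, and shows the admissible measure is equivalent to $dp_0$. For the double coset indexed by $p_0$, pick $x$ a boost along the third axis with $x^{-1}G_1x=\Stab_{q_0}\ltimes\RR^4$, $q_0=(p_0,0,0,p_0)$. Since $\Stab_{q_0}$ meets $\widetilde{\SO}(3)$ precisely in the $U(1)$ of rotations about $\vec q_0$, one gets $G_x=G_2\cap x^{-1}G_1x=U(1)\ltimes\RR^4$; the character $\chi_{q_0}=\chi_q\circ\ad x$ of $\RR^4$ restricts to the (central) time-translation subgroup $T$ as $\chi_{p_0}$, and $V_h\circ\ad x$ restricts to $U(1)$ as $\chi_h$, because the rotations about $\vec q$ are precisely the circle subgroup of the little group $\Stab_q\cong\widetilde E(2)$ on which $V_h$ is the $2h$-character, and the boost $x$ commutes with them. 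Therefore $V_{\bar x}=\Ind_{G_x\uparrow G_2}(\chi_{q_0}\chi_h)=\chi_{p_0}\otimes W_{h,p_0}$, and Theorem~\ref{thm:sgr} gives \eqref{eq:qqq}.

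For \eqref{eq:einc}, apply the same circle of ideas to $G=\widetilde E(3)$, $G_1=\overline\Stab_{q_0}=U(1)\ltimes\RR^3$, $G_2=\widetilde{\SO}(3)$: here $G/G_1$ is the sphere $S_{p_0}$ on which $\widetilde{\SO}(3)$ acts transitively with point stabilizer $U(1)$, so there is a single double coset and $W_{h,p_0}|_{\widetilde{\SO}(3)}=\Ind_{U(1)\uparrow\widetilde{\SO}(3)}\chi_h$. By Frobenius reciprocity (Theorem~\ref{rec}), $\mathrm{mult}(\D^l,\Ind_{U(1)\uparrow\widetilde{\SO}(3)}\chi_h)=\mathrm{mult}(\D^l|_{U(1)},\chi_h)$, and since $\D^l|_{U(1)}$ is the multiplicity-free sum of the characters of weights $-l,\dots,l$, this multiplicity is $1$ when $l\geq|h|$ and $l-|h|\in\ZZ$, and $0$ otherwise; this is \eqref{eq:einc}. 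Finally, restricting \eqref{eq:qqq} from $T\times\widetilde E(3)$ to $T\times\widetilde{\SO}(3)$ and inserting \eqref{eq:einc} gives $\bigoplus_{l\geq|h|}\int_{\RR^+}dp_0\,(\chi_{p_0}\otimes\D^l)$, the interchange of $\bigoplus_l$ and $\int_{\RR^+}dp_0$ being legitimate because \eqref{eq:einc} does not depend on $p_0$; this is \eqref{eq:dec}.

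The step I expect to be the main obstacle is the identification, in the proof of \eqref{eq:qqq}, of the helicity character: one must verify that the abstract circle subgroup of the Wigner little group $\Stab_q\cong\widetilde E(2)$ coincides, as a subgroup \emph{and} with matching normalisation of characters, with the $U(1)$ of spatial rotations fixing $\vec q$, so that $V_h$ restricts to $\chi_h$ exactly (and not to a rescaled or conjugated character), and that $\ad x$ for the chosen boost $x$ does not disturb this identification. The verification of regular relatedness and the identification of the admissible measure class on the two double coset spaces, although routine for these groups, also have to be recorded.
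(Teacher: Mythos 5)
Your argument is correct, and for the first identity \eqref{eq:qqq} it takes a genuinely different route from the paper. The paper does not invoke Mackey's subgroup theorem at this stage: it writes $U_h$ in the explicit Wigner form $(U_h(a,A)\phi)(p)=e^{iap}V_h(B_p^{-1}AB_{A^{-1}p})\phi(A^{-1}p)$ with a concrete choice of Wigner boosts $B_p^{-1}=\Lambda_3(-\ln p_0)R_p$ adapted to the spheres $S_{p_0}$, reads off the fibration of $L^2(\partial V_+)$ over $\RR^+$ by hand, pins down the measure class by dilation covariance of $U_h$, deduces irreducibility of almost every fiber from multiplicity-freeness of the translation algebra, and only then identifies the fiber as the induced representation $W_{h,p_0}$. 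Your single application of Theorem \ref{thm:sgr} with $G_1=\overline\Stab_q$, $G_2=\widetilde{\SO}(3)\ltimes\RR^4$ packages all of these steps (orbit decomposition, measure class, fiber identification and its irreducibility) into the general theorem, at the price of having to verify regular relatedness and the admissible measure class, and of the little-group bookkeeping you correctly flag: that $G_x=U(1)\ltimes\RR^4$ with $U(1)$ the rotations about $\vec q_0$, that the boost along the third axis commutes with this $U(1)$ so $\ad x$ does not twist $\chi_h$, and that $\chi_{q_0}$ restricts to $T$ as $\chi_{p_0}$. Both routes are sound; yours is more systematic, the paper's more self-contained and explicit (and it also makes visible the dilation-covariance argument reused in the Outlook). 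For \eqref{eq:einc} and \eqref{eq:dec} your argument coincides with the paper's: Mackey's subgroup theorem with a single double coset followed by Frobenius reciprocity, and the interchange of sum and integral justified by the $p_0$-independence of \eqref{eq:einc}.
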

\begin{proof}We prove the proposition in the bosonic case, namely $h\in\ZZ$. The proof is analogous in the Fermionic case.

Let $q=(1,0,0,1)$, with the definitions in Sect. 2.2, the helicity-$h$ representation is 
$$U_h=\ind_{\overline{\Stab_q}\uparrow\Poi} \chi_qV_h$$ 
where $\overline\Stab_q=E(2)\ltimes\RR^4\subset \Poi$. 
When we restrict $U_h$ to $T\times E(3)$,  we get
\begin{equation}\label{eq:trtr}U_h|_{T\times E(3)}=\int_{\RR^+}d \mu(p_0)\,\chi_{p_0}\, U_{p_0}
\end{equation} 
where $U_{p_0}$ are representations of $E(3)$ of radius $p_0$, and $\mu$ is a Borel measure on $\RR^+$. 
This follows since $$(U_{h}(a,A)\phi)(p)=e^{ia\cdot p}V_h(B_p^{-1}AB_{A^{-1}p})\phi(A^{-1}p),\;\; (a,A)\in\Poi, \phi\in L^2(\partial V_+,\theta(p_0)\delta(p^2)d^4p),$$
and we can choose $B_p^{-1}=\Lambda_{3}(-\ln p_0)R_{p}$, where
$p\mapsto R_{p}$ is a Borel map from the $\RR^3$-sphere $S_{p_0}$ of
radius $p_0$ (we are considering the set
$(p_0,S_{p_0})\subset\RR^{1+3}$) to $\SO(3)$ such that,  for any  $p$,
$R_p\,p= q_{p_0}\dot=(p_0,0,0,p_0)$ and  $\Lambda_3$ is the
$x_0$-$x_3$ boost s.t.\ $\Lambda_3(-\ln p_0)q_{p_0}=q=(1,0,0,1)$ (cf. \cite{var}). Thus, with $U_{p_0}=e^{ia\cdot p}V_h(R_p^{-1}AR_{A^{-1}p})\psi(A^{-1}p)$ 
where $(a,A)\in E(3)$ and $\psi\in L^2((p_0,S_{p_0}),\delta(p^2)\delta(p_0)d^4p)$, the direct integral 
representation of $T\times E(3)$ in the right hand side of \eqref{eq:trtr}  extends to the representation 
of the Poincar\'e group $U_h$.

Now, with $\delta_t:p\mapsto e^{\lambda}p$ the dilation group,  by dilation covariance of $U_h$
$$U|_{T\times E(3)}= \int_{\RR^+}d\mu(p_0)\,
U_{p_0}\chi_{p_0}\simeq\int_{\RR^+}d\mu(p_0)\,
U_{\delta_{-t}p_0}\chi_{\delta_{-t}p_0}=\int_{\RR^+}d\mu_t(p_0)\,
U_{p_0}\chi_{p_0},$$ 
thus $U_{p_0}\chi_{p_0}{\simeq} U_{\delta_{-t} p_0'}\chi_{\delta_{-t}{p_0'}}$ for $\delta_{-t}
{p_0}'=p_0$ with  $\lambda\in\RR$ and $\mu$ is equivalent to
$\mu_t(p_0)=\mu(e^{t}p_0)$, hence $\mu$ is equivalent to the Lebesgue
measure.  
In particular, $U_{p_0}$ is irreducible for almost every $p_0\in\RR^+$ because the $U_h$-translation algebra is multiplicity free. 

Since the stabilizer of $q_{p_0}$ under the $(T\times E(3))$-action is contained in $E(2)\ltimes\RR^4\subset\Poi$ (the stabilizer of $q_{p_0}$ under the Poincar\'e action) and $V_h$ is trivial on $E(2)$-translations, then one can see that for almost every $p_0\in\RR^+$, $U_{p_0}=\Ind_{\SO(2)\ltimes \RR^3\uparrow E(3)}\chi_h\chi_{q_{p_0}}$ and we get
$$U_h|_{E(3)}=\int_{\RR^+}d\mu(p_0)\,  W_{h,p_0}\chi_{p_0}.$$
Now, we apply Theorem \ref{thm:sgr} to $W_{h,p_0}$ with $G=E(3)$,
$G_1=\SO(2)\ltimes\RR^3$, $G_2=\SO(3)$ (note that $G_1\backslash
G\slash G_2={\bf1}$). By Theorem \ref{rec}, we get the second statement, i.e.,
$$W_{h,p_0}|_{\SO(3)}=\bigoplus_{l=|h|}^\infty\D^l.$$ \eqref{eq:einc} does not depend on the radius $p_0$, thus we conclude \eqref{eq:dec}.
\end{proof}

\bigskip 

\textbf{Note added.} The suggestions discussed in the
  Outlook are confirmed in \cite{MR}.

\bigskip 

\textbf{Acknowledgement.} V.M.\ thanks Gerardo Morsella and  Massimo
Bianchi for valuable discussions. R.L.\ and V.M.\ acknowledge the MIUR
Excellence Department Project awarded to the Department of
Mathematics, University of Rome Tor Vergata, CUP E83C18000100006.

\end{document}